\documentclass[12pt,reqno]{amsart}
%fleqn giver ligninger til venstre
%\documentclass[12pt,notitlepage,reqno]{amsart}

\NeedsTeXFormat{LaTeX2e}[1994/12/01]

\addtolength{\textheight}{3cm}
\addtolength{\textwidth}{4cm}
\addtolength{\oddsidemargin}{-2cm}
\addtolength{\evensidemargin}{-2cm}
\addtolength{\topmargin}{-1,5cm}

\pagestyle{plain}

\usepackage{amsmath}
\usepackage{amsfonts}
\usepackage{amsthm}
\usepackage{amstext}
\usepackage{enumerate}
\usepackage[latin1]{inputenc}

\usepackage{graphicx}
\usepackage{verbatim}

%\usepackage{showkeys}
%  Frequently used functions, functors, etc.

     \newcommand{\Ran}{{\operatorname{Ran}}}

     \newcommand{\N}{{\mathbb{N}}}

     \newcommand{\R}{{\mathbb{R}}}
     \newcommand{\Z}{{\mathbb{Z}}}
     \newcommand{\C}{{\mathbb{C}}}

\newcommand{\e}{{\rm e}}

\renewcommand{\i}{{\rm i}}
\renewcommand{\d}{{\rm d}}

\newcommand{\rad}{{\rm rad}}

\renewcommand{\Im}{{\rm Im}\,}

\newcommand\inp[2][]{#1 \langle #2#1\rangle}
%(fx:"\inp{x}" yields same as "\langle x\rangle")
%(used with \big, \Big etc; fx: "\inp [\big]{x,y}" yields same as
%"\big\langle x,y \big\rangle")
\newcommand\parb[2][]{#1 \big ( #2#1\big )}
\newcommand\parbb[2][]{#1 \Big ( #2#1\Big )}

\renewcommand{\exp}{{\rm exp}}
\newcommand{\mand}{\text{ and }}
\newcommand{\mfor}{\text{ for }}
\newcommand{\mforall}{\text{ for all }}

\newcommand{\vB}{{\mathcal B}}

\newcommand{\vG}{{\mathcal G}}
\newcommand{\vI}{{\mathcal I}}

\newcommand{\vH}{{\mathcal H}}

\newcommand{\vS}{{\mathcal S}}

%  Hvilke saetninger, lemmaer etc. og hvordan
%  Paa denne maade faar de alle en faelles nummerering:

     \theoremstyle{plain}%default
     \newtheorem{thm}{Theorem}[section]
     \newtheorem{prop}[thm]{Proposition}
     \newtheorem{lemma}[thm]{Lemma}
      \newtheorem{cor}[thm]{Corollary}
     \theoremstyle{definition}
     
     \newtheorem{defn}[thm]{Definition}

     \newtheorem{remark}[thm]{Remark}
     \newtheorem{remarks}[thm]{Remarks}

\newtheorem*{remarks*}{Remarks}
\newtheorem*{remark*}{Remark}

%   Et par makroer til at referere til disse saetninger etc.
%      \newcommand{\refthm}[1]{Theorem~\ref{#1}}
%     \newcommand{\refprop}[1]{Proposition~\ref{#1}}
%      \newcommand{\reflemma}[1]{Lemma~\ref{#1}}
%      \newcommand{\refcor}[1]{Corollary~\ref{#1}}
%      \newcommand{\refdefn}[1]{Definition~\ref{#1}}
%      \newcommand{\refremark}[1]{Remark~\ref{#1}}
%      \newcommand{\refexample}[1]{Example~\ref{#1}}

%   Ligningsnumre som (section-nr.eq-nr)
%      \numberwithin{equation}{section}

     \numberwithin{equation}{section}

\begin{document}

\title[Zero]{Analyticity estimates for the Navier-Stokes equations}

\author{
I. Herbst    \\
{\tiny Department of Mathematics} \\
{\tiny University of Virginia}  \\
{\tiny  Charlottesville, VA 22904 U.S.A.} \\
{\tiny iwh@virginia.edu} \\
\\[8mm]
E. Skibsted   \\
{\tiny Institut for  Matematiske Fag} \\
{\tiny Aarhus Universitet} \\
{\tiny 8000 Aarhus C, Denmark} \\
{\tiny skibsted@imf.au.dk}
}

%\thanks{The authors are  partially supported by  ..}
\date{\today}

\subjclass[2000]{Primary 76D05}

\begin{abstract}
We study spatial analyticity properties of solutions of the Navier-Stokes equation and obtain new growth rate estimates for the analyticity radius.  We also study stability properties of strong global solutions of the Navier-Stokes equation with data in $H^r, \ r \geq 1/2$ and prove a stability result for the analyticity radius.
\end{abstract}

\maketitle

\setcounter{tocdepth}{1}

{\small \tableofcontents}

\section{Introduction}

We look at the following system of equations in the variables  $(t,x)\in
[0,\infty[\times \R^3$. The unknown $u$ specifies at each argument a
velocity $u=u(t,x)\in \R^3$. The unknown $p$ specifies at each
argument a pressure  $p=p(t,x)\in \R$.
\begin{equation}\label{eq:equamotiona}
\begin{cases}
\tfrac{\partial }{\partial t}u+(u\cdot \nabla)u +\nabla p=\triangle u\\
\nabla \cdot u=0\\
u(t=0)=u_0
\end{cases}\;.
\end{equation}

 We eliminate the pressure in the standard way using the
  Leray
projection
 $P$. It is an orthogonal projection on  $L^2(\R^3,\R^3)$ which
is fibered in Fourier space, i.e. $\widehat {(Pf)}(\xi)=P(\xi) \hat f
(\xi)$, and it is given according to the following recipe:
$P(\xi) =I-|\hat \xi\rangle \langle \hat\xi |
,\;\hat\xi
:=\xi/|\xi|$.

\begin{equation}\label{eq:equamotionBA}
\begin{cases}
\big (\tfrac{\partial }{\partial t}u+P(u\cdot \nabla)u -\triangle
u\big )(t,\cdot)=0\\
u(t):=u(t,\cdot)\in\Ran P
\end{cases}\;.
\end{equation}

We introduce a notion of strong global solution to
(\ref{eq:equamotionBA}) in terms of $A:=\sqrt{-\triangle}$.  In the following the Sobolev space $H^r$ is the Hilbert space with norm $\|f\|_{H^r} = \|\langle A \rangle^r f\|_{L^2}$ where $\langle x \rangle = \sqrt{1 + |x|^2}$.

\begin{defn}\label{defn:strong_a} Let  $r\geq 1/2$. The set
 $\vG_{r}$  is the  set of $u\in C([0,\infty[,(H^r)^3)$ satisfying:
  \begin{enumerate}[(1)]

  \item \label{item:20a} $u(t)\in P(H^r)^3$ $\mforall t\geq0$,

  \item \label{item:21a} The expression
$A^{5/4}u(t)$ defines an element in $C(]0,\infty[,(L^2)^3)$ and
\begin{equation*}
  \lim_{t\to 0}t^{3/8}\|A^{5/4}u(t)\|_{(L^2)^3} = 0,
\end{equation*}

\item \label{item:22a} $u\in C^1(]0,\infty[,
  \vS'(\R^3) )$ and
\begin{equation*}
   \tfrac{\d}{\d t}u=-A^2u-P(u\cdot \nabla)u;\,t>0.
  \end{equation*}
  \end{enumerate}
Here the differentiability in $t$ is meant in the weak* topology and the equation in (3) is meant in the sense of distributions.
 We
  refer to any $u\in \vG_{r}$ as  a {\it  strong global solution} to the
  problem
  (\ref{eq:equamotionBA}).
\end{defn}

\subsection{Discussion of uniform
 real  analyticity} \label{Discussion of uniform
 real  analyticity}

 Fix $r\in \R $ and $f\in H^r$. We say that $f$ is {\it uniformly $
 H^r$ real analytic} if there exists  $a>0$ such that the function $\R^3\ni
 \eta\to \e^{\i \eta\cdot p}f=f(\cdot+\eta)\in  H^r$, $p:=-\i \nabla$,  extends to an
 analytic function $\tilde f$
 on $\{|\Im \eta |<a\}$ and that
 \begin{equation}
   \label{eq:37}
   \sup_{|\Im
\eta|<a}\|\tilde f(\eta)\|_{H^r}<\infty.
 \end{equation}  We define correspondingly the {\it analyticity radius } of $f$ as
 \begin{equation}
   \label{eq:42}
   \rad (f)=\sup \{a>0|\text{ the property } (\ref{eq:37})\text{ holds}\}.
 \end{equation} If $f$ is not uniformly $
 H^r$ real analytic we put $\rad (f)=0$.

We note that the notions of  uniform real analyticity and corresponding analyticity
radius  are  independent of $r$, and that in fact
\begin{equation}
  \label{eq:43}
  \rad(f)=\sup \{a\geq 0|\,
  \e^{ aA}f\in H^r\}=\sup \{a\geq 0|\, \e^{ aA}f\in L^2\}.
\end{equation}
Moreover if $\rad(f)>0$ then  by the Sobolev embedding theorem, $H^{s}\subseteq L^\infty$ for  $s>3/2$, the function
 $\R^3\ni
 x\to f(x)\in  \C$ extends to the analytic function $\breve f$
 on $\{|\Im \eta |<\rad(f)\}$ given by $\breve f(\eta)=\tilde
 f(\eta)(0)$. Conversely suppose a given function
 $\R^3\ni
 x\to f(x)\in  \C$ extends to an analytic function $\breve f$ on $\{|\Im \eta |<b\}$
 and that $\tilde
 f(\eta) :=\breve f(\cdot+\eta)$ obeys (\ref{eq:37}) for all $a<b$
 then $\tilde
 f$ is an analytic $H^r$-valued function and $\rad(f)\geq b$.

If $f\in \dot H^r$ for some $r\in ]-\infty,3/2[$ (see Section
\ref{Integral equation} for the definition of homogeneous Sobolev
spaces) one can introduce  similar
notions of  uniform real analyticity and corresponding analyticity
radius (by using (\ref{eq:37}) with $H^r\to \dot
H^r$ and (\ref{eq:42}), respectively). If $f\in \dot H^r$ has a positive analyticity radius then $f = f_1 +f_2$ where
$f_1$ has an entire analytic continuation and $f_2$ is uniformly $H^s$ real analytic for any $s$.  Note however that the concept of uniform $\dot H^r$ analyticity is dependent on $r$.

In any case
$a>0$ will be a lower bound of the analyticity radius of  $f\in H^r$
or $f\in \dot H^r$ if $\e^{ aA}f\in H^r$ or  $\e^{ aA}f\in \dot
H^r$, respectively.

For $f=(f_1,f_2,f_3)\in (H^r)^3$ we define $\rad(f)=\min_j(\rad(f_j))$.

\subsection{Results on real analyticity of solutions} \label{Results on analyticity}
It  is a basic fact that for all  $u\in \vG_{r}$,  $r\geq1/2$,
\begin{equation}\label{eq:79}
  \rad(u(t))>0 \mforall t>0.
\end{equation}
In fact for any such $u$ there exists $\lambda >0$ such that
\begin{equation}\label{eq:79kkk}
  \rad (u(t))\geq \lambda \sqrt t \mforall t>0;
\end{equation}
 we note that \eqref{eq:79kkk} is a consequence of Corollary
\ref{cor:sobol-analyt-bounds}, Lemma \ref{lemma:energyre}  and
Proposition \ref{prop:kato big data anal}.

The main subject of this paper is the study of lower
bounds of the quantity to the left in (\ref{eq:79}) (and the analogous
question for solutions taking values in homogeneous Sobolev spaces). The study of analyticity of solutions of the Navier-Stokes equations originated with Foias and Temam in [FT] where they studied analyticity of periodic solutions in space and time (see also [FMRT]).  Later Gruji$\check {\rm c}$ and Kukavica [GK] studied space analyticity of the Navier-Stokes equations in $R^3$.   Since then many authors have proven analyticity results.  We mention the book by Lemari\'e-Rieusset [Le] where some results and references can be found.

There are two regimes  to study, the small and the large time regimes.
One of our  main results  on large time analyticity bounds is the following (a
combination of Theorem \ref{thm:opt} \ref{item:8} and Lemma \ref{lemma:energyre}):

\begin{thm}\label{thm:opta}
  Suppose $r\geq1/2$ and $u\in \vG_{r}$.
Suppose that for some $\sigma \geq0$  the
  following bound holds
\begin{equation}
    \label{eq:69ba}
    \|u(t)\|_{(L^2)^3}=O(t^{-\sigma/2})\mfor t\to \infty.
  \end{equation}

Let $0\leq \tilde
   \epsilon<\epsilon\leq 1$ be given. Then there exist constants
   $t_0>1$ and $C>0$  such that
   \begin{equation}
     \label{eq:68ba}
     \|\exp\parbb {\sqrt{(1-\epsilon)(2\sigma+1)}\sqrt{t\ln t}A} u(t)\|_{(L^2)^3}\leq C t^{1/4-\tilde \epsilon (2\sigma+1)/4}\mforall
   t\geq t_0.
   \end{equation}

In particular
 \begin{equation}
    \label{eq:46wwa}
 \liminf_{t\to \infty}\tfrac{\rad(u(t))}{\sqrt{t\ln t}}\geq \sqrt{2\sigma+1}.
  \end{equation}

\end{thm}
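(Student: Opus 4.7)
The plan is to prove (\ref{eq:68ba}) from the mild (Duhamel) representation of (\ref{eq:equamotionBA}),
\begin{equation*}
  u(t)=\e^{-(t-s)A^2}u(s)-\int_s^t \e^{-(t-\tau)A^2}P\nabla\cdot(u\otimes u)(\tau)\,\d\tau,\qquad 0<s<t,
\end{equation*}
applied with a carefully chosen $s=s(t)$ close to $t$. Setting $a=a(t):=\sqrt{(1-\epsilon)(2\sigma+1)\,t\ln t}$ and applying $\e^{aA}$, the analysis rests on the Fourier-multiplier estimates
\begin{equation*}
  \|\e^{aA}\e^{-rA^2}\|_{L^2\to L^2}\leq \e^{a^2/(4r)},\qquad
  \|\e^{aA}A\,\e^{-rA^2}\|_{L^2\to L^2}\leq C\Bigl(\tfrac{a}{r}+\tfrac{1}{\sqrt r}\Bigr)\e^{a^2/(4r)},
\end{equation*}
obtained by maximising $\e^{a\xi-r\xi^2}$ and $\xi\e^{a\xi-r\xi^2}$ on $[0,\infty[$.

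The core algebraic observation is that $s=(1-\gamma)t$ with $\gamma:=(1-\epsilon)/(1-\tilde\epsilon)\in(0,1)$ converts the linear term into exactly the right-hand side of (\ref{eq:68ba}). Setting $\delta:=(1-\tilde\epsilon)(2\sigma+1)/4$, the identity $1-\tilde\epsilon(2\sigma+1)+2\sigma=(1-\tilde\epsilon)(2\sigma+1)$ yields $a(t)^2/(4(t-s))=\delta\ln t$ and $\delta-\sigma/2=1/4-\tilde\epsilon(2\sigma+1)/4$, so
\begin{equation*}
  \|\e^{aA}\e^{-(t-s)A^2}u(s)\|_{(L^2)^3}\leq t^{\delta}\,\|u(s)\|_{(L^2)^3}\leq Ct^{\delta-\sigma/2}=Ct^{1/4-\tilde\epsilon(2\sigma+1)/4}
\end{equation*}
by (\ref{eq:69ba}).

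The nonlinear integral is the main obstacle, since the kernel $\e^{a(t)^2/(4(t-\tau))}$ blows up as $\tau\to t$. To dominate it I would combine a Foias--Temam type Gevrey algebra estimate
\begin{equation*}
  \|\e^{bA}(fg)\|_{L^2}\leq C\|\e^{bA}f\|_{H^{3/4}}\|\e^{bA}g\|_{H^{3/4}},
\end{equation*}
(which follows from the subadditivity $|\xi|\leq|\xi-\eta|+|\eta|$ in the Fourier convolution) with an auxiliary weight $b(\tau)<a(t)$ valid on $[s,t]$, factoring $\e^{a(t)A}=\e^{(a(t)-b(\tau))A}\e^{b(\tau)A}$. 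The ``excess'' operator $\e^{(a(t)-b(\tau))A}A\,\e^{-(t-\tau)A^2}$ is controlled by the multiplier bound above, while $\e^{b(\tau)A}(u\otimes u)(\tau)$ is estimated via the Gevrey algebra in terms of $\|\e^{b(\tau)A}u(\tau)\|_{H^{3/4}}$. A bootstrap over a sequence of weights $a_n(t)=\sqrt{c_n\,t\ln t}$ with $c_n\uparrow(1-\epsilon)(2\sigma+1)$, using the $\sqrt t$--rate (\ref{eq:79kkk}) as the base case, then converts the resulting integral inequality into (\ref{eq:68ba}). The delicate point is to tune the margin $a(t)-b(\tau)$ at each stage so that the excess exponential is just integrable near $\tau=t$ while losing only a vanishing fraction of the analyticity radius per iteration.

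Finally, (\ref{eq:46wwa}) is immediate from (\ref{eq:68ba}) via the characterisation (\ref{eq:43}): choosing $\tilde\epsilon=0$ in (\ref{eq:68ba}) gives $\e^{a(t)A}u(t)\in(L^2)^3$ for $t\geq t_0$, hence $\rad(u(t))\geq a(t)=\sqrt{(1-\epsilon)(2\sigma+1)\,t\ln t}$; dividing by $\sqrt{t\ln t}$ and letting $\epsilon\downarrow 0$ yields $\liminf_{t\to\infty}\rad(u(t))/\sqrt{t\ln t}\geq\sqrt{2\sigma+1}$.
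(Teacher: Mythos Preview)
Your treatment of the linear part is correct and insightful: with $s=(1-\gamma)t$, $\gamma=(1-\epsilon)/(1-\tilde\epsilon)$, the multiplier bound and the identity $\delta-\sigma/2=1/4-\tilde\epsilon(2\sigma+1)/4$ yield exactly the right-hand side of (\ref{eq:68ba}). However, the nonlinear integral is not proved---it is only sketched. The proposed bootstrap over weights $a_n(t)=\sqrt{c_n\,t\ln t}$ has several unresolved issues: the base case (\ref{eq:79kkk}) is a $\sqrt t$--rate, not a $\sqrt{t\ln t}$--rate, so the first step of the induction is missing; the Gevrey algebra estimate you quote requires control of $\|\e^{b(\tau)A}u(\tau)\|_{H^{3/4}}$, whereas the inductive hypothesis you describe tracks only $L^2$; and the ``tuning'' of $a(t)-b(\tau)$ so that the excess exponential $\e^{(a(t)-b(\tau))^2/(4(t-\tau))}$ remains integrable while $c_n\uparrow(1-\epsilon)(2\sigma+1)$ is asserted but not carried out. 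As written, this is a strategy, not a proof.

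The paper avoids this bootstrap entirely by a different mechanism. It first shows (via differential inequalities for the Gevrey norms $G_r(t)=\|A^r\e^{\lambda_0\sqrt t A}u(t)\|^2$, Corollary~\ref{cor:decay1}) that the $L^2$ decay (\ref{eq:69ba}) forces $\|A^{1/2}u(T)\|=O(T^{-(2\sigma+1)/4})$. It then \emph{restarts} the flow at a large time $T$: since $\|A^{1/2}u(T)\|$ is small, the global small-data fixed-point theory (Theorem~\ref{thm:global anal}, itself proved with a time-dependent weight $\theta(t)=\lambda_0 t/\sqrt{T}$ and the contraction argument of Subsection~\ref{Abstract scheme}) applies with critical parameter $\bar\lambda(T)$ satisfying $\liminf_{T\to\infty}\bar\lambda(T)/\sqrt{\ln T}\geq\sqrt{2\sigma+1}$. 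Writing $t=(n+1)T$ with $n$ fixed large, one takes $\lambda_0=\sqrt{(1-\epsilon_0)(2\sigma+1)\ln T}$ and reads off (\ref{eq:68ba}) directly from the bound (\ref{eq:52c}). The nonlinear term is thus handled once and for all inside the fixed-point machinery, not by an iterative Duhamel bootstrap. If you want to salvage your approach, the cleanest fix is to replace the ad hoc bootstrap by precisely this restart-plus-contraction argument; alternatively, you would have to make the induction rigorous by propagating a weighted $\dot H^{5/4}$ norm (as in the paper's $\vB_{\zeta,\theta,I,3/8,5/4}$ spaces) rather than $L^2$, and verify the step $c_n\to c_{n+1}$ quantitatively.
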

\begin{remarks*}
  \begin{enumerate}[1)]
  \item \label{item:19} For any $u\in \vG_{r}$ the quantity $ \|u(t)\|_{(L^2)^3}=o(t^0)$
  as $t\to \infty$. In particular (\ref{eq:69ba}) is valid for $\sigma
  =0$. Under some further conditions (partly generic, involving the condition $\int u_{0i}u_{0j}dx \neq c \delta_{ij}, u_0 = u(0)$) it is shown in \cite{Sc1} that  for some $C > 0, \ C^{-1}\langle t \rangle^{-5/4} \leq \|u(t)\|_{(L^2)^3} \leq C\langle t \rangle^{-5/4}$. References to many further works on the $L^2$ decay rate can be found in \cite{Sc1}  .

\item \label{item:23} Our method  of proof works more generally, in
  particular for the class of compressible flows given by taking $P=I$
  in (\ref{eq:equamotionBA}). In this setting we construct an example
  for which (\ref{eq:69ba})--(\ref{eq:46wwa}) hold with $\sigma=5/2$ and for
  which {\it all} of the bounds (\ref{eq:69ba})--(\ref{eq:46wwa}) with
  $\sigma=5/2$ are
  {\it optimal}.
If $P$ is the  Leray
projection in (\ref{eq:equamotionBA}) we do not know if
 the bounds (\ref{eq:68ba}) and (\ref{eq:46wwa}) are optimal  under
 the decay condition  (\ref{eq:69ba}).
  \end{enumerate}
\end{remarks*}
As for the small time regime, we have less complete knowledge. It  is a basic fact that for all  $u\in
\vG_{r}$,  $r\geq1/2$,
\begin{equation}
  \label{eq:32Haa}
  \lim_{t\to 0}\tfrac {\rad (u(t))}
{\sqrt t}=\infty.
\end{equation} This result is valid for any strong solution to
(\ref{eq:equamotionBA}) defined on an  interval $I=]0,T]$ only (i.e.
$I=]0,\infty[$ is  not needed here). This
class, $\vS_{r,I}$, is introduced in Definition \ref{defns:strong}
in a similar way as the set of strong global solutions to
(\ref{eq:equamotionBA}) is introduced  in
Definition \ref{defn:strong_a}. For any $u_0\in (H^r)^3$, $r\geq1/2$,
there exists a unique  strong solution $u\in \vS_{r,I}$ with
$u(0)=u_0$ to the equations (\ref{eq:equamotionBA}) provided that
$T=|I|$ is small enough (alternatively, a
unique small time solution to the initial value problem (\ref{eq:equamotiona})).

One of our  main results  on small  time analyticity bounds is the
following (cf. Corollary \ref{cor:simple}):
\begin{thm}\label{thm:simpleaa} Suppose $u_0\in (H^r)^3$ for some $r\in
  ]1/2,3/2[$. Let $u$ be the unique small time ($T=|I|$ small)  strong solution $u\in
  \vS_{r,I}$  with
$u(0)=u_0$. Let $\epsilon\in
  ]0,2r-1]$. Then there exist constants  $t_0=t_0\big (\epsilon, r,
  \|\inp{A}^ru_0\|\big )\in ]0,T]$ and $C=C\big (\epsilon, r,
  \|\inp{A}^ru_0\|\big )>0$   such
that
  \begin{equation}
    \label{eq:29aa}
    \|\e^{\sqrt{2r-1-\epsilon}\ \sqrt{t|\ln t|}A}u(t)\|_{
    (H^{r})^3}\leq Ct^{1/4+\epsilon/4-r/2}\mforall t\in]0,t_0].
  \end{equation} In particular
  \begin{equation}
    \label{eq:46aa}
 \liminf_{t\to 0}\tfrac{\rad(u(t))}{\sqrt{t|\ln t|}}\geq \sqrt{2r-1}.
  \end{equation}
\end{thm}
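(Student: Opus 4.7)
The plan is to apply the exponential multiplier $e^{a(t)A}$ with $a(t):=\sqrt{(2r-1-\epsilon)\,t\,|\ln t|}$ to the Duhamel integral equation
\begin{equation*}
  u(t)=e^{-tA^2}u_0-\int_0^t e^{-(t-s)A^2}\,P(u\cdot\nabla)u(s)\,ds,
\end{equation*}
and to close an a priori bound on $v(t):=e^{a(t)A}u(t)$ in the Banach space
\begin{equation*}
 X_{t_0}:=\Bigl\{v\in C((0,t_0];(H^r)^3) : \|v\|_X:=\sup_{0<t\leq t_0} t^{\,r/2-1/4-\epsilon/4}\|v(t)\|_{(H^r)^3}<\infty\Bigr\}
\end{equation*}
by a Picard-type contraction. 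Uniqueness of $u$ in $\vS_{r,I}$ will then identify the fixed point with the exponentially weighted solution.

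For the linear contribution, the Fourier multiplier $e^{a(t)|\xi|-t|\xi|^2}$ is maximised at $|\xi|=a(t)/(2t)$ with value $\exp\bigl(a(t)^2/(4t)\bigr)=t^{-(2r-1-\epsilon)/4}$, hence
\begin{equation*}
\|e^{a(t)A}e^{-tA^2}u_0\|_{(H^r)^3}\leq t^{1/4+\epsilon/4-r/2}\,\|u_0\|_{(H^r)^3},
\end{equation*}
which already matches the right-hand side of \eqref{eq:29aa}. For the bilinear contribution, rewrite $(u\cdot\nabla)u=\nabla\cdot(u\otimes u)$ using incompressibility, and decompose
\begin{equation*}
  e^{a(t)A}e^{-(t-s)A^2}=\bigl[e^{(a(t)-a(s))A-(t-s)A^2/2}\bigr]\circ\bigl[e^{-(t-s)A^2/2}\bigr]\circ e^{a(s)A}.
\end{equation*}
The first bracket is a Fourier multiplier of norm at most $\exp\bigl((a(t)-a(s))^2/(2(t-s))\bigr)$; the second, combined with the derivative from $\nabla$, produces a factor bounded by a constant times $(t-s)^{-\alpha}$ with $\alpha=\alpha(r)<1$ via standard heat-kernel smoothing in the range $r\in(1/2,3/2)$; and the third factor is pushed inside the product through the subadditivity $|\xi|\leq|\eta|+|\xi-\eta|$, which lifts the standard Sobolev product estimate $\|fg\|_{H^{s}}\leq C\|f\|_{H^r}\|g\|_{H^r}$ (valid for $r>1/2$ and suitable $s<2r-3/2$) to its exponentially weighted analogue.

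The main technical obstacle is the uniform control of the weight factor across the triangle $0<s<t$. A Cauchy--Schwarz computation together with the explicit form of $a$ yields, for sufficiently small $t$, the sharp inequality
\begin{equation*}
  \frac{(a(t)-a(s))^2}{t-s}\leq\frac{a(t)^2}{t}=(2r-1-\epsilon)\,|\ln t|,\qquad 0<s<t,
\end{equation*}
equivalent to the algebraic statement $(1-a(s)/a(t))^2\leq 1-s/t$, which reduces to the monotonicity of $\tau\mapsto\tau|\ln\tau|$ on $(0,e^{-1})$. Consequently the weight factor is bounded by $t^{-(2r-1-\epsilon)/2}$. Substituting the ansatz $\|v(s)\|_{(H^r)^3}\leq M\,s^{1/4+\epsilon/4-r/2}$ and evaluating the resulting Beta-function integral $\int_0^t(t-s)^{-\alpha}s^{1/2+\epsilon/2-r}\,ds$ produces a bilinear estimate $\|(\text{bilinear})\|_X\leq C\,t_0^{\eta}\|v\|_X^{2}$ with some $\eta=\eta(r,\epsilon)>0$; this is precisely the scaling reason for the logarithmic factor inside $a$, namely, $t|\ln t|$ is the largest weight for which the loss $t^{-(2r-1-\epsilon)/2}$ still fits under the $\epsilon$-safety margin built into the target exponent. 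Contraction then follows for $t_0$ sufficiently small (depending on $r$, $\epsilon$, and $\|u_0\|_{(H^r)^3}$), yielding \eqref{eq:29aa}, and \eqref{eq:46aa} is an immediate consequence of \eqref{eq:43}.
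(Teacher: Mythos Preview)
Your strategy differs from the paper's in two linked ways, and together they create a genuine gap.

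First, the paper does \emph{not} work with the weight $a(t)=\sqrt{(2r-1-\epsilon)\,t|\ln t|}$ directly. Instead, for each fixed terminal time $T$ it uses a \emph{linear} weight $\theta(t)=(\lambda_0/\sqrt T)\,t$ with $\lambda_0=\sqrt{(2r-1-\epsilon)|\ln T|}$, runs the contraction on $]0,T]$, evaluates at $t=T$ (where $\theta(T)=a(T)$), and then varies $T$. The linearity makes the weight exactly subadditive, $\theta(t)=\theta(t-s)+\theta(s)$, which yields the clean cancellation $e^{-\lambda^2(t)/4}e^{\lambda^2(t-s)/4}e^{\lambda^2(s)/2}=e^{\lambda^2(s)/4}\le e^{\lambda_0^2/4}$. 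Moreover the paper closes the fixed point not in $H^r$ but in an auxiliary space built on $A^{\tilde r}$ with $\tilde r=\max(r,5/4)$, carrying a prefactor $e^{-\zeta}$ in the norm; the $H^r$ estimate is derived \emph{afterwards}, as a consequence of the fixed point rather than as the contraction norm itself.

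Second, your direct $H^r$ contraction does not close for small $\epsilon$. With your half-kernel split the weight bound is $\exp\!\big((a(t)-a(s))^2/(2(t-s))\big)\le t^{-(2r-1-\epsilon)/2}$, and the resulting bilinear exponent is
\[
\eta=\tfrac12-r+\tfrac{3\epsilon}{4},
\]
which is negative whenever $\epsilon<(4r-2)/3$. Even if you avoid the split and use the sharper full-kernel estimate $\|A^\alpha e^{\mu\sqrt\tau A}e^{-\tau A^2}\|\le C\langle\mu\rangle^\alpha e^{\mu^2/4}\tau^{-\alpha/2}$, the exponent improves only to $\eta=\tfrac14-\tfrac r2+\tfrac{\epsilon}{2}$, still negative for $\epsilon<r-\tfrac12$. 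The underlying reason is that in your space $|Y|_X\le\|u_0\|_{H^r}$ carries no smallness in $t_0$, so the entire burden falls on $\gamma_X$; but your uniform bound $(a(t)-a(s))^2/(t-s)\le a(t)^2/t$ is sharp at $s=0$ and forces a loss $t^{-(2r-1-\epsilon)/2}$ (or $t^{-(2r-1-\epsilon)/4}$) that the Beta integral cannot absorb when $\epsilon$ is small. Consequently your argument as written yields only $\liminf_{t\to0}\rad(u(t))/\sqrt{t|\ln t|}\ge\sqrt{(2r-1)/3}$, not $\sqrt{2r-1}$. To reach the full statement you need either the paper's linear-weight-plus-auxiliary-space mechanism, or a different Banach space in which the free evolution already carries a positive power of $t_0$.
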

\begin {remark*} There are several natural questions connected
with these results:  Are
 the bounds (\ref{eq:29aa}) and (\ref{eq:46aa}) optimal for  $r\in
  ]1/2,3/2[$? Are there better bounds than those
  deducible  from
  Theorem \ref{thm:simpleaa}
  if $r>3/2$? Can
  (\ref{eq:32Haa}) be improved for $r=1/2$? (But in this connection see the discussion of an example in
  Subsection 3.4.)
\end{remark*}

\subsection{Results on stability of the analyticity radius}\label{Results on openess for global solvability}
\begin{defn}
  \label{def:global-stabilitya}For   $r\geq 1/2$ we denote by
\begin{equation}
  \label{eq:78a}
  \vI_r=\{u_0\in P(H^r)^3| \,\exists u\in \vG_r: \,u(0)=u_0\},
\end{equation} and we endow $ \vI_r$ with the topology from the space $P(H^r)^3$.
\end{defn}

%We note that for all $u_0\in \vI_r$   the corresponding strong global solution
%  $u$ is unique.

Our main result  on stability of the region of analyticity of global solutions  to
(\ref{eq:equamotionBA}) is the following (from Theorem \ref{thm:global-stability} and Lemma \ref{lemma:energyre}).
\begin{thm}\label{thm:global-stabilitya}  For all $r\geq 1/2$ the set
  $\vI_r$ is open in $ P(H^r)^3$. Given $u_0 \in \vI_{1/2}$, if
  $\lambda>0$ is given  so that the corresponding solution u(t)
  satisfies
  \begin{equation}
    \label{eq:81pppppppp}
    A^{1/2}\e^{\lambda \sqrt \cdot A}u(\cdot) \in C([0,\infty[, L^2),
  \end{equation}
then
  there is a $\delta_0 > 0$ so that if $\delta \leq  \delta_0$ and  $v_0
  \in P(H^{1/2})^3$ with $\|A^{1/2}(v_0 - u_0)\| \leq  \delta$  the
  solution $v$ with initial data $v_0$ is in $\vG_{1/2}$ and satisfies
\begin{subequations}
  \begin{align}
  \label{eqn:radiusstabilityq}
  \|A^{1/2}\e^{\lambda \sqrt tA}(v(t)-u(t))\| &\leq  K_1\delta,\\
  \label{eqn:radiusstability2q}
  t^{3/8}\|A^{5/4}\e^{\lambda \sqrt tA}(v(t)-u(t))\|&\leq K_2\delta.
  \end{align}
If  $\|v_0 - u_0\|_{H^{1/2}} \leq \delta$ it follows in addition
that
\begin{equation}
  \label{eqn:radiusstability322}
  \inp{t}^{-1/4}\|\e^{\lambda \sqrt tA}(v(t)-u(t))\|\leq K_3\delta.
\end{equation}
\end{subequations}
In \eqref{eqn:radiusstabilityq}--\eqref{eqn:radiusstability322}  the
 constants $K_1,K_2,K_3>0$ depend on $\lambda$, $u$, and $\delta_0$ but not on
$\delta$, and all bounds are uniform in $t>0$.
  \end{thm}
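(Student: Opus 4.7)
The natural strategy is a fixed point argument for the difference $w:=v-u$, conjugated with the analyticity weight $\e^{\lambda\sqrt tA}$. Subtracting the equations for $v$ and $u$ gives
\begin{equation*}
\tfrac{\d}{\d t}w+A^2w+P(u\cdot\nabla w)+P(w\cdot\nabla u)+P(w\cdot\nabla w)=0,\quad w(0)=v_0-u_0,
\end{equation*}
which in Duhamel form reads $w=\e^{-tA^2}w(0)+\vB(u,w)+\vB(w,u)+\vB(w,w)$ with
\begin{equation*}
\vB(f,g)(t):=-\int_0^t \e^{-(t-s)A^2}P\bigl(f\cdot\nabla g\bigr)(s)\,\d s.
\end{equation*}
Setting $W(t):=\e^{\lambda\sqrt tA}w(t)$ and $U(t):=\e^{\lambda\sqrt tA}u(t)$, I would work with the corresponding integral equation for $W$ in a suitable Banach space.

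The key technical tool is the uniform boundedness on $L^2$ of the conjugated heat propagator $Q(t,s):=\e^{\lambda\sqrt tA}\e^{-(t-s)A^2}\e^{-\lambda\sqrt sA}$ for $0\leq s\leq t$. Its Fourier symbol is $\exp\bigl[\lambda(\sqrt t-\sqrt s)|\xi|-(t-s)|\xi|^2\bigr]$; writing $\sqrt t-\sqrt s=(t-s)/(\sqrt t+\sqrt s)$ with $(\sqrt t+\sqrt s)^2\geq t\geq t-s$ and completing the square yields $\|Q(t,s)\|_{L^2\to L^2}\leq \e^{\lambda^2/4}$, uniformly in $0\leq s\leq t$. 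This, combined with the pointwise submultiplicative bound $\e^{\lambda\sqrt s|\xi|}\leq \e^{\lambda\sqrt s|\xi-\eta|}\e^{\lambda\sqrt s|\eta|}$, converts the Kato-type bilinear estimates for Navier-Stokes in $\dot H^{1/2}$ into their weighted analogues: writing $\vB^\lambda$ for the conjugated bilinear operator acting on $F=\e^{\lambda\sqrt tA}f$ and $G=\e^{\lambda\sqrt tA}g$,
\begin{equation*}
\|A^{1/2}\vB^\lambda(F,G)(t)\|+t^{3/8}\|A^{5/4}\vB^\lambda(F,G)(t)\|\leq C\|F\|_X\|G\|_X,
\end{equation*}
where
\begin{equation*}
\|F\|_X:=\sup_{t\geq 0}\|A^{1/2}F(t)\|+\sup_{t>0}t^{3/8}\|A^{5/4}F(t)\|.
\end{equation*}

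With $X$ the corresponding Banach space, I would recast the equation as $W=W_0+L_U(W)+\vN(W,W)$, where $W_0(t):=\e^{\lambda\sqrt tA}\e^{-tA^2}w(0)$, $L_U(W):=\vB^\lambda(U,W)+\vB^\lambda(W,U)$, and $\vN(W,W):=\vB^\lambda(W,W)$. Standard heat smoothing yields $\|W_0\|_X\leq C\|A^{1/2}w(0)\|\leq C\delta$, and the bilinear estimate gives a contraction on a small ball in $X$ provided the relevant $X$-norm of $U$ on the time interval under consideration is below a universal threshold. Because $\|U\|_X$ need not be globally small, I would run the fixed point first on a short interval $[0,T_0]$: the condition $\lim_{t\to 0}t^{3/8}\|A^{5/4}u(t)\|=0$ from Definition \ref{defn:strong_a}, passed through the conjugation, forces $\sup_{0<t\leq T_0}t^{3/8}\|A^{5/4}U(t)\|$ to be small for $T_0$ sufficiently small, which secures contraction on $[0,T_0]$. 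I would then continue on successive intervals $[T_k,T_{k+1}]$, using the already-controlled value $W(T_k)$ as the new datum; the standing hypothesis $A^{1/2}\e^{\lambda\sqrt{\cdot}A}u(\cdot)\in C([0,\infty[,L^2)$ bounds $\|A^{1/2}U(t)\|$ uniformly in $t$, and the $A^{5/4}$-regularity needed on each interval is recovered from the weighted equation satisfied by $U$ by interior parabolic regularization. This procedure produces $v=u+w\in\vG_{1/2}$ together with the bounds \eqref{eqn:radiusstabilityq}--\eqref{eqn:radiusstability2q}, with $K_1,K_2$ depending on $u$ and $\lambda$ but not on $\delta$.

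For \eqref{eqn:radiusstability322}, the additional hypothesis $\|v_0-u_0\|_{H^{1/2}}\leq\delta$ provides $\|w(0)\|_{L^2}\leq\delta$. I would test the weighted Duhamel equation for $W$ against itself in $L^2$ and invoke the already-established $X$-bounds to dominate the nonlinear and the $L_U$-contributions; a Gronwall argument then gives a bound growing at most polynomially in $t$, yielding the factor $\inp t^{1/4}$. The principal technical hurdle is handling the non-small linear operator $L_U$: it rules out a one-step global contraction and forces the interval-by-interval continuation sketched above, whose closure crucially relies on the uniform $Q(t,s)$-bound because that alone prevents the Gevrey weight from degrading the parabolic Kato norm across time steps.
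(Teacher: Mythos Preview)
Your overall architecture---conjugate by $\e^{\lambda\sqrt t A}$, use the bound $\|Q(t,s)\|\le \e^{\lambda^2/4}$, and run a contraction for the difference $w=v-u$ on the weighted Kato space $X$---matches the paper's. The finite-interval portion of your plan (contraction on $[0,T_0]$ and then step-by-step continuation on $[T_k,T_{k+1}]$) is essentially the content of the paper's Proposition~\ref{thm:global-stability_0}: there one obtains on a bounded interval $[0,T]$ bounds of the form $\|A^{1/2}\e^{\theta(t)A}(v(t)-u(t))\|\le (2c_\lambda)^m\delta$ after $m$ steps, and the number of steps $m_0=T/s_0$ is finite, so the constants are finite.

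The gap is in passing from bounded intervals to $t\in[0,\infty[$. Your continuation scheme runs over \emph{infinitely} many intervals, and each step multiplies the bound on $W$ by a fixed factor (as in \eqref{eq:inductivebd}). Uniform boundedness of $\|A^{1/2}U(t)\|$---which is all you extract from hypothesis~\eqref{eq:81pppppppp}---only guarantees a uniform step size $s_0$; it does not stop the exponential accumulation $(2c_\lambda)^m\to\infty$. The paper closes this by showing that $U$ is eventually \emph{small} in the relevant norm: via the energy identity (Lemma~\ref{lemma:energyre}) one has $\liminf_{t\to\infty}\|A^{1/2}u(t)\|=0$, and then Corollary~\ref{cor:o(1)} upgrades this to $t^{3/8}\|A^{5/4}\e^{\lambda\sqrt t A}u(t)\|=o(1)$. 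Hence one can choose a single time $T_0$ after which $|u_{T_0}|_{\infty,T_0}\le (3\gamma_\lambda)^{-1}$, and a \emph{single} contraction on all of $[T_0,\infty[$ gives the global bounds. Your proposal omits this decay input entirely, and without it the interval-by-interval argument does not yield finite constants $K_1,K_2,K_3$ independent of $t$.

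A secondary point: for \eqref{eqn:radiusstability322} the paper does not use a Gronwall argument. The factor $\inp t^{1/4}$ arises directly from the bilinear estimate $\|\e^{\theta A}B(w_1,w_2)(s)\|\le s^{1/4}\gamma_\lambda|w_1||w_2|$ applied on the final interval $[T_0,\infty[$, where $s=t-T_0$ is unbounded; on the preceding bounded interval $[0,2T_0]$ the same estimate gives a uniformly bounded contribution. Your Gronwall sketch would likely produce a worse power of $t$ unless you also exploit the eventual smallness of $U$.
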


We note that the fact that $\vI_r$ is open is a known result.
References will be given in Subsection \ref{Completing the proof}. We
also note that indeed for any $u_0 \in \vI_{1/2}$ the condition
\eqref{eq:81pppppppp} holds for some $\lambda>0$,
cf. \eqref{eq:79kkk}. We
apply  Theorem \ref{thm:global-stabilitya} (and some other results of
this paper) to establish  a new stability
result for the $L^2$ norm. This result is presented in Subsection
\ref{L^2 decay stability}.
\begin {remark*} There is a   natural question connected
with Theorem \ref{thm:global-stabilitya}:  Is the analyticity radius
lower semicontinuous in the $ H^{1/2}$ topology? More precisely one
may conjecture  that for any fixed $u_0 \in \vI_{1/2}$  and $t>0$
\begin{equation} \label{eq:semi-continuity}
  \liminf\,\rad (v(t))\geq \rad
  (u(t)) \text{ in the limit } \|v_0 - u_0\|_{H^{1/2}}\to 0?
\end{equation} For partial  results in this direction see
Proposition \ref{thm:global-stability_0} and Corollary \ref{cor:semicontinuity}.
\end{remark*}

 We shall use the standard notation
 $\inp{\lambda}:=(1+|\lambda|^2)^{1/2}$ for any real $\lambda$. For
 any given interval $J$ and Hilbert space $\vH$ the notation
 $BC(J,\vH)$ refers to the set of all bounded continuous functions $v:
 J\to \vH$.

\section{Integral equation} \label{Integral equation}
We look at the following (generalized) system of equations in the variables  $(t,x)\in
[0,\infty[\times \R^3$. The unknown $u$ specifies at each argument a
velocity $u=u(t,x)\in \R^3$.  The quantity $M$  is a fixed real $3\times
3$--matrix. Corresponding to (\ref{eq:equamotiona}) $M=I$. The
quantity $P$ is an orthogonal projection on  $L^2(\R^3,\R^3)$ which
is fibered in Fourier space, i.e. $\widehat {(Pf)}(\xi)=P(\xi) \hat f
(\xi)$.  Corresponding to (\ref{eq:equamotiona}) $P$ is the Leray
projection given by
$P(\xi) =I-|\hat \xi\rangle \langle \hat\xi |,\;\hat\xi
:=\xi/|\xi|$, but for the most part we will not assume this.
\begin{equation}\label{eq:equamotion}
\begin{cases}
\big (\tfrac{\partial }{\partial t}u+(Mu\cdot \nabla)u -\triangle
u\big )(t,\cdot)\in\Ran\parb {I-P}\mfor t>0\\
u(t):=u(t,\cdot)\in\Ran P\mfor t\geq 0\\
u(0)=u_0
\end{cases}\;.
\end{equation}
 Similarly the operator $A:=\sqrt{-\triangle}$ on  $L^2(\R^3,\R^3)$
 is fibered in Fourier space as  $\widehat {(Af)}(\xi)=|\xi|\hat f (\xi)$.  Upon multiplying
the first equation by $P$ and integrating
we obtain (formally)
\begin{equation}
  \label{eq:1}
  u(t)=\e^{-tA^2}u_0-\int _0^t\e^{-(t-s)A^2}P(Mu(s)\cdot \nabla)u(s)\ \d s.
\end{equation}

Conversely,  notice  (formally) that a solution to (\ref{eq:1}) with $u_0\in \Ran P$
obeys (\ref{eq:equamotion}). In the  bulk of this paper we shall study
(\ref{eq:1}) without imposing the condition  $u_0\in \Ran P$. See
though Section \ref{Openness of the set of global solutions with
  respect to data} for an exception. In fact  in Subsection
\ref{Global solutions} we shall  study (under some
conditions)   the
relationship between (\ref{eq:equamotion}) and (\ref{eq:1}). For the
bulk of this paper this  relationship is minor although traces are used
already
in Sections \ref{Differential inequalities for small
  global solutions in dot 1/2  and  1/2} and \ref{Optimal rate of growth of analyticity
  radius }.  The reader might  prefer to read the present section and Subsection
\ref{Global solutions} before proceeding to Section \ref{Analyticity bounds
  for small times}.

In this section we consider the Cauchy problem in the form (\ref{eq:1}) using norms based essentially on Sobolev spaces.  Although this material is well known (see [FK1], [FK2], and for example [KP], [Le], [Pl]) we give a self-contained account so that we can use the specific results and methods in our analysis of the spatial analyticity of solutions of (\ref{eq:1}) in the sections following.

Part of our motivation for studying equations more general than (\ref{eq:equamotionBA}) is that
such a study emphasizes what we actually use in our analysis.  In particular, besides the case where $M = I$ and $P$ is the Leray projection, we will consider the vector Burgers' equation, as an example, where $M = I$ and $P = I$ (see Subsections \ref{Example1} and \ref{Example2}). The latter equation has been studied in [KL], [JS], [Ga], and elsewhere.

We define for  any $r\in ]-\infty,3/2[$ the homogeneous Sobolev space $\dot H^r$ to
be the set of $f\in \vS'$ such that the Fourier transform $\hat f$ is
a measurable function and $|\xi|^r\hat f(\xi)\in L^2(\R_{\xi}^3)$. The corresponding
norm is $\|f\|_{\dot H^r}=\|A^rf\|$ where here and henceforth   $\|\cdot\|$ refers to the
$L^2$-norm. For simplicity we shall use the same notation for vectors
$f\in (L^2)^3=L^2\oplus L^2\oplus L^2$, viz. $\|f\|=\sqrt
{\|f_1\|^2+\|f_2\|^2+\|f_3\|^2}$ for $f=(f_1,f_2,f_3)\in (L^2)^3$.

Let $I$  be an interval of the form $I=]0,T]$ (if  $T$ is finite)
or  $I=]0,T[$ (if $T=\infty$). The closure $I\cup\{0\}$  will be denoted
$\bar I$. Let  $\zeta:I\to \R$ and $\theta:I\to [0, \infty[$ be  given continuous functions.  Let
$s_1,s_2\in [0,3/2[$ be given. We shall consider the class of
functions $I\ni t\to v(t)\in (\dot H^{s_2})^3$ for which the expression
$\e^{-\zeta(t)}t^{s_1}\e^{\theta(t)A}A^{s_2}v(t)$ defines an element in $BC(I,(L^2)^3)$. The
set of such functions,  denoted by
$\vB_{\zeta,\theta,I,s_1,s_2}$, is a Banach space with the norm
\begin{equation}
  \label{eq:3}
  \|v\|_{\zeta, \theta,I,s_1,s_2}:=\sup_{t\in I}\e^{-\zeta(t)}t^{s_1}\|\e^{\theta(t)A}A^{s_2}v(t)\|.
\end{equation}

In this section we discuss the case $\zeta=0$ and $\theta=0$ only which (upon choosing $s_1$ and
$s_2$ suitably) corresponds  to part of the pioneering
work \cite {KF1, KF2}. Consequently we omit throughout this section
the subscripts $\zeta$ and $\theta$ in the above notation.

We recall the
following class of Sobolev bounds (cf. \cite[(IX.19)]{RS}):

\begin{lemma}
  \label{lemma:sobolev} For all $r\in ]0,3/2[$, $f\in \dot H^r$ and
  all $g\in \dot H^{3/2-r}$ the product $fg\in L^2$, and  there exists a constant
  $C=C(r)>0$ such that
  \begin{equation}
    \label{eq:4}
    \|fg\|\leq C\|A^rf\|\|A^{3/2-r}g\|.
  \end{equation}

\end{lemma}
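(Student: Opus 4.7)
The plan is to reduce the inequality to H\"older's inequality after invoking the homogeneous Sobolev embedding in $\R^3$, namely $\dot H^s(\R^3)\hookrightarrow L^{q(s)}(\R^3)$ for $s\in(0,3/2)$ with critical exponent $\tfrac{1}{q(s)}=\tfrac{1}{2}-\tfrac{s}{3}$. Applying this separately with $s=r$ and $s=3/2-r$ gives the two exponents
\begin{equation*}
  p_1=\tfrac{6}{3-2r},\qquad p_2=\tfrac{3}{r},
\end{equation*}
and the arithmetic identity $\tfrac{1}{p_1}+\tfrac{1}{p_2}=\tfrac{3-2r}{6}+\tfrac{r}{3}=\tfrac{1}{2}$ is the whole point of the algebra. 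The hypothesis $r\in(0,3/2)$ is exactly what keeps both embeddings nondegenerate, and it also ensures $p_1,p_2\in(2,\infty)$ so that H\"older's inequality delivers an $L^2$ product.

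Concretely, I would first use the two Sobolev embeddings to produce
\begin{equation*}
  \|f\|_{L^{p_1}}\leq C_1\|A^rf\|,\qquad \|g\|_{L^{p_2}}\leq C_2\|A^{3/2-r}g\|,
\end{equation*}
and then apply H\"older's inequality with conjugate pair $(p_1,p_2)$ for $L^2$, yielding $fg\in L^2$ together with
\begin{equation*}
  \|fg\|\leq \|f\|_{L^{p_1}}\|g\|_{L^{p_2}}\leq C_1C_2\|A^rf\|\,\|A^{3/2-r}g\|.
\end{equation*}
Taking $C(r):=C_1(r)C_2(r)$ gives the claimed bound. No further analytic ingredient is required.

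The only substantive step, i.e.\ the main obstacle, is the homogeneous Sobolev embedding itself, and this I would obtain in the standard way from the Hardy-Littlewood-Sobolev inequality on $\R^3$: writing $f=c_{3,r}\,I_r(A^rf)$ where the Riesz potential $I_r$ has Fourier multiplier $|\xi|^{-r}$ and integral kernel proportional to $|x-y|^{-(3-r)}$, Hardy-Littlewood-Sobolev yields $\|I_r h\|_{L^{p_1}}\leq C\|h\|$ precisely with $\tfrac{1}{p_1}=\tfrac{1}{2}-\tfrac{r}{3}$, and the same argument (with $r\rightsquigarrow 3/2-r$) handles the $g$ factor. Since this is precisely the content of the reference \cite[(IX.19)]{RS} already cited in the statement, the substantive work in the proof is simply to check the bookkeeping of exponents carried out above.
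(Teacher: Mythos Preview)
Your proof is correct. The paper does not give its own proof of this lemma at all; it simply records the inequality as a known Sobolev bound with a reference to \cite[(IX.19)]{RS}, and your argument via the homogeneous Sobolev embedding $\dot H^s(\R^3)\hookrightarrow L^{6/(3-2s)}(\R^3)$ followed by H\"older's inequality is exactly the standard route behind that citation.
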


Due to Lemma \ref{lemma:sobolev} we can estimate
\begin{equation}\label{eq:9}
  \|A^{5/4}\e^{-(t-s)A^2}P(Mu(s)\cdot \nabla)v(s)\|\leq
  C_1(t-s)^{-5/8}s^{-3/4}\|u\|_{I,3/8, 5/4}\,\|v\|_{I,3/8, 5/4}.
\end{equation} Here we used the spectral theorem to bound
$\|A^{5/4}\e^{-(t-s)A^2}\|_{\vB (L^2)}\leq C(t-s)^{-5/8}$ and the boundedness of
$P$ and $A^{1/4}\partial_jA^{-5/4}$.

Motivated by (\ref{eq:9}) let us write the integral equation
(\ref{eq:1}) as $X=Y+B(X,X)$ on the space  $\vB=\vB_{I,s_1,s_2}$ with
$s_1= 3/8$ and $s_2= 5/4$. Abbreviating  $|v|=\|v\|_{I,s_1,s_2}$ we
obtain that for all $u,v\in \vB$
\begin{equation}
  \label{eq:8}
 |B(u,v)|\leq \gamma |u|\,|v|;\;\gamma=C_1\sup_{t\in
   I}t^{3/8}\int_0^{t}(t-s)^{-5/8}s^{-3/4}\,\d s.
\end{equation} Notice that $\gamma$ does not
depend on $I$ since $C_1$ is the constant  coming from (\ref{eq:9})
and
\begin{equation}
  \label{eq:10}
 \gamma=C_1\int_0^{1}(1-s)^{-5/8}s^{-3/4}\,\d s.
\end{equation}

\subsection{Abstract scheme}\label{Abstract scheme} We shall study
(motivated  by (\ref{eq:8})) the equation
\begin{equation}
  \label{eq:5}
  X=Y+B(X,X),
\end{equation} where $B$ is a continuous bilinear operator $\vB\times
\vB\to \vB$ on a given Banach space $\vB$. Let $\gamma \geq 0$ denote a  corresponding  bounding
constant,
\begin{equation}
  \label{eq:7}
  |B(u,v)|\leq \gamma |u|\,|v|.
\end{equation} The elementary fixed point
theorem applies if there exists $R\geq 0$ such that
\begin{equation}
  \label{eq:6}
  |Y|\leq R\mand \kappa:=4\gamma R<1.
\end{equation} In fact letting $B_{\widetilde R}=\{X\in \vB|\,
|X|\leq {\widetilde R}\}$ for  ${\widetilde R}\geq0$  the conditions (\ref{eq:6}) assure that $B_{2R}$ is
mapped into itself by  the map
$X\to F(X):=Y+B(X,X)$ and that $\kappa$ is a
corresponding contraction constant.  (This version of the fixed point
theorem is implicitly used in \cite{Pl}, see \cite[Lemma 1]{Pl}.) In
particular under the condition (\ref{eq:6})  there exists a unique
solution to (\ref{eq:5}) in  $B_{2R}$. Letting  $X_0=Y$ and
$X_n=Y+B(X_{n-1},X_{n-1})$ for $n\in \N$  this solution can be represented
as
\begin{equation}
  \label{eq:2}
  X=\vB-\lim_{n\to \infty}X_n.
\end{equation}

\subsection{Local solvability in $\dot
H^r$  and  $H^r$,  $r\in [1/2, 3/2[$}  \label{Local solvability in dot r
and r for
large data} We return to the integral equation
(\ref{eq:1}) written as $X=Y+B(X,X)$ in the space  $\vB=\vB_{I,s_1,s_2}$ with
$s_1= 3/8$ and $s_2= 5/4$. We need to examine the first term
$Y(t)=\e^{-tA^2}u_0$ for some ``data'' $u_0\in (\vS')^3$. More precisely we need
to study the condition  $Y\in \vB$. Clearly due to (\ref{eq:8})
 the requirement (\ref{eq:6}) is met if $|Y|$ is sufficiently small.

Let us first examine the special case  $Y=\e^{-(\cdot)A^2}u_0\in
\vB_\infty$ where $\vB_\infty=\vB_{]0,\infty[,s_1,s_2}$. This requirement  is
equivalent to finiteness of the expression
\begin{equation}
  \label{eq:11}
  \sup_{q\in\Z}2^{q/2}\parbb {\int _{2^q\leq |\xi|<2^{q+1}}|\hat u_0|^2\,\d
  \xi}^{1/2},
\end{equation} cf. \cite[Lemma 8]{Pl}. Notice that  finiteness of
(\ref{eq:11}) and (\ref{eq:3}) are
equivalent to  finiteness of  these  expressions  for each of the three
components of $u_0$ and $v$, respectively.  For notational convenience
 we shall in the following  discussion
slightly
abuse notation by treating $u_0$ as a scalar-valued function rather than
an $\R^3$-valued function and similarly for the elements in  $\vB_\infty$. The (finite) expression  (\ref{eq:11})
is the norm of $u_0$ in the Besov space $\dot B^{1/2}_{2,\infty}$ which
indeed consists of all  $ u_0\in \vS'$ with $\hat u_0$ a
measurable function and (\ref{eq:11}) finite. In fact the
norms  (\ref{eq:3}) (with $\zeta=0$, $\theta=0$, $s_1= 3/8$ and $s_2= 5/4$) and
(\ref{eq:11}) are equivalent on the subspace  of $\vB_\infty$
consisting of functions
$t\to \e^{-tA^2}u_0$ where  $u_0\in \dot B^{1/2}_{2,\infty}$,
henceforth for
brevity denoted by $\e^{-(\cdot) A^2}\dot B^{1/2}_{2,\infty}$. Introducing $\dot
B^{{1/2},0}_{2,\infty}$ as  the set of $u_0\in \dot
B^{1/2}_{2,\infty}$ with
\begin{equation*}
2^{q/2}\parbb {\int _{2^q\leq |\xi|<2^{q+1}}|\hat u_0|^2\,\d
  \xi}^{1/2}\to 0\mfor q\to +\infty,
\end{equation*}
obviously
\begin{equation}\label{eq:13}
  \dot
H^{1/2}\subseteq \dot
B^{{1/2},0}_{2,\infty}\subseteq \dot B^{1/2}_{2,\infty}\subseteq
\cap_{1>\epsilon>0}\big (\dot
H^{1/2-\epsilon}+\dot
H^{1/2+\epsilon}\big ).
\end{equation}
Let $\vB^0_\infty$ be the subspace of
$\vB_\infty$ consisting of functions $v\in \vB_\infty$ obeying
\begin{equation}\label{eq:14}
  t^{3/8}\|A^{5/4}v(t)\|\to 0 \mfor t\to 0.
\end{equation} We have the following identification of subspaces in
$\vB_\infty$ (which is easily proven)
\begin{equation}\label{eq:12}
  \e^{-(\cdot) A^2}\dot B^{1/2,0}_{2,\infty}=\vB^0_\infty\cap\e^{-(\cdot) A^2}\dot B^{1/2}_{2,\infty}.
\end{equation}

Now returning to a general interval $I$ we introduce the subspace of
$\vB$, denoted by
$\vB^0$,  consisting of (vector-valued) functions  $v$ obeying (\ref{eq:14}).

\begin{prop}\label{prop:kato_small_data} Suppose $u_0\in(\dot
H^{1/2})^3$. Then for any  $T=|I|>0$ small enough (so that the conditions
(\ref{eq:6}) hold for some $R>0$)
 the
integral equation (\ref{eq:5})  has a unique solution  in the ball  $B_{2R}\subset\vB$
where $\vB=\vB_{I, s_1,s_2}$ with
$s_1= 3/8$, $s_2= 5/4$ and
 $Y(t)=\e^{-tA^2}u_0$. This solution $X\in \vB^0\cap BC(\bar I,(\dot
H^{1/2})^3)$ with $X(0)=u_0$. If in addition $u_0\in (L^2)^3$ then $X\in BC(\bar I,(H^{1/2})^3)$.
  \end{prop}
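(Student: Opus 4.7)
The plan is to apply the abstract fixed point scheme of Subsection \ref{Abstract scheme} to $X = Y + B(X, X)$ on $\vB = \vB_{I, 3/8, 5/4}$, and then upgrade the resulting $X$ successively to $\vB^0$, to $BC(\bar I, (\dot H^{1/2})^3)$ with trace $u_0$ at $0$, and (under the extra $L^2$ assumption) to $BC(\bar I, (H^{1/2})^3)$. For the fixed point step, since $u_0 \in (\dot H^{1/2})^3 \subseteq (\dot B^{1/2,0}_{2,\infty})^3$ by \eqref{eq:13}, the identification \eqref{eq:12} gives $Y := \e^{-(\cdot) A^2} u_0 \in \vB^0_\infty$; that is, the continuous function $\varphi_Y(t) := t^{3/8} \|A^{5/4} Y(t)\|$ on $]0, \infty[$ vanishes at $0$. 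In particular $|Y|_{]0, T], 3/8, 5/4} = \sup_{t \in ]0, T]} \varphi_Y(t) \to 0$ as $T \to 0$, so for $T$ small enough \eqref{eq:6} holds with $4\gamma R < 1$ (with $\gamma$ as in \eqref{eq:10}), and the scheme produces a unique $X \in B_{2R}$, realized as the $\vB$-limit of $X_0 = Y$, $X_n = Y + B(X_{n-1}, X_{n-1})$.

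To obtain $X \in \vB^0$, I first note that $\vB^0$ is closed in $\vB$ (the pointwise inequality $s^{3/8}\|A^{5/4}v(s)\| \le s^{3/8}\|A^{5/4}v_n(s)\| + |v - v_n|$ propagates the vanishing at $0$ through $\vB$-limits), and then proceed by induction on the iterates $X_n$. The base case $X_0 = Y \in \vB^0$ is in hand. For the induction, writing $\varphi_n(s) = s^{3/8} \|A^{5/4} X_n(s)\|$ and using the pointwise Sobolev bound $\|(MX_{n-1} \cdot \nabla) X_{n-1}(s)\| \le C \|A^{5/4} X_{n-1}(s)\|^2 = C s^{-3/4} \varphi_{n-1}(s)^2$ (from Lemma \ref{lemma:sobolev} with $r = 5/4$ combined with boundedness of $A^{1/4} \partial_k A^{-5/4}$), together with the spectral bound $\|A^{5/4} \e^{-(t-s)A^2}\| \le C(t-s)^{-5/8}$, one gets
\begin{equation*}
  t^{3/8} \|A^{5/4} B(X_{n-1}, X_{n-1})(t)\| \le C_1 t^{3/8} \int_0^t (t-s)^{-5/8} s^{-3/4} \varphi_{n-1}(s)^2 \,\d s.
\end{equation*}
Splitting $\int_0^t = \int_0^\delta + \int_\delta^t$ with $\delta$ so that $\varphi_{n-1}(s) \le \epsilon$ on $]0, \delta]$, the first integral is bounded by $\epsilon^2 \gamma$ via scale-invariance of the underlying beta-integral, while the second vanishes as $t \to 0$. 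Hence $X_n \in \vB^0$, and closedness yields $X \in \vB^0$.

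Continuity $X \in BC(\bar I, (\dot H^{1/2})^3)$ with $X(0) = u_0$ follows by treating $Y$ and $B(X,X)$ separately: $Y$ is covered by strong continuity of $\e^{-tA^2}$ on $\dot H^{1/2}$, while the spectral bound $\|A^{1/2} \e^{-(t-s)A^2}\| \le C(t-s)^{-1/4}$ and the same Sobolev estimate yield $\|A^{1/2} B(X, X)(t)\| \le C \int_0^t (t-s)^{-1/4} s^{-3/4} \varphi_X(s)^2 \,\d s$, which vanishes as $t \to 0$ by the same splitting argument (now using $X \in \vB^0$). Continuity on the open interval is a routine semigroup calculation, breaking $B(X, X)(t) - B(X, X)(t')$ into a main part controlled by strong continuity plus a small integral tail. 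If moreover $u_0 \in (L^2)^3$, then $Y \in BC(\bar I, (L^2)^3)$ and $\|B(X, X)(t)\| \le C |X|^2 \int_0^t s^{-3/4} \,\d s = C' |X|^2 t^{1/4} \to 0$, promoting $X$ to $BC(\bar I, (H^{1/2})^3)$. The main obstacle throughout is the behavior at $t = 0$: crude $\vB$-norm bounds yield only uniform control, whereas every step beyond the fixed point demands vanishing at $0$; the uniform remedy is the $\varphi$-splitting above, which trades uniform bounds for $\epsilon^2$ bounds on the singular part of the time integral via the scale-invariance of the relevant beta integrals.
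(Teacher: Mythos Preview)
Your proof is correct and follows essentially the same approach as the paper. The only organizational difference is that you establish $X\in\vB^0$ by proving closedness of $\vB^0$ and inducting on the Picard iterates, whereas the paper simply runs the contraction mapping directly on the ball $B^0_{2R}=\vB^0\cap B_{2R}$ (which implicitly uses $B:\vB^0\times\vB^0\to\vB^0$, i.e.\ exactly your splitting argument); likewise, you spell out the $\epsilon$--$\delta$ splitting for the vanishing of $B(X,X)$ at $t=0$ in $\dot H^{1/2}$, while the paper records only the uniform bound \eqref{eq:7b} and asserts the vanishing.
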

  \begin{proof}
    By combining (\ref{eq:8})
and (\ref{eq:12}) we conclude that the requirements (\ref{eq:6}) are  met
in the space $\vB$ provided that the three components of $u_0$
belong to $\dot B^{1/2,0}_{2,\infty}$
and that the parameter $T$ is
taken small enough (to ensure that $|Y|$ is small). Whence there
exists a unique solution $X\in B_{2R}$. We notice that  $X$ is also the unique solution
to the fixed point problem in the ball   $B_{2R}^0:=\vB^0\cap B_{2R}$ (if the components of $u_0$
belong to $
\dot B^{1/2,0}_{2,\infty}$ and $T>0$ is
 small).

Using  the first
inclusion of (\ref{eq:13}) we obtain in particular a unique small time
solution with ``data''  $u_0\in (\dot
H^{1/2})^3$ in $B_{2R}^0$. By an estimate very similar to
(\ref{eq:9}) we obtain the bound
\begin{equation}
  \label{eq:7b}
  \|B(u,v)(t)\|_{(\dot
H^{1/2})^3}\leq \eta |u|\,|v|\mfor t\in I.
\end{equation}
Using  (\ref{eq:7b}) we
see that in fact $B:\vB^0\times \vB^0\to BC(\bar I,(\dot
H^{1/2})^3)$ and that the functions in the range of this map  vanish at
$t=0$. Consequently,  the constructed fixed point $X\in \vB^0$ for $u_0\in (\dot
H^{1/2})^3$  belongs to the space $BC(\bar I,(\dot
H^{1/2})^3)$ and  the data $X(0)=u_0$ is attained continuously.

Moreover we have the bound
\begin{equation}
  \label{eq:7bb}
  \|B(u,v)(t)\|\leq C t^{1/4}|u|\,|v|\mfor t\in I.
\end{equation} So if in addition $u_0\in (L^2)^3$ then $X\in BC(\bar I,(L^2)^3)$ with   the data
$X(0)=u_0$  attained continuously.

\end{proof}

\begin{remark}\label{remark:kato-fu} In \cite{KF1,KF2}
  spaces $\vB$ and  $\vB^0$ similar to ours are  used for treating
  (\ref{eq:equamotiona}). The powers differ from ours:
  $t^{3/8}\to t^{1/8}$ and $A^{5/4}\to A^{3/4}$. These spaces are not
  suitable for the generalized problem (\ref{eq:equamotion}).

\end{remark}

If the data $u_0\in(\dot
H^{r})^3$ for $r\in ]1/2,3/2[$ there is a similar result (in fact, as
the reader will see,
proved very similarly).

We let
$\tilde r=\max(r, 5/4)$ and  introduce $\vB=\vB_{I, s_1,s_2}$  with
$s_1=\tilde r/2-r/4-1/8$ and $s_2=\tilde r$.

We use Lemma \ref{lemma:sobolev} to bound
\begin{equation}
  \label{eq:15}
  \|A^{2s_2-5/2}P(Mu(s)\cdot \nabla)v(s)\|\leq C \|A^{s_2}u(s)\|\, \|A^{s_2}v(s)\|,
\end{equation} and we obtain the following
analogue of (\ref{eq:9}) by splitting $A^{s_2}=A^{5/2-s_2}A^{2s_2-5/2}$:
\begin{equation}\label{eq:9b}
  \|A^{s_2}\e^{-(t-s)A^2}P(Mu(s)\cdot \nabla)v(s)\|\leq
  C_1(t-s)^{s_2/2-5/4}s^{-2s_1}|u|\,|v|.
\end{equation} Consequently we infer that
\begin{equation}
  \label{eq:8b}
 |B(u,v)|\leq \gamma |u|\,|v|;\;\gamma=C_1\sup_{t\in
   I}t^{s_1}\int_0^{t}(t-s)^{s_2/2-5/4}s^{-2s_1}\,\d s=C_2T^{r/4-1/8}.
\end{equation}

The space
$\vB^0$ is now defined to be the space  of functions $v\in \vB$
 obeying
\begin{equation}\label{eq:14b}
  t^{s_1}\|A^{s_2}v(t)\|\to 0 \mfor t\to 0,
\end{equation}
cf. (\ref{eq:14}).

As for $Y:=\e^{-(\cdot)A^2}u_0$ indeed $Y\in \vB^0$ with
\begin{equation}
  \label{eq:16}
  |Y|\leq C \sup_{t\in I}t^{r/4-1/8}\|u_0\|_{\dot
H^{r}}=C T^{r/4-1/8}\|u_0\|_{\dot
H^{r}}.
\end{equation} Here and henceforth we slightly abuse notation by
abbreviating $(\dot
H^{r})^3$ as $\dot
H^{r}$.  Similarly from this point on we shall for convenience frequently  abbreviate $(
H^{r})^3$ as $
H^{r}$ and $(L^2)^3$ as $L^2$, respectively. (Hopefully the
interpretation  will
be obvious in every concrete context.)

\begin{prop}\label{prop:kato small data2} Suppose $u_0\in\dot
H^{r}$ with $r\in ]1/2,3/2[$. Let $\tilde r=\max(r, 5/4)$, $s_1=\tilde
r/2-r/4-1/8$ and  $s_2=\tilde r$.  For any  $T=|I|>0$ small enough (so
that the conditions (\ref{eq:6}) hold for some $R>0$)
 the
integral equation (\ref{eq:5})  has a unique solution  in the ball $B_{2R}\subset\vB$
where $\vB=\vB_{I, s_1,s_2}$
 and
 $Y(t)=\e^{-tA^2}u_0$. This solution $X\in \vB^0\cap BC(\bar I,\dot
H^{r})$ with $X(0)=u_0$. If in addition $u_0\in L^2$ then $X\in
BC(\bar I,H^{r})$ (possibly we need at this point to take $T>0$
smaller if $r\in ]5/4,3/2[$).
  \end{prop}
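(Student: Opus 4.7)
The plan is to mirror the proof of Proposition \ref{prop:kato_small_data} with the modified exponents $(s_1,s_2) = (\tilde r/2 - r/4 - 1/8,\,\tilde r)$. The bilinear bound is already in hand as (\ref{eq:8b}), producing contraction constant $\gamma = C_2 T^{r/4-1/8}$. First I would verify that $Y(t) = \e^{-tA^2} u_0$ lies in $\vB$ with the norm estimate (\ref{eq:16}). A direct spectral calculation yields
$$t^{s_1}\|A^{s_2}\e^{-tA^2} u_0\| \leq Ct^{s_1 + (r-s_2)/2}\|u_0\|_{\dot H^r} = Ct^{r/4-1/8}\|u_0\|_{\dot H^r},$$
and the exponent $r/4-1/8$ is strictly positive since $r > 1/2$. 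Choosing $R := CT^{r/4-1/8}\|u_0\|_{\dot H^r}$, both $|Y|\leq R$ and $\kappa = 4\gamma R = O(T^{r/2-1/4}) \to 0$ as $T \to 0$, so (\ref{eq:6}) is met for small $T$ and the abstract scheme of Subsection \ref{Abstract scheme} produces the unique fixed point $X \in B_{2R} \subset \vB$.

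Next I would verify $X \in \vB^0$. The pointwise bound just displayed already implies $Y \in \vB^0$. For the bilinear term, applying the substitution $s=t\sigma$ to (\ref{eq:9b}) in fact gives the pointwise estimate
$$t^{s_1}\|A^{s_2} B(u,v)(t)\| \leq C_3\, t^{r/4-1/8} |u|\,|v|,$$
which tends to $0$ as $t\to 0^+$. Hence $B$ maps $\vB\times \vB$ into $\vB^0$ and so $X = Y + B(X,X) \in \vB^0$.

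To upgrade to continuity in $\dot H^r$, I would derive the analogue of (\ref{eq:7b}): $\|A^r B(u,v)(t)\|_{\dot H^r} \leq \eta(t)|u|\,|v|$ with $\eta(t)\to 0$ as $t\to 0^+$. The argument substitutes $A^r$ for $A^{s_2}$ under the integrand in (\ref{eq:9b}), splits $A^r = A^{r-2s_2+5/2}\cdot A^{2s_2-5/2}$, controls the first factor times $\e^{-(t-s)A^2}$ by spectral calculus, and applies (\ref{eq:15}). The resulting Beta integral produces a positive power of $t$ in front of $|u||v|$. Since $Y$ is manifestly norm-continuous on $\bar I$ with $Y(0)=u_0$, this shows $B(X,X)\in C(\bar I,\dot H^r)$ vanishing at $t=0$, and therefore $X\in BC(\bar I,\dot H^r)$ with $X(0) = u_0$.

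If in addition $u_0 \in L^2$, then $\e^{-tA^2}u_0 \in BC(\bar I, L^2)$ by the contraction semigroup property. An analogue of (\ref{eq:7bb}) establishing $\|B(u,v)(t)\| \leq Ct^{\alpha}|u||v|$ for some $\alpha > 0$ gives $B(X,X) \in BC(\bar I,L^2)$ vanishing at $0$, hence $X \in BC(\bar I, H^r)$ since $\dot H^r \cap L^2 = H^r$ for $r < 3/2$. The main technical point I would watch, and the origin of the parenthetical warning for $r \in \,]5/4,3/2[$, is this last $L^2$ bound: in that range $s_2 = r > 5/4$ so the Sobolev balancing in (\ref{eq:15}) and the spectral exponents needed to reach the $L^2$ norm are less favorable than for $r \leq 5/4$, and securing integrability of the Beta kernel together with a positive power of $t$ may require a smaller choice of $T$; otherwise the structure of the proof is unchanged.
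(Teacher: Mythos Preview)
Your fixed-point argument and the verification that $X\in\vB^0$ are fine and match the paper. Two points, however, do not go through as written.

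\textbf{The $\dot H^r$ continuity.} You claim the Beta integral in the bound for $\|A^rB(u,v)(t)\|$ produces a strictly positive power of $t$. It does not. With the splitting $A^r=A^{r-(2s_2-5/2)}A^{2s_2-5/2}$ and (\ref{eq:15}) one gets
\[
\|B(u,v)(t)\|_{\dot H^r}\leq C|u|\,|v|\int_0^t(t-s)^{-(r-2s_2+5/2)/2}s^{-2s_1}\,\d s,
\]
and in both regimes $r\le 5/4$ and $r>5/4$ the homogeneity gives exactly $t^0$; the paper notes this explicitly (see the parenthetical remark after (\ref{eq:7c})). So $\eta$ is a constant, not $o(1)$. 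The vanishing of $B(X,X)(t)$ in $\dot H^r$ as $t\to 0$ must instead come from $X\in\vB^0$: since $s^{s_1}\|A^{s_2}X(s)\|\to 0$, the integrand acquires the needed smallness. You already have $X\in\vB^0$, so the repair is immediate, but the mechanism you describe is not the correct one.

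\textbf{The $L^2$ bound for $r\in\,]5/4,3/2[$.} Here there is a genuine gap. When $s_2=r>5/4$, (\ref{eq:15}) only places $P(Mu\cdot\nabla)v$ in $\dot H^{2r-5/2}$, not in $L^2$, and no amount of shrinking $T$ changes this: the obstruction is the Sobolev balance, not the size of the interval. Your suggestion that ``a smaller choice of $T$'' fixes the Beta-kernel integrability misreads the difficulty. The paper does \emph{not} argue directly. Instead it invokes the already-proved case $r=5/4$: since $u_0\in H^r\subseteq H^{5/4}$, one runs the $\tilde r=s_2=5/4$ fixed point (for which the $L^2$ bound (\ref{eq:7bbb}) is available), and then uses the iteration representation (\ref{eq:2}) to identify the $\dot H^r$--solution and the $\dot H^{5/4}$--solution on their common interval. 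The ``possibly smaller $T$'' in the statement refers to the fact that the contraction interval for the $r=5/4$ problem may differ from the one already chosen for $\dot H^r$.
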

  \begin{proof} Due to (\ref{eq:8b}) and the fact that $Y\in \vB^0$
    indeed there exists a unique solution $X\in B_{2R}$ for $T>0$
    small enough, and we  notice that  $X$ is also the unique solution
to the fixed point problem in the ball    $B_{2R}^0:=\vB^0\cap B_{2R}$.

   With the  convention alluded to above the analogue of (\ref{eq:7b}) reads
\begin{equation}
  \label{eq:7c}
  \|B(u,v)(t)\|_{\dot
    H^{r}}\leq \eta |u|\,|v|\mfor t\in I,
\end{equation} and we infer (as before) that   $B:\vB^0\times \vB^0\to BC( \bar I,\dot
H^{r})$ and that the functions in the range of this map vanish at
 $t=0$. (Note incidently that the constant $\eta$ of (\ref{eq:7c})
can be chosen independently of $T$ but not as a vanishing power of $T$
as in (\ref{eq:8b}) and (\ref{eq:16}).) Whence indeed $X\in \vB^0\cap BC(\bar I,\dot
H^{r})$ with $X(0)=u_0$.

Moreover for $r\in ]1/2,5/4]$ we have the bound
\begin{equation}
  \label{eq:7bbb}
  \|B(u,v)(t)\|\leq C t^{r/2}|u|\,|v|\mfor t\in I.
\end{equation} So if $r\in ]1/2,5/4]$  and in addition $u_0\in L^2$ then   $X\in BC(\bar I,L^2)$ with   the data $X(0)=u_0$
attained continuously. Whence $X\in BC(\bar I,H^r)$ for $u_0\in
H^r$. In fact this holds for any  $r\in ]1/2,3/2[$ (possibly by taking
a smaller interval $I$ if $r\in ]5/4,3/2[$). In the case
$r\in ]5/4,3/2[$  we can  obtain the result from the case $r=5/4$ by invoking
the embedding $H^r \subseteq H^{5/4}$ and using
the representation (\ref{eq:2}) of the solutions with data in $\dot
H^r $ and $\dot H^{5/4}$, respectively. We deduce that  for  data in $\dot
H^r \cap \dot H^{5/4}$  the two
constructed
solutions, say $X_1\in\vB_1$ and $X_2\in\vB_2$,   coincide on their common interval of
definition  $I_1\cap
I_2$.

\end{proof}

\subsection{Local solvability in $H^r$,  $r\in [5/4, \infty[$} \label{Local solvability in Hr,  rin 5/4, infty}
In this subsection we shall study local solutions in $H^r$ for $r\in [5/4,
\infty[$. The method of proof will be similar to that of Subsection \ref{Local solvability in dot r
and r for
large data}. In particular our constructions will be based on the
following modification of (\ref{eq:3}) (for simplicity we shall  use the
same notation).
\begin{equation}
  \label{eq:3mod}
  \|v\|_{\zeta, \theta,I,s_1,s_2}:=\sup_{t\in I}\e^{-\zeta(t)}t^{s_1}\|\e^{\theta(t)A}\inp{A}^{s_2}v(t)\|.
\end{equation} Again we consider here the case $\zeta=0$ and
$\theta=0$ only. Let $\tilde r\in [5/4,3/2[$ be arbitrarily given such
that $r\geq\tilde r$. Then the  parameters $s_1$ and $s_2$ in
(\ref{eq:3mod}) are chosen as follows:
\begin{equation}
  \label{eq:27}
  s_1=\tilde r/4-1/8\mand s_2=r,
\end{equation} and $\vB=\vB_{I,s_1,s_2}$ is the class of
functions $I\ni t\to v(t)\in  H^{s_2}$ for which the expression
$t^{s_1}\inp{A}^{s_2}v(t)$ defines an element in $BC(I,L^2)$;
$I=]0,T]$.

To bound $B:\vB\times \vB\to \vB$ we let
\begin{equation}
  \label{eq:28}
 \bar r=5/2-\tilde r,
\end{equation}
and split
$\inp{A}^{s_2}=\inp{A}^{\bar r}\inp{A}^{r-\bar r}$. Using again Lemma
\ref{lemma:sobolev} we then obtain
\begin{equation}\label{eq:9bmod}
  \|\inp{A}^{s_2}\e^{-(t-s)A^2}P(Mu(s)\cdot \nabla)v(s)\|\leq
  C_1\inp{T}^{\bar r/2}(t-s)^{-\bar r/2}s^{-2s_1}|u|\,|v|.
\end{equation} Consequently we infer that
\begin{equation}
  \label{eq:8bmod}
 |B(u,v)|\leq \gamma |u|\,|v|;\;\gamma=C_1\inp{T}^{\bar r/2}\sup_{t\in
   I}t^{s_1}\int_0^{t}(t-s)^{-\bar r/2}s^{-2s_1}\,\d s=C_2\inp{T}^{\bar r/2}T^{s_1}.
\end{equation}

The space
$\vB^0$ is the subclass of $v\in \vB$ obeying
\begin{equation}\label{eq:14bmod}
  t^{s_1}\|\inp{A}^{s_2}v(t)\|\to 0 \mfor t\to 0.
\end{equation}

Now suppose $u_0\in H^r$. Then clearly $Y:=\e^{-(\cdot)A^2}u_0\in \vB^0$ with
\begin{equation}
  \label{eq:16mod}
  |Y|\leq C \sup_{t\in I}t^{s_1}\|u_0\|_{
H^{r}}=C T^{s_1}\|u_0\|_{
H^{r}}.
\end{equation}

Next letting (as before) $B_{2R}:=\{X\in \vB|\,
|X|\leq 2 |Y|\}$ and $B_{2R}^0:=\vB^0\cap B_{2R}$ we conclude from
(\ref{eq:8bmod}) and the fact that $Y\in \vB^0$ that indeed the contraction condition (\ref{eq:6})
for the map $X\to Y+B(X,X)$ restricted to either  $B_{2R}$ or to
$B_{2R}^0$  is
valid for any $T>0$ small enough. Consequently the common fixed point $X\in \vB^0$. Moreover
\begin{equation}
  \label{eq:7cmod}
  \|B(u,v)(t)\|_{
    H^{r}}\leq \eta\inp{T}^{\bar r/2} |u|\,|v|\mfor t\in I,
\end{equation} and we infer (as before) that   $B:\vB^0\times \vB^0\to BC( \bar I,
H^{r})$ and that the functions in the range of this map vanish at
 $t=0$. We conclude:

\begin{prop}\label{prop:kato small data2mod} Suppose $u_0\in
H^{r}$ with $r\in [5/4,\infty[$. Let $\tilde r\in [5/4,3/2[$ be arbitrarily given such
that $r\geq\tilde r$. Let $s_1$ and $s_2$ be given as in
(\ref{eq:27}).  For any  $T=|I|>0$ small enough (so that the
conditions (\ref{eq:6})
hold for some $R>0$)
 the
integral equation (\ref{eq:5})  has a unique solution  in the ball  $
B_{2R}\subset \vB$
where $\vB=\vB_{I, s_1,s_2}$
 and
 $Y(t)=\e^{-tA^2}u_0$. This solution $X\in \vB^0\cap BC(\bar I,
H^{r})$ with $X(0)=u_0$.
  \end{prop}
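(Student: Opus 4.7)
The plan is to apply the abstract scheme of Subsection \ref{Abstract scheme} to the reformulation $X = Y + B(X,X)$ of the integral equation (\ref{eq:1}), working in the Banach space $\vB = \vB_{I, s_1, s_2}$ with $s_1 = \tilde r/4 - 1/8$ and $s_2 = r$, and with $Y(t) = \e^{-tA^2} u_0$. All the substantive estimates have been assembled in the paragraphs preceding the statement: the bilinear bound (\ref{eq:8bmod}) gives a contraction constant $\gamma \leq C_2 \inp{T}^{\bar r/2} T^{s_1}$ with $s_1 > 0$ (since $\tilde r \geq 5/4$), while (\ref{eq:16mod}) shows $Y \in \vB^0 \subset \vB$ with $|Y| \leq R := C\,T^{s_1}\|u_0\|_{H^r}$. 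For $T>0$ sufficiently small the contraction condition $4\gamma R < 1$ of (\ref{eq:6}) is satisfied, so the abstract scheme delivers a unique fixed point $X \in B_{2R}$.

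To upgrade $X$ to $\vB^0$, I would first note that $Y \in \vB^0$ by the strong continuity of $\e^{-tA^2}$ on $H^r$, and then observe that inspection of (\ref{eq:9bmod}) yields
\begin{equation*}
t^{s_1}\|\inp{A}^{s_2} B(u,v)(t)\| \leq C \inp{T}^{\bar r/2}\, t^{s_1}\, |u|\,|v|,
\end{equation*}
since the time integral $\int_0^t (t-s)^{-\bar r/2} s^{-2s_1}\,\d s$ is in fact independent of $t$ (the exponent $1 - \bar r/2 - 2s_1$ vanishes by the choice of $s_1$ and $\bar r = 5/2 - \tilde r$). Because $s_1 > 0$, the right-hand side tends to $0$ as $t \to 0$, so $B$ maps $\vB \times \vB$ into $\vB^0$. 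Consequently $B_{2R}^0 := \vB^0 \cap B_{2R}$ is invariant under $X \mapsto Y + B(X,X)$, and the unique fixed point in $B_{2R}$ must lie in $B_{2R}^0$.

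Finally, to establish $X \in BC(\bar I, H^r)$ with $X(0) = u_0$, I would invoke (\ref{eq:7cmod}): combined with the $\vB^0$ decay just established, this yields $B(X,X) \in C(\bar I, H^r)$ with $B(X,X)(0) = 0$, by the same reasoning already used in the proof of Proposition \ref{prop:kato small data2}. Since $Y = \e^{-(\cdot)A^2}u_0 \in BC(\bar I, H^r)$ with $Y(0) = u_0$, the continuity and the correct initial value propagate to $X = Y + B(X,X)$.

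The main obstacle is essentially absent here, since all the hard bilinear and data estimates are in place before the statement; the proof amounts to checking that the abstract scheme applies with the indicated parameters. The only delicate point is the invariance of $\vB^0$ under the fixed-point map, which rests on the strict positivity $s_1 > 0$ guaranteed by the assumption $\tilde r \geq 5/4$, and on the precise cancellation $1 - \bar r/2 - 2s_1 = 0$ built into the choice of $(s_1, \tilde r, \bar r)$.
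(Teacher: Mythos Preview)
Your proposal is correct and follows the same route as the paper's own argument, which is already laid out in the paragraphs leading up to the statement: the bilinear bound (\ref{eq:8bmod}), the data bound (\ref{eq:16mod}), the invariance of $\vB^0$, and the continuity bound (\ref{eq:7cmod}) are exactly the ingredients the paper assembles and then summarises with ``We conclude:''. Your one addition---making explicit that $B$ maps all of $\vB\times\vB$ into $\vB^0$ because the Beta integral is scale-invariant and $s_1>0$---is a slight sharpening of what the paper does (it only asserts invariance of $B_{2R}^0$), but it is not a different approach.
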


  \begin{remarks}
    \label{remarks:uniq}
    \begin{enumerate}[\normalfont 1)]
    \item \label{item:1} Clearly we may take $\tilde r=r$ in
      Proposition \ref{prop:kato small data2mod} if $r\in
      [5/4,3/2[$. In that case the (small) $T$-dependence of the bounds
      (\ref{eq:16}) and (\ref{eq:8b}) coincides with that of
      (\ref{eq:16mod}) and (\ref{eq:8bmod}), respectively.
\item \label{item:2}  Although there are different spaces involved in
Propositions \ref{prop:kato_small_data}--\ref{prop:kato small data2mod}
the constructed solutions coincide on any common interval of
definition, cf.  the last
argument of the proof of  Proposition
\ref{prop:kato small data2}.
    \end{enumerate}
\end{remarks}

  \subsection{Global solvability in $\dot H^{1/2}$ and $H^{1/2}$ for
    small data} \label{Global solvability in dot 1/2 for small data}
  If the data in $\dot H^{1/2}$  is small we can improve on the
  conclusion of Proposition \ref{prop:kato_small_data}  to obtain  a
  global solution. The proof is similar.

\begin{prop}\label{prop:kato big data} Suppose $u_0\in\dot
H^{1/2}$.
  Then the
integral equation (\ref{eq:5})  has a unique solution  in the ball
$B_{2R}\subset \vB$
where $\vB=\vB_{I, s_1,s_2}$ with $I=]0,\infty[$,
$s_1= 3/8$, $s_2= 5/4$ and
 $Y(t)=\e^{-tA^2}u_0$ provided  that $\|u_0\|_{\dot H^{1/2}}$ is
 sufficiently small (so that for some $R>0$  and with
$\gamma$ given by (\ref{eq:10}) the conditions (\ref{eq:6}) hold). This solution $X\in \vB^0\cap BC([0,\infty[,\dot
H^{1/2})$ with $X(0)=u_0$. If in addition $u_0\in L^2$ then $X\in C([0,\infty[,H^{1/2})$.
  \end{prop}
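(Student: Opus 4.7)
The plan is to run the abstract fixed-point scheme of Subsection \ref{Abstract scheme} on the full half-line $I=]0,\infty[$ exactly as in Proposition \ref{prop:kato_small_data}, exploiting the crucial observation that the constant $\gamma$ in \eqref{eq:10} is \emph{independent of} $T$. The only new ingredient needed is a uniform-in-$t$ bound on $|Y|$ in terms of $\|u_0\|_{\dot H^{1/2}}$. Writing $A^{5/4}\e^{-tA^2}=\parb{A^{3/4}\e^{-tA^2}}A^{1/2}$ and using the spectral theorem bound $\|A^{3/4}\e^{-tA^2}\|_{\vB(L^2)}\leq Ct^{-3/8}$, one immediately obtains
\begin{equation*}
 t^{3/8}\|A^{5/4}\e^{-tA^2}u_0\|\leq C\|A^{1/2}u_0\|=C\|u_0\|_{\dot H^{1/2}},
\end{equation*}
so $|Y|\leq C\|u_0\|_{\dot H^{1/2}}$. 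If $\|u_0\|_{\dot H^{1/2}}$ is small enough that $4\gamma C\|u_0\|_{\dot H^{1/2}}<1$, the conditions \eqref{eq:6} hold with $R=C\|u_0\|_{\dot H^{1/2}}$, and the abstract scheme produces a unique fixed point $X\in B_{2R}\subset\vB_{I,3/8,5/4}$.

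Next I would verify $X\in\vB^0$. By a dominated-convergence argument applied componentwise in Fourier space (or equivalently by density of $\vS$ in $\dot H^{1/2}$ together with the uniform bound just proved), one sees $Y\in\vB^0$. Then, exactly as in the proof of Proposition \ref{prop:kato_small_data}, the bilinear estimate \eqref{eq:9} shows that $B$ maps $\vB\times\vB$ into the subspace of functions vanishing as $t\to0$, i.e. into $\vB^0$; consequently the iterates $X_n$ defined by $X_0=Y$, $X_n=Y+B(X_{n-1},X_{n-1})$ all lie in the closed ball $B_{2R}^0=\vB^0\cap B_{2R}$, and so does their limit $X$.

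For continuity with values in $\dot H^{1/2}$ up to $t=0$, the argument is the one used for (\ref{eq:7b}): an estimate analogous to \eqref{eq:9} but with the left-hand side measured in $\dot H^{1/2}$ gives, for $u,v\in\vB^0$,
\begin{equation*}
 \|B(u,v)(t)\|_{\dot H^{1/2}}\leq \eta\,|u|\,|v|\quad\text{with}\quad B(u,v)(t)\to 0\text{ in }\dot H^{1/2}\text{ as }t\to0.
\end{equation*}
Since $Y\in BC([0,\infty[,\dot H^{1/2})$ with $Y(0)=u_0$, we conclude $X=Y+B(X,X)\in BC([0,\infty[,\dot H^{1/2})$ and $X(0)=u_0$. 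The addendum for $u_0\in L^2$ follows from \eqref{eq:7bb}: $\|B(u,v)(t)\|\leq Ct^{1/4}|u|\,|v|$ shows that $B(X,X)\in C([0,\infty[,L^2)$ (bounded on every compact subinterval), and since $\e^{-tA^2}u_0\in C([0,\infty[,L^2)$, we obtain $X\in C([0,\infty[,L^2)\cap C([0,\infty[,\dot H^{1/2})=C([0,\infty[,H^{1/2})$.

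The main obstacle, compared with the local result, is strictly the uniform-in-$t$ control of $|Y|$; once that is in hand, the rest is a near-verbatim repetition of the proof of Proposition \ref{prop:kato_small_data}, with a single mild subtlety in the last step, namely that the $L^2$-bound $Ct^{1/4}|u|\,|v|$ grows with $t$ and therefore only yields continuity of $X$ into $L^2$ on compact time intervals, not boundedness—which is precisely what the statement $X\in C([0,\infty[,H^{1/2})$ (as opposed to $BC$) asserts.
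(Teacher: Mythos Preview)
Your proof is correct and follows essentially the same approach the paper intends: the paper itself gives no explicit proof but simply remarks that ``the proof is similar'' to that of Proposition~\ref{prop:kato_small_data}, and your argument spells out precisely those details, correctly identifying the key point that $\gamma$ in \eqref{eq:10} is $T$-independent and that $|Y|\leq C\|u_0\|_{\dot H^{1/2}}$ uniformly in $t$. Your observation about the $t^{1/4}$ growth in \eqref{eq:7bb} yielding only $C$ rather than $BC$ in the $L^2$ statement is also on target.
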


\section{Analyticity bounds for small times} \label{Analyticity bounds
  for small times}
In this section we shall study analyticity properties of the
short-time solutions of Propositions \ref{prop:kato_small_data}--\ref{prop:kato small data2mod}. This will be done by using more general
spaces with norms given by (\ref{eq:3}) or (\ref{eq:3mod})  and modifying the
proofs of Section \ref{Integral equation}.
\subsection{Local analyticity bounds in $\dot H^r$  and  $H^r$ ,  $r\in [1/2, 3/2[$} \label{Analyticity bounds in
    dot r and  r
    for small times and for large data}
In this subsection we specify the
functions $\zeta$ and $\theta$  in the norm
(\ref{eq:3}) in terms of a parameter $\lambda\geq 0$ as
\begin{equation}
  \label{eq:20}
  \zeta=\lambda^2/4+(\tilde r-r)\ln
\inp{\lambda}\mand \theta(t)=\lambda\sqrt t;
\end{equation}
  here $\tilde r$ is given as in Proposition
\ref{prop:kato small data2} (also for $r=1/2$). Let $s_1$ and $s_2$ be
given as  in Proposition
\ref{prop:kato small data2}, and let again $I=]0,T]$.

For applications in Subsection \ref{Improved analyticity
    bounds in dot r and  r, r>
  1/2,
    for small times and for large data} we shall  be concerned below
  with bounding various quantities independently of the parameter
  $\lambda\geq0$ (rather than just proving Theorem
  \ref{prop:ana_small_times1} stated below). We shall use the following elementary bound (which follows from the
spectral theorem).
\begin{lemma}\label{Lemma:spect_bnd} For any $\alpha\geq 0$ there
  exists a constant $C\geq 0$ such that for all
  $f\in L^2$
  \begin{equation}
    \label{eq:17}
    \sup_{\lambda, t\geq 0}\ \inp{\lambda}^{-\alpha}\e^{-\lambda^2/4}\|\parb{\sqrt t A}^\alpha\e^{\lambda\sqrt
      tA}\e^{-tA^2}f\|\leq C\|f\|.
  \end{equation}
\end{lemma}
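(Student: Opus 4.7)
The plan is to reduce the operator inequality to a scalar estimate via the spectral theorem and then a standard completion of the square. Since $\e^{-tA^2}$, $\e^{\lambda\sqrt t A}$ and $(\sqrt tA)^\alpha$ are all functions of $A$, the norm on the left of \eqref{eq:17} is bounded by the $L^\infty$-norm of the symbol
\[
m(s,\lambda):=\inp{\lambda}^{-\alpha}\e^{-\lambda^2/4}s^\alpha \e^{\lambda s}\e^{-s^2},\qquad s:=\sqrt t\,|\xi|\geq 0,\ \lambda\geq 0,
\]
times $\|f\|$. Thus the claim reduces to showing $\sup_{s,\lambda\geq 0}m(s,\lambda)<\infty$.

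The first step is to complete the square: $\lambda s-s^2=\lambda^2/4-(s-\lambda/2)^2$, so that the factor $\e^{-\lambda^2/4}$ cancels and
\[
m(s,\lambda)=\inp{\lambda}^{-\alpha}s^\alpha\e^{-(s-\lambda/2)^2}.
\]
The next step is to split $s^\alpha$ via the elementary inequality $s^\alpha\leq 2^\alpha\bigl(|s-\lambda/2|^\alpha+(\lambda/2)^\alpha\bigr)$ (valid for $\alpha\geq 0$), which gives
\[
m(s,\lambda)\leq 2^\alpha\inp{\lambda}^{-\alpha}|s-\lambda/2|^\alpha\e^{-(s-\lambda/2)^2}+2^\alpha\inp{\lambda}^{-\alpha}(\lambda/2)^\alpha\e^{-(s-\lambda/2)^2}.
\]
The first summand is bounded by $2^\alpha\sup_{u\in\R}|u|^\alpha\e^{-u^2}$, which is finite; the second summand is bounded by $2^\alpha\inp{\lambda}^{-\alpha}(\lambda/2)^\alpha\leq 1$. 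This yields a uniform constant $C=C(\alpha)$ as required.

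There is really no obstacle here; the only slightly non-obvious point is the observation that the explicit form of the prefactors $\inp{\lambda}^{-\alpha}$ and $\e^{-\lambda^2/4}$ are precisely tailored so that after completing the square one is left with a bounded Gaussian profile shifted by $\lambda/2$, and the polynomial factor $s^\alpha$ can be absorbed against $\inp{\lambda}^{-\alpha}$ using the triangle inequality. The spectral theorem reduction is transparent since all operators in sight are functions of the scalar multiplication operator $A=\sqrt{-\triangle}$ on the Fourier side.
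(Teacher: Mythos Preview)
Your proof is correct and is precisely the argument the paper has in mind: the text merely says the lemma ``follows from the spectral theorem'' without spelling out the scalar estimate, and your completion-of-the-square computation together with the split $s^\alpha\le 2^\alpha(|s-\lambda/2|^\alpha+(\lambda/2)^\alpha)$ is exactly the natural way to supply the missing details.
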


Due to (\ref{eq:17}) we can estimate the following norm of
$Y=\e^{-(\cdot)A^2}u_0$, $u_0\in \dot H^r$.
\begin{equation}
  \label{eq:18}
  |Y|=\|Y\|_{\zeta, \theta,I,s_1,s_2}\leq C_1 T^{r/4-1/8}\|A^ru_0\|,
\end{equation} where  the constant $C_1$ is  independent of
$\lambda\geq0$.

Let $\vB^0=\vB^0_{\zeta, \theta,I,s_1,s_2}$ be the space  of functions $v\in \vB:=\vB_{\zeta, \theta,I,s_1,s_2}$
 obeying
\begin{equation}\label{eq:14bb}
  t^{s_1}\|\e^{\lambda\sqrt
      tA}A^{s_2}v(t)\|\to 0 \mfor t\to 0,
\end{equation} Obviously it follows from (\ref{eq:18}) that
$Y\in\vB^0$ if $r>1/2$. However this is also true for $r=1/2$ which
 follows from the same bound and a simple approximation
 argument (using for instance that $H^{5/4}$ is dense in $\dot H^{1/2}$). Whence indeed
 \begin{equation}
   \label{eq:19}
   Y=\e^{-(\cdot)A^2}u_0\in\vB^0\text{ for }u_0\in \dot H^r,\ r \in [1/2, 3/2[.
 \end{equation}

We have the following generalization of  Propositions \ref{prop:kato_small_data} and
\ref{prop:kato small data2} (abbreviating as before the norm on  $\vB$
as $|\cdot|$). Notice that the solutions of Theorem
\ref{prop:ana_small_times1} coincide with
those of Propositions \ref{prop:kato_small_data} and
\ref{prop:kato small data2} for $T>0$  small enough, cf. Remark
\ref{remarks:uniq} \ref{item:2}.
\begin{thm}
  \label{prop:ana_small_times1} Let   $\lambda\geq0$ and $u_0\in\dot
H^{r}$ with $r\in [1/2,3/2[$ be given. Let $\tilde r=\max(r, 5/4)$, $s_1=\tilde
r/2-r/4-1/8$ and  $s_2=\tilde r$.  For any  $T=|I|>0$ small enough (so
that the conditions (\ref{eq:6}) hold for some $R>0$)
 the
integral equation (\ref{eq:5})  has a unique solution  in the ball
$B_{2R}\subset \vB$
where $\vB=\vB_{\zeta, \theta,I, s_1,s_2}$
 and
 $Y(t)=\e^{-tA^2}u_0$. This solution $X\in \vB^0\subseteq \vB$, and it obeys that $\e^{\lambda\sqrt{
      (\cdot)}A}X\in BC(\bar I,\dot
H^{r})$ with $X(0)=u_0$. If in addition $u_0\in L^2$ then $\e^{\lambda\sqrt{
      (\cdot)}A}X\in
BC(\bar I,H^{r})$ (possibly we need at this point to take $T>0$
smaller if $r\in ]5/4,3/2[$).
\end{thm}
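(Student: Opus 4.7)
The plan is to essentially repeat the fixed point argument of Subsection \ref{Abstract scheme}, reducing everything to the unweighted case of Propositions \ref{prop:kato_small_data}--\ref{prop:kato small data2} but with the additional ``analyticity'' weight $\e^{\lambda\sqrt{t}A}$ built into the norm of $\vB$. The choice of $\zeta$ in \eqref{eq:20} is engineered to absorb the $\lambda$-dependent constants that appear.

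First I would establish the data bound $|Y| \leq C_1 T^{r/4-1/8}\|A^r u_0\|$ announced in \eqref{eq:18}. Writing $\e^{\lambda\sqrt{t}A}A^{s_2}\e^{-tA^2}u_0 = A^{s_2-r}\e^{\lambda\sqrt{t}A}\e^{-tA^2}A^r u_0$ and applying Lemma \ref{Lemma:spect_bnd} with $\alpha=s_2-r=\tilde r -r$ yields a factor $C\inp{\lambda}^{\tilde r-r}\e^{\lambda^2/4}t^{(r-s_2)/2}$, which combined with $t^{s_1}$ produces the desired time exponent $t^{r/4-1/8}$; the prefactor $\inp{\lambda}^{\tilde r-r}\e^{\lambda^2/4}$ is exactly what $\e^{-\zeta}$ kills. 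Then $Y\in\vB^0$ follows directly from this bound if $r>1/2$, while for $r=1/2$ one approximates $u_0\in\dot H^{1/2}$ by elements of $H^{5/4}$ and uses the same estimate, exactly as in the unweighted argument.

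The core technical step is the bilinear bound $|B(u,v)|\leq\gamma|u|\,|v|$. Here I would exploit the basic operator inequality
\begin{equation*}
  \e^{\lambda\sqrt{t}A}\leq \e^{\lambda\sqrt{t-s}A}\,\e^{\lambda\sqrt{s}A},
\end{equation*}
valid since $A\geq 0$ and $\sqrt{t}\leq\sqrt{t-s}+\sqrt{s}$. Splitting $A^{s_2}=A^{5/2-s_2}A^{2s_2-5/2}$ as in \eqref{eq:9b}, this lets us rewrite
\begin{equation*}
  \e^{\lambda\sqrt{t}A}A^{s_2}\e^{-(t-s)A^2}P(Mu(s)\cdot\nabla)v(s)
\end{equation*}
as a product of the operator $A^{5/2-s_2}\e^{\lambda\sqrt{t-s}A}\e^{-(t-s)A^2}$ (bounded on $L^2$ by Lemma \ref{Lemma:spect_bnd} with norm $\leq C\inp{\lambda}^{5/2-s_2}\e^{\lambda^2/4}(t-s)^{s_2/2-5/4}$) applied to $A^{2s_2-5/2}\e^{\lambda\sqrt{s}A}P(Mu(s)\cdot\nabla)v(s)$. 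The latter is bounded via the weighted analogue of Lemma \ref{lemma:sobolev},
\begin{equation*}
  \|A^{2s_2-5/2}\e^{\mu A}(fg)\|\leq C\|A^{s_2}\e^{\mu A}f\|\,\|A^{s_2}\e^{\mu A}g\|,\qquad \mu=\lambda\sqrt{s},
\end{equation*}
which follows from the pointwise inequality $\e^{\mu|\xi|}\leq \e^{\mu|\xi-\eta|}\e^{\mu|\eta|}$ in the convolution formula for $\widehat{fg}$, reducing the estimate back to Lemma \ref{lemma:sobolev} applied to nonnegative functions whose Fourier transforms are $\e^{\mu|\cdot|}|\hat f|$ and $\e^{\mu|\cdot|}|\hat g|$. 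The product of these two estimates, multiplied by $\e^{-\zeta}t^{s_1}$ and integrated via the Beta-function identity $\int_0^t(t-s)^{s_2/2-5/4}s^{-2s_1}\,\d s = Ct^{r/4-1/8}$, gives
\begin{equation*}
  |B(u,v)|\leq C\,\e^{\lambda^2/2}\inp{\lambda}^{5/2-r}\,T^{r/4-1/8}\,|u|\,|v|,
\end{equation*}
a $\gamma$ which, for $\lambda$ fixed, tends to $0$ with $T$ when $r>1/2$, and which for $r=1/2$ is combined with the fact that $\|Y\|_{\vB_I}\to 0$ as $|I|\to 0$ (since $Y\in\vB^0$).

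With these two estimates, the abstract fixed point scheme of Subsection \ref{Abstract scheme} gives a unique $X\in B_{2R}\subset \vB$ for $T$ small. That $X\in\vB^0$ and $\e^{\lambda\sqrt{(\cdot)}A}X\in BC(\bar I,\dot H^r)$ with $X(0)=u_0$ is obtained, just as in the proofs of Propositions \ref{prop:kato_small_data}--\ref{prop:kato small data2}, by the complementary bilinear bound $\|\e^{\lambda\sqrt{t}A}B(u,v)(t)\|_{\dot H^r}\leq \eta|u|\,|v|$ with $\eta$ locally integrable (and in fact $\to 0$ as $t\to 0$), giving that $B$ maps $\vB^0\times\vB^0$ into $BC(\bar I,\dot H^r)$ with functions vanishing at $t=0$, while $\e^{\lambda\sqrt{(\cdot)}A}Y\in BC(\bar I,\dot H^r)$ follows from dominated convergence together with Lemma \ref{Lemma:spect_bnd}. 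The $L^2$ statement comes from the weighted analogue of \eqref{eq:7bbb}, i.e.\ $\|\e^{\lambda\sqrt{t}A}B(u,v)(t)\|\leq Ct^{r/2}|u|\,|v|$, which for $r\in]1/2,5/4]$ gives $BC(\bar I,L^2)$ directly; the range $r\in]5/4,3/2[$ is handled by the same embedding-and-uniqueness argument used at the end of the proof of Proposition \ref{prop:kato small data2}.

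The main obstacle is the bilinear estimate, specifically arranging the analyticity-weighted Sobolev product estimate so that the cumulative $\lambda$-prefactors multiply to the form $\e^{\lambda^2/2}\inp{\lambda}^{5/2-r}$ that is finite for each fixed $\lambda$. Everything else is a routine modification of the unweighted argument once the operator inequality for $\e^{\lambda\sqrt{t}A}$ and the Fourier-space weighted product estimate are in hand.
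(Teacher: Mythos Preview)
Your proposal is correct and follows essentially the same approach as the paper: the bilinear bound via the Fourier-space triangle inequality $\e^{\mu|\xi|}\leq \e^{\mu|\xi-\eta|}\e^{\mu|\eta|}$ together with $\sqrt{t}\leq\sqrt{t-s}+\sqrt{s}$ and Lemma~\ref{Lemma:spect_bnd} is exactly what the paper does in \eqref{eq:15ba}--\eqref{eq:8bba}, and the $\dot H^r$ and $L^2$ continuity statements are handled just as you indicate, via the weighted analogues \eqref{eq:7cq} and \eqref{eq:7bbbq} of \eqref{eq:7c} and \eqref{eq:7bbb}. One minor remark: in your penultimate paragraph the constant $\eta$ in the bound $\|\e^{\lambda\sqrt t A}B(u,v)(t)\|_{\dot H^r}\leq\eta|u|\,|v|$ is in fact uniform in $t$ (not $\to 0$); the vanishing at $t=0$ comes instead from the fact that $u,v\in\vB^0$ forces the integrand itself to vanish there.
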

\begin{proof} First we show that $B:\vB\times \vB\to \vB$. Using
  the triangle inequality in Fourier space we obtain the following
  analogue of (\ref{eq:15}):
\begin{equation}
  \label{eq:15ba}
  \|\e^{\lambda\sqrt
      sA}A^{2s_2-5/2}P(Mu(s)\cdot \nabla)v(s)\|\leq C \|\e^{\lambda{\sqrt
      s}A}A^{s_2}u(s)\|\, \|\e^{\lambda\sqrt
      sA}A^{s_2}v(s)\|,
\end{equation}  and consequently using that $\sqrt{t} \leq
\sqrt{t-s}+\sqrt{s}$ and Lemma \ref{Lemma:spect_bnd} we obtain, cf. (\ref{eq:9b}),
\begin{equation}\label{eq:9bba}
  \|\e^{\lambda\sqrt
      tA}A^{s_2}\e^{-(t-s)A^2}P(Mu(s)\cdot \nabla)v(s)\|\leq
  C_2\inp{\lambda}^{5/2-s_2}\e^{\lambda^2/4}\e^{2\zeta}(t-s)^{s_2/2-5/4}s^{-2s_1}|u|\,|v|.
\end{equation} We conclude, cf. (\ref{eq:8b}), that indeed $B(u,v)\in
\vB$ with
\begin{equation}
  \label{eq:8bba}
|B(u,v)|\leq \gamma |u|\,|v|;\;\gamma=C_3\inp{\lambda}^{5/2-r}\e^{\lambda^2/2}T^{r/4-1/8}.
\end{equation}

  Obviously the same arguments show that $B:\vB^0\times \vB^0\to
  \vB^0$. In conjunction with (\ref{eq:19}) we conclude that the integral
  equation (\ref{eq:5})  has a unique solution $X$  in the ball
  $B_{2R}$ provided  that first $R>0$ and then $T>0$ are  taken small enough, and that this $X\in \vB^0$.

For the remaining statements of Theorem
\ref{prop:ana_small_times1} we similarly mimic the proof of
Proposition
\ref{prop:kato small data2}.
 For completeness of presentation  let us
state the analogues of (\ref{eq:7c}) and (\ref{eq:7bbb})
\begin{equation}
  \label{eq:7cq}
  \|\e^{\lambda \sqrt tA}B(u,v)(t)\|_{\dot
    H^{r}}\leq C \inp{\lambda}^{5/2-r}\e^{ 3\lambda^2/4}|u|\,|v|\mfor t\in I,
\end{equation} and for $r\in [1/2,5/4]$
\begin{equation}
  \label{eq:7bbbq}
  \|\e^{\lambda \sqrt tA}B(u,v)(t)\|\leq \widetilde C \inp{\lambda}^{5/2-2r}\e^{ 3\lambda^2/4}t^{r/2}|u|\,|v|\mfor t\in I.
\end{equation}
\end{proof}

\subsection{Local analyticity bounds in $H^r$, $r\in [5/4, \infty[$} \label{Analyticity bounds in
    r
    for small times and for large data}
In this subsection we assume $r\in [5/4, \infty[$ and specify the
functions $\zeta$ and $\theta$  in the norm
(\ref{eq:3mod}) in terms of a parameter $\lambda\geq 0$ as
\begin{equation}
  \label{eq:20 mod}
  \zeta=\lambda^2/4\mand \theta(t)=\lambda\sqrt t.
\end{equation} Let $s_1$ and $s_2$ be
given as  in Proposition
\ref{prop:kato small data2mod}, and let again $I=]0,T]$.

Due to Lemma \ref{Lemma:spect_bnd}   we can estimate the following norm of
$Y=\e^{-(\cdot)A^2}u_0$, $u_0\in  H^r$.
\begin{equation}
  \label{eq:18mod_anal}
  |Y|=\|Y\|_{\zeta, \theta,I,s_1,s_2}\leq C_1 T^{s_1}\|\inp{A}^ru_0\|,
\end{equation} where  the constant $C_1$ is  independent of
$\lambda\geq0$.

Let $\vB^0=\vB^0_{\zeta, \theta,I,s_1,s_2}$ be the space  of functions $v\in \vB:=\vB_{\zeta, \theta,I,s_1,s_2}$
 obeying
\begin{equation}\label{eq:14bbmod}
  t^{s_1}\|\e^{\lambda\sqrt
      tA}\inp{A}^{s_2}v(t)\|\to 0 \mfor t\to 0,
\end{equation} It follows from (\ref{eq:18mod_anal}) that
$Y=\e^{-(\cdot)A^2}u_0\in\vB^0$.

We have the following generalization of  Proposition \ref{prop:kato small data2mod}:
\begin{thm}
  \label{prop:ana_small_times1mod} Let   $\lambda\geq0$ and $u_0\in
H^{r}$ with $r\in [5/4, \infty[$ be given. Let $\tilde r\in [5/4,3/2[$ be arbitrarily given such
that $r\geq\tilde r$. Let $s_1$ and $s_2$ be given as in
(\ref{eq:27}).  For any  $T=|I|>0$ small enough (so that the
conditions (\ref{eq:6})
hold for some  $R>0$)
 the
integral equation (\ref{eq:5})  has a unique solution  in the ball
$B_{2R}\subset \vB$
where $\vB=\vB_{\zeta, \theta,I, s_1,s_2}$
 and
 $Y(t)=\e^{-tA^2}u_0$. This solution $X\in \vB^0\subseteq \vB$, and it
 obeys  that $\e^{\lambda\sqrt{
      (\cdot)}A}X\in BC(\bar I,
H^{r})$ with $X(0)=u_0$.
\end{thm}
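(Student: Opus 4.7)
The plan is to adapt the proof of Theorem~\ref{prop:ana_small_times1} to the inhomogeneous setting of Proposition~\ref{prop:kato small data2mod}. The essential mechanism is unchanged: to bound the bilinear term in the analytic norm one combines the pointwise Fourier-space triangle inequality $e^{\lambda\sqrt t|\xi|}\leq e^{\lambda\sqrt{t-s}|\xi-\eta|}e^{\lambda\sqrt s|\eta|}$ (which follows from $\sqrt t\leq\sqrt{t-s}+\sqrt s$) to transfer the exponential weight onto the arguments $u(s),v(s)$, with Lemma~\ref{Lemma:spect_bnd} to absorb the remaining powers of $A$ into the heat semigroup. Only two points change: (i) the weight $\inp{A}^{s_2}$ is split as $\inp{A}^{\bar r}\inp{A}^{r-\bar r}$ with $\bar r=5/2-\tilde r$ exactly as in (\ref{eq:9bmod}), which keeps the operator $\inp{A}^{\bar r}e^{-(t-s)A^2/2}$ bounded by $C\inp{T}^{\bar r/2}(t-s)^{-\bar r/2}$; and (ii) $\zeta=\lambda^2/4$ is independent of $t$, so the logarithmic factor $\inp{\lambda}^{\tilde r-r}$ that appears in Theorem~\ref{prop:ana_small_times1} is absent.

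First I verify $Y=e^{-(\cdot)A^2}u_0\in\vB^0$: Lemma~\ref{Lemma:spect_bnd} with $\alpha=0$, applied after writing $e^{-tA^2}=e^{-tA^2/2}e^{-tA^2/2}$, gives (\ref{eq:18mod_anal}) with a constant uniform in $\lambda\geq 0$, and the vanishing (\ref{eq:14bbmod}) then follows by approximating $u_0$ in $H^r$ by Schwartz data. Next I estimate $B:\vB\times\vB\to\vB$. The Fourier-triangle trick applied to the pointwise bilinear bound underlying (\ref{eq:9bmod}) yields the analogue of (\ref{eq:15ba}) with $\inp{A}$ in place of $A$; applying Lemma~\ref{Lemma:spect_bnd} with $\alpha=\bar r$ to the factor $\inp{A}^{\bar r}e^{-(t-s)A^2/2}e^{\lambda\sqrt{t-s}A}$ produces
\begin{equation*}
  \|e^{\lambda\sqrt t A}\inp{A}^{s_2}e^{-(t-s)A^2}P(Mu(s)\cdot\nabla)v(s)\|\leq C\inp{\lambda}^{\bar r}e^{\lambda^2/4}e^{2\zeta}\inp{T}^{\bar r/2}(t-s)^{-\bar r/2}s^{-2s_1}|u|\,|v|.
\end{equation*}
Integration in $s\in]0,t[$ and multiplication by $e^{-\zeta}t^{s_1}$ then produces $|B(u,v)|\leq\gamma|u|\,|v|$ with $\gamma=C'\inp{\lambda}^{\bar r}e^{\lambda^2/2}\inp{T}^{\bar r/2}T^{s_1}$, arbitrarily small for each fixed $\lambda\geq 0$ upon choosing $T>0$ small. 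The same estimate shows $B:\vB^0\times\vB^0\to\vB^0$, so the abstract fixed-point scheme of Subsection~\ref{Abstract scheme} yields a unique $X\in B_{2R}\cap\vB^0$, and hence a unique solution in $B_{2R}$ since the latter coincides with $B_{2R}\cap \vB^0$ by the bound $|Y|\leq C_1T^{s_1}\|\inp{A}^r u_0\|$.

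For the remaining continuity claim I would derive the analogue of (\ref{eq:7cmod}), namely $\|e^{\lambda\sqrt t A}B(u,v)(t)\|_{H^r}\leq\eta\,|u|\,|v|$ uniformly in $t\in I$, by repeating the previous estimate without the prefactor $t^{s_1}$; this shows $e^{\lambda\sqrt t A}B(X,X)(t)\to 0$ in $H^r$ as $t\to 0$, while $e^{\lambda\sqrt t A}Y(t)\to u_0$ in $H^r$ is a consequence of Lemma~\ref{Lemma:spect_bnd} together with an approximation argument. Combining these gives $e^{\lambda\sqrt{(\cdot)}A}X\in BC(\bar I,H^r)$ with $X(0)=u_0$. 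The main obstacle is purely bookkeeping—tracking how the three sources of $\lambda$-dependence (the $e^{2\zeta}$ produced by unpacking $|u|\,|v|$, the $\inp{\lambda}^{\bar r}e^{\lambda^2/4}$ from Lemma~\ref{Lemma:spect_bnd}, and the external factor $e^{-\zeta}$ in the norm) combine into the product $\inp{\lambda}^{\bar r}e^{\lambda^2/2}$—and checking that the multiplication estimate supplied by Lemma~\ref{lemma:sobolev}, split as in (\ref{eq:9bmod}), still yields an admissible bound for the full range $r\in[5/4,\infty[$; there is no qualitatively new analytic input compared to Theorem~\ref{prop:ana_small_times1}.
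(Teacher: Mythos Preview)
Your proposal is correct and follows essentially the same approach as the paper's own proof: you combine the Fourier-space triangle inequality with Lemma~\ref{Lemma:spect_bnd} to obtain the bilinear estimate $\|\e^{\lambda\sqrt tA}\inp{A}^{s_2}\e^{-(t-s)A^2}P(Mu(s)\cdot\nabla)v(s)\|\leq C\inp{\lambda}^{\bar r}\e^{\lambda^2/4}\e^{2\zeta}\inp{T}^{\bar r/2}(t-s)^{-\bar r/2}s^{-2s_1}|u|\,|v|$, integrate to get $\gamma=C'\inp{\lambda}^{\bar r}\e^{\lambda^2/2}\inp{T}^{\bar r/2}T^{s_1}$, and then use the analogue of (\ref{eq:7cmod}) for the $H^r$-continuity. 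One minor remark: since here $s_1=\tilde r/4-1/8\geq 3/16>0$, the bound (\ref{eq:18mod_anal}) already gives $Y\in\vB^0$ directly without any approximation argument.
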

\begin{proof}
  Mimicking  the proof of Proposition \ref{prop:kato small data2mod}
  we obtain
\begin{equation}\label{eq:9bbamod2}
  \|\e^{\lambda\sqrt
      tA}\inp{A}^{s_2}\e^{-(t-s)A^2}P(Mu(s)\cdot \nabla)v(s)\|\leq
  C_2\inp{\lambda}^{\bar r}\e^{\lambda^2/4}\e^{2\zeta}\inp{t}^{\bar r/2}(t-s)^{-\bar r/2}s^{-2s_1}|u|\,|v|,
\end{equation} where $\bar r$ is given by (\ref{eq:28}).
Consequently we infer that
\begin{align}
  \label{eq:8bmod_anal}
 |B(u,v)|&\leq \gamma |u|\,|v|;\\ \gamma&=C_2\inp{\lambda}^{\bar r}\e^{\lambda^2/2}\inp{T}^{\bar r/2}\sup_{t\in
   I}t^{s_1}\int_0^{t}(t-s)^{-\bar r/2}s^{-2s_1}\,\d s=C_3\inp{\lambda}^{\bar r}\e^{\lambda^2/2}\inp{T}^{\bar r/2}T^{s_1}. \nonumber
\end{align}

We conclude from (\ref{eq:18mod_anal})  and
(\ref{eq:8bmod_anal}) that indeed the contraction condition (\ref{eq:6})
for the map $X\to Y+B(X,X)$ on $B_{2R}$ (or on
$B_{2R}^0:=\vB^0\cap B_{2R}$)  is
valid for any $T>0$ small enough. Moreover
\begin{equation}
  \label{eq:7cmod_anal}
  \|\e^{\lambda \sqrt tA}B(u,v)(t)\|_{
    H^{r}}\leq C\inp{\lambda}^{\bar r}\e^{3\lambda^2/4} \inp{T}^{\bar
    r/2} |u|\,|v|\mfor t\in I,
\end{equation} and we infer (as before) that   $\e^{\lambda\sqrt{
      (\cdot)}A}B:\vB^0\times \vB^0\to BC(\bar I,
H^{r})$ and that the functions in the range of this map  vanish at
$t=0$.
\end{proof}

\subsection{Improved local bounds of analyticity radii  for $r>
  1/2$} \label{Improved analyticity
    bounds in dot r and  r, r>
  1/2,
    for small times and for large data}
In this subsection we shall modify the constructions of Subsections \ref{Analyticity bounds in
    dot r and  r
    for small times and for large data} and \ref{Analyticity bounds in
    r
    for small times and for large data} in that the functions in
  (\ref{eq:20}) and (\ref{eq:20 mod}) now will
  be taken with an additional time-dependence. Explicitly we define $\zeta$ and
  $\theta$ by (\ref{eq:20}) (for the setting of Subsection \ref{Analyticity bounds in
    dot r and  r
    for small times and for large data})  and (\ref{eq:20 mod}) (for the setting of Subsection \ref{Analyticity bounds in
    r
    for small times and for large data}) but now in terms of  $\lambda$  taken to have
  the following explicit time-dependence
  \begin{equation}
    \label{eq:21}
    \lambda=\lambda_0\sqrt{t/T};
  \end{equation}
   here $\lambda_0\geq0$ is an auxiliary
  parameter (which in the end will play the role of the previous parameter
   $\lambda$) and $T>0$ is the right end point of the interval $I$ (as
  in Subsection \ref{Analyticity bounds in
    dot r and  r
    for small times and for large data}). The bounds (\ref{eq:18}),
  (\ref{eq:15ba}), (\ref{eq:18mod_anal}) and (\ref{eq:9bbamod2})  remain
  true where $\lambda =
  \lambda(s)$ in (\ref{eq:15ba})  and similar interpretations are needed in (\ref{eq:9bbamod2}).

As for (\ref{eq:8bba}) the bounding
  constant  has
  the form, cf. (\ref{eq:9bba}),
  \begin{equation}
    \label{eq:22}
    C_2\sup_{t\in
   I}\e^{-\zeta(t)}t^{s_1}\int_0^{t}\inp{\lambda(t-s)}^{5/2-s_2}\e^{\lambda^2(t-s)/4}\e^{2\zeta(s)}(t-s)^{s_2/2-5/4}s^{-2s_1}\
 \d s.  \end{equation} Since $\inp{\lambda (t_1)}^\alpha\leq
\inp{\lambda (t_2)}^\alpha$ if  $0\leq t_1\leq t_2\leq T$ and $\alpha\geq0$, and
\begin{equation}
  \label{eq:23}
  \e^{-\lambda^2(t)/4}\e^{\lambda^2(t-s)/4}\e^{\lambda^2(s)/2}=\e^{\lambda^2(s)/4}\leq
  \e^{\lambda_0^2/4},
\end{equation} we obtain
\begin{subequations}
 \begin{equation}
  \label{eq:24}
  |B(u,v)|\leq \gamma |u|\,|v|\text{ with }\gamma=C_3\inp{\lambda_0}^{5/2-r}\e^{\lambda_0^2/4}T^{r/4-1/8}.
\end{equation} (Notice that the cancellation (\ref{eq:23}) accounts for
the ``improvement'' $\e^{\lambda^2/2} \to \e^{\lambda_0^2/4}$ compared to
(\ref{eq:8bba}).)

As for (\ref{eq:7cq}) and (\ref{eq:7bbbq}) we obtain similar ``improvements''
\begin{equation}
  \label{eq:7cql}
  \|\e^{\theta(t)A}B(u,v)(t)\|_{\dot
    H^{r}}\leq C \inp{\lambda_0}^{5/2-r}\e^{ \lambda_0^2/2}|u|\,|v|\mfor t\in I,
\end{equation} and for $r\in [1/2,5/4]$
\begin{equation}
  \label{eq:7bbbql}
  \|\e^{\theta(t)A}B(u,v)(t)\|\leq \widetilde C \inp{\lambda_0}^{5/2-2r}\e^{ \lambda^2_0/2}t^{r/2}|u|\,|v|\mfor t\in I.
\end{equation}
 \end{subequations}

Arguing similarly for the setting of Subsection \ref{Analyticity bounds in
    r
    for small times and for large data} we obtain in this case
  \begin{subequations}
   \begin{equation}
  \label{eq:24mod}
  |B(u,v)|\leq \gamma |u|\,|v|\text{ with
  }\gamma=C_3\inp{\lambda_0}^{\bar r}\e^{\lambda_0^2/4}\inp{T}^{\bar r/2}T^{s_1},
\end{equation} and
\begin{equation}
  \label{eq:7cqlmod}
  \|\e^{\theta(t)A}B(u,v)(t)\|_{
    H^{r}}\leq C \inp{\lambda_0}^{\bar r}\e^{ \lambda_0^2/2}\inp{T}^{\bar r/2}|u|\,|v|\mfor t\in I.
\end{equation}
 \end{subequations}

Let us now investigate  the conditions (\ref{eq:6}) with  $R=|Y|$: Due to
(\ref{eq:18}) and (\ref{eq:24}) it suffices to have
\begin{equation}
  \label{eq:25}
  C_4\inp{\lambda_0}^{5/2-r}\e^{\lambda_0^2/4}T^{r/2-1/4}<1;\
  C_4=4C_1
  C_3\|A^ru_0\|.
\end{equation} Notice that the constants $C_1$ and $ C_3$ from
(\ref{eq:18}) and (\ref{eq:24}), respectively, are independent of
$\lambda_0$, $T$ and $u_0$. Therefore also $C_4$ is independent of
$\lambda_0$ and  $T$.

Now, assuming $r\in]1/2,3/2[$, we fix
$\epsilon \in ]0,2r-1]$.  Taking  then
\begin{equation}
  \label{eq:26}
  \lambda_0=\sqrt{2r-1-\epsilon}\ \sqrt{|\ln T|}
\end{equation} indeed (\ref{eq:25}) is valid provided $T>0$ is  small
enough, viz.
$T\leq T_0=T\big (\epsilon, r,\tilde r, \|A^ru_0\|\big )$.

Similarly in the setting of Subsection \ref{Analyticity bounds in
    r
    for small times and for large data} (due to (\ref{eq:18mod_anal}) and
  (\ref{eq:24mod}))  the conditions (\ref{eq:6}) with  $R=|Y|$ are
  valid if
\begin{equation}
  \label{eq:25_mod}
  C_4\inp{\lambda_0}^{\bar r}\e^{\lambda_0^2/4}\inp{T}^{\bar r/2}T^{2s_1}<1;\
  C_4=4C_1
  C_3\|\inp{A}^ru_0\|.
\end{equation} Fix
$\epsilon \in ]0,2\tilde r-1]$.  Taking  then
\begin{equation}
  \label{eq:26mod}
  \lambda_0=\sqrt{2\tilde r-1-\epsilon}\ \sqrt{|\ln T|}
\end{equation} the bound (\ref{eq:25_mod}) is valid provided
$T\leq T_0=T\big (\epsilon, r,\tilde r,\|\inp{A}^ru_0\|\big )$.

We have (almost) proved:
\begin{thm}
  \label{thm:impr_ana_small_times1}
  \begin{enumerate}[\normalfont i)]
  \item \label{item:3} Suppose    $u_0\in\dot
H^{r}$ for some  $r\in ]1/2,3/2[$. Put $\tilde r=\max(r, 5/4)$, $s_1=\tilde
r/2-r/4-1/8$ and  $s_2=\tilde r$, and let $\epsilon\in ]0,2r-1]$.  There exists $T_0=T\big (\epsilon, r,\tilde r, \|A^ru_0\|\big )>0$ such
that for any  $T\in]0,T_0]$
the integral equation (\ref{eq:5})  has a unique solution  in the ball
$B_{2|Y|}\subseteq \vB$
where $\vB=\vB_{\zeta, \theta,I, s_1,s_2}$ has norm (\ref{eq:3}) with  $\zeta$ and
  $\theta$ given by (\ref{eq:20}), (\ref{eq:21}) and (\ref{eq:26}),
  $I=]0,T]$
 and
 $Y(t)=\e^{-tA^2}u_0$. This solution $X\in \vB^0$, and it  obeys that $\e^{\theta A}X\in BC(\bar I,\dot
H^{r})$ with $X(0)=u_0$. If in addition $u_0\in L^2$ and $r\in
]1/2,5/4]$ then $\e^{\theta A}X\in
BC(\bar I,H^{r})$.

Moreover there are bounds
\begin{subequations}
  \begin{equation}
  \label{eq:7cqla}
  \|\e^{\theta(t)A}X(t)\|_{\dot
    H^{r}}\leq C T^{-(r/2-1/4-\epsilon/4)}\mfor t\in I,
\end{equation} and assuming
 in addition $u_0\in L^2$ and $r\in ]1/2,5/4]$
\begin{equation}
  \label{eq:7bbbqla}
  \|\e^{\theta(t)A}X(t)\|\leq  \widetilde CT^{-(r/2-1/4-\epsilon/4)}\mfor t\in I.
\end{equation}
\end{subequations}
The dependence of the constants $C$ and $\widetilde C$
on $u_0$ is through the quantity $\|A^ru_0\|$ and through the
quantities $\|A^ru_0\|$ and $\|u_0\|$, respectively.
\item\label{item:4}
Suppose   $u_0\in
H^{r}$ for some  $r\in [5/4,\infty[$. Let $\tilde r\in [5/4,3/2[$ be given such
that $r\geq\tilde r$ and let $\epsilon\in ]0,2\tilde r-1]$. Put
$s_1=\tilde r/4-1/8$ and $s_2=r$.  There exists $T_0=T\big (\epsilon, r,\tilde r, \|\inp{A}^ru_0\|\big )>0$ such
that for any  $T\in]0,T_0]$
the integral equation (\ref{eq:5})  has a unique solution  in the ball
$B_{2|Y|}\subseteq \vB$
where $\vB=\vB_{\zeta, \theta,I, s_1,s_2}$ has norm
(\ref{eq:3mod}) with  $\zeta$ and
  $\theta$ given by (\ref{eq:20 mod}), (\ref{eq:21}) and (\ref{eq:26mod}),
  $I=]0,T]$
 and
 $Y(t)=\e^{-tA^2}u_0$. This solution $X\in \vB^0$, and it obeys that $\e^{\theta A}X\in BC(\bar I,
H^{r})$ with $X(0)=u_0$.

Moreover
\begin{equation}
  \label{eq:7cqlamod}
  \|\e^{\theta(t)A}X(t)\|_{
    H^{r}}\leq C T^{-(\tilde r/2-1/4-\epsilon/4)}\mfor t\in I.
\end{equation} The dependence of the constant $C$
on $u_0$ is through the quantity $\|\inp{A}^ru_0\|$.
  \end{enumerate}
\end{thm}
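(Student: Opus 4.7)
The plan is to verify the hypotheses of the abstract fixed-point scheme of Subsection \ref{Abstract scheme} applied to the integral equation $X=Y+B(X,X)$ in the space $\vB=\vB_{\zeta,\theta,I,s_1,s_2}$, with the time-dependent $\lambda$ from (\ref{eq:21}) and the specific choice (\ref{eq:26}) (resp. (\ref{eq:26mod})) of $\lambda_0$. All the required linear and bilinear estimates have been prepared in the preceding subsections; the proof then becomes a matter of bookkeeping, combined with the mechanism used in Propositions \ref{prop:kato small data2} and \ref{prop:kato small data2mod} to pass from a fixed point in $\vB$ to the continuity statements on $\bar I$.

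For part \ref{item:3}, I would substitute $\lambda_0=\sqrt{2r-1-\epsilon}\sqrt{|\ln T|}$ into the contraction condition (\ref{eq:25}). Since $\e^{\lambda_0^2/4}=T^{-(2r-1-\epsilon)/4}$, the product $\e^{\lambda_0^2/4}T^{r/2-1/4}$ collapses to the net power $T^{\epsilon/4}$, which is strictly less than $1$ for small $T$ and easily beats the polylogarithmic correction $\inp{\lambda_0}^{5/2-r}$. Thus (\ref{eq:25}) holds for every $T\leq T_0(\epsilon,r,\tilde r,\|A^ru_0\|)$ with $T_0$ sufficiently small. The abstract scheme then produces a unique fixed point $X\in B_{2|Y|}\subset\vB$. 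Since $Y\in\vB^0$ by (\ref{eq:19}) and since the computation in (\ref{eq:24}) manifestly shows that $B$ sends $\vB^0\times\vB^0$ into $\vB^0$, one has $X\in\vB^0$ as well. The continuity $\e^{\theta A}X\in BC(\bar I,\dot H^r)$ with $X(0)=u_0$ then follows from $\e^{\theta A}Y\in C(\bar I,\dot H^r)$ together with (\ref{eq:7cql}), which applied to $X\in\vB^0$ forces $\e^{\theta A}B(X,X)$ to be continuous on $\bar I$ and to vanish at $t=0$. If in addition $u_0\in L^2$ and $r\in\,]1/2,5/4]$, the analogous argument based on (\ref{eq:7bbbql}) upgrades this continuity to $H^r$.

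The quantitative bounds (\ref{eq:7cqla}) and (\ref{eq:7bbbqla}) would come from splitting $X=Y+B(X,X)$ and estimating each summand separately. By Lemma \ref{Lemma:spect_bnd} the linear part satisfies $\|\e^{\theta(t)A}Y(t)\|_{\dot H^r}\leq C\e^{\lambda_0^2/4}\|A^ru_0\|=CT^{-(r/2-1/4-\epsilon/4)}\|A^ru_0\|$; the bilinear part, by (\ref{eq:7cql}) and $|X|\leq 2|Y|\leq 2C_1T^{r/4-1/8}\|A^ru_0\|$, contributes a term of order $\e^{\lambda_0^2/2}T^{r/2-1/4}\|A^ru_0\|^2$, which is of the same order $T^{-(r/2-1/4-\epsilon/4)}$. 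The $L^2$-bound (\ref{eq:7bbbqla}) is obtained identically, using (\ref{eq:7bbbql}) in place of (\ref{eq:7cql}).

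Part \ref{item:4} is entirely parallel. I would substitute $\lambda_0=\sqrt{2\tilde r-1-\epsilon}\sqrt{|\ln T|}$ into (\ref{eq:25_mod}), built from (\ref{eq:18mod_anal}) and (\ref{eq:24mod}); the same cancellation between $\e^{\lambda_0^2/4}$ and the algebraic power $T^{2s_1}=T^{\tilde r/2-1/4}$ again produces a net factor $T^{\epsilon/4}$ that beats the polylogarithmic $\inp{\lambda_0}^{\bar r}$ and the harmless $\inp{T}^{\bar r/2}$. The extraction of $X\in\vB^0$, the continuity $\e^{\theta A}X\in BC(\bar I,H^r)$ from (\ref{eq:7cqlmod}), and the quantitative bound (\ref{eq:7cqlamod}) via the same linear/bilinear split then go through verbatim. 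The main conceptual (as opposed to computational) point in the whole argument, and essentially the only nontrivial ingredient beyond Section \ref{Integral equation}, is the cancellation (\ref{eq:23}) enabled by the time-dependent choice $\lambda^2(t)=\lambda_0^2t/T$: it reduces the nonlinear exponential factor from $\e^{\lambda^2/2}$ (as in Theorem \ref{prop:ana_small_times1}) to $\e^{\lambda_0^2/4}$, providing exactly the margin that allows the logarithmic choice (\ref{eq:26}) to yield a genuine contraction and hence the $\sqrt{t|\ln t|}$-analyticity radius advertised in Theorem \ref{thm:simpleaa}.
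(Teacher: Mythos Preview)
Your proposal is correct and follows essentially the same route as the paper's own proof, which simply points back to the estimates (\ref{eq:18}), (\ref{eq:24}), (\ref{eq:25}), (\ref{eq:7cql}), (\ref{eq:7bbbql}) (and their analogues (\ref{eq:18mod_anal}), (\ref{eq:24mod}), (\ref{eq:25_mod}), (\ref{eq:7cqlmod}) for part \ref{item:4}) established in the text preceding the theorem. One small sharpening: the bilinear contribution $B(X,X)$ is in fact of strictly better order in $T$ than the linear piece (the paper remarks this explicitly), since $\e^{\lambda_0^2/2}T^{r/2-1/4}=T^{-(r/2-1/4-\epsilon/2)}$ rather than $T^{-(r/2-1/4-\epsilon/4)}$; your claim that it is ``of the same order'' is therefore a harmless overstatement that does not affect the bound (\ref{eq:7cqla}).
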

\begin{proof}
  We use the arguments preceeding the
  theorem to get (unique) solutions.

As for \ref{item:3} the bounds (\ref{eq:7cqla}) and (\ref{eq:7bbbqla}) follow
  from (\ref{eq:7cql}) and (\ref{eq:7bbbql}), respectively, and
  (\ref{eq:18}) and Lemma
  \ref{Lemma:spect_bnd} (taking $\alpha=0$ there). Notice that the
  contributions from the non-linear term $B(X,X)$ have better bounds.

As for \ref{item:4} the bound (\ref{eq:7cqlamod}) follows
  from (\ref{eq:18mod_anal}), (\ref{eq:7cqlmod}) and Lemma
  \ref{Lemma:spect_bnd} (taking again $\alpha=0$ there). Again the
  contribution from the non-linear term has a better bound.
\end{proof}

By choosing $t=T$  in (\ref{eq:7cqla})--(\ref{eq:7cqlamod})  we obtain:
\begin{cor}
  \label{cor:small} Under the conditions of Theorem
\ref{thm:impr_ana_small_times1} \ref{item:3} the  solution $X$
obeys
\begin{subequations}
 \begin{equation}
  \label{eq:7cqlay}
  \|\e^{\sqrt{2r-1-\epsilon}\ \sqrt{|\ln t|}\sqrt tA}X(t)\|_{\dot
    H^{r}}\leq Ct^{-(r/2-1/4-\epsilon/4)}\mforall t \in ]0,T_0],
\end{equation} and assuming
 in addition $u_0\in L^2$ and $r\in ]1/2,5/4]$
\begin{equation}
  \label{eq:7bbbqlay}
  \|\e^{\sqrt{2r-1-\epsilon}\ \sqrt{|\ln t|}\sqrt tA}X(t)\|\leq \widetilde Ct^{-(r/2-1/4-\epsilon/4)}\mforall  t\in ]0,T_0].
\end{equation}
\end{subequations}

Under the conditions of Theorem
\ref{thm:impr_ana_small_times1} \ref{item:4} the  solution $X$
obeys
\begin{equation}
  \label{eq:7cqlaymod}
  \|\e^{\sqrt{2\tilde r-1-\epsilon}\ \sqrt{|\ln t|}\sqrt tA}X(t)\|_{
    H^{r}}\leq Ct^{-(\tilde r/2-1/4-\epsilon/4)}\mforall t \in ]0,T_0].
\end{equation}
\end{cor}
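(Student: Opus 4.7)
The plan is to deduce the corollary from Theorem \ref{thm:impr_ana_small_times1} by exploiting the freedom to shrink the interval $I$. Fix $T_0 = T_0(\epsilon, r, \tilde r, \|A^r u_0\|)$ as provided by part \ref{item:3} of the theorem (and the analogous $T_0$ for part \ref{item:4}). For an arbitrary $t \in ]0, T_0]$, I would re-apply Theorem \ref{thm:impr_ana_small_times1} on the interval $I_t := ]0, t]$, i.e.\ with the right endpoint $T$ now equal to $t$. This is legitimate because $T_0$ depends only on $\epsilon$, $r$, $\tilde r$ and $\|A^ru_0\|$ (resp.\ $\|\inp{A}^ru_0\|$), not on the length of the interval on which the fixed point is constructed.

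With this replacement the auxiliary parameter from (\ref{eq:26}) becomes $\lambda_0 = \sqrt{2r-1-\epsilon}\sqrt{|\ln t|}$, and the time-dependent parameter is $\lambda(s) = \lambda_0\sqrt{s/t}$, so that $\theta(s) = \lambda(s)\sqrt{s}$. Evaluating at the right endpoint $s = t$ of $I_t$ yields $\lambda(t) = \lambda_0$ and hence
\[
  \theta(t) = \lambda_0 \sqrt{t} = \sqrt{2r-1-\epsilon}\,\sqrt{t|\ln t|}.
\]
The bound (\ref{eq:7cqla}), applied at the endpoint of $I_t$, therefore reads
\[
  \|\e^{\sqrt{2r-1-\epsilon}\,\sqrt{|\ln t|}\sqrt{t}\,A}X^{(t)}(t)\|_{\dot H^r} \leq C\, t^{-(r/2-1/4-\epsilon/4)},
\]
where $X^{(t)}$ denotes the solution constructed on $I_t$. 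By the uniqueness statement recorded in Remark \ref{remarks:uniq} \ref{item:2}, $X^{(t)}$ coincides on $I_t$ with the solution $X$ originally produced on $]0,T_0]$, so the left-hand side may be replaced by $X(t)$. Since $t$ was arbitrary in $]0, T_0]$, this establishes (\ref{eq:7cqlay}). The $L^2$-estimate (\ref{eq:7bbbqlay}) follows in exactly the same way from (\ref{eq:7bbbqla}), and the $H^r$-estimate (\ref{eq:7cqlaymod}) from (\ref{eq:7cqlamod}) using instead the choice (\ref{eq:26mod}) of $\lambda_0$ (where the relation $\sqrt{|\ln t|}\sqrt{t} = \sqrt{t|\ln t|}$ is read off at the endpoint in the same manner).

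The only nontrivial point to verify is that the constants $C$, $\widetilde C$ produced by Theorem \ref{thm:impr_ana_small_times1} are genuinely independent of the endpoint parameter (now called $t$), so that a single bound holds uniformly in $t \in ]0, T_0]$. This is ensured by the dependence of $C$ and $\widetilde C$ on $u_0$ only through $\|A^r u_0\|$ (resp.\ $\|u_0\|$ or $\|\inp{A}^r u_0\|$), combined with the fact emphasized in the proof of Theorem \ref{thm:impr_ana_small_times1} that the constants entering (\ref{eq:18}), (\ref{eq:24}), (\ref{eq:7cql}), and (\ref{eq:7bbbql}) are independent of $\lambda_0$ and $T$. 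Once that independence is checked, the proof is essentially a change of variables, evaluating the theorem's endpoint-free bounds at the right endpoint of an interval whose length is being varied.
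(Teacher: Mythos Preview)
Your argument is correct and is exactly the approach the paper takes: the paper's entire proof is the one-line remark ``By choosing $t=T$ in (\ref{eq:7cqla})--(\ref{eq:7cqlamod}) we obtain [the corollary],'' which is precisely your scheme of applying Theorem \ref{thm:impr_ana_small_times1} on the interval $I_t=]0,t]$ and evaluating at the right endpoint. Your added care about identifying $X^{(t)}$ with $X$ via uniqueness and about the $t$-independence of the constants is a welcome elaboration of what the paper leaves implicit.
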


Clearly the dependence of the constant $\widetilde C$ in
(\ref{eq:7bbbqlay}) of $u_0\in H^r$ can be taken through its norm
$\|\inp{A}^ru_0\|$. Moreover, if $u_0\in H^r$ for some $r\in [5/4,3/2[$ we can choose $\tilde r =r$
in (\ref{eq:7cqlaymod}). Whence in particular we obtain from Corollary
\ref{cor:small}:
\begin{cor}
  \label{cor:simple} Suppose $u_0\in H^r$ for some $r\in
  ]1/2,3/2[$. Let $X$ be the solution to (\ref{eq:1})  with
  initial data  $u_0$ as given in  Proposition
  \ref{prop:kato small data2}, and let $\epsilon\in
  ]0,2r-1]$. Then there exist constants $C_0=C\big (\epsilon, r,
  \|\inp{A}^ru_0\|\big )>0$  and $T_0=T\big (\epsilon, r,
  \|\inp{A}^ru_0\|\big )>0$ such
that
  \begin{equation}
    \label{eq:29}
    \|\e^{\sqrt{2r-1-\epsilon}\ \sqrt{t|\ln t|}A}X(t)\|_{
    H^{r}}\leq C_0t^{-(r/2-1/4-\epsilon/4)}\mforall t\in]0,T_0].
  \end{equation} In particular, using notation from Subsection
  \ref{Discussion of uniform
 real  analyticity}, for this solution to (\ref{eq:1})
  \begin{equation}
    \label{eq:46}
 \liminf_{t\to 0}\tfrac{\rad(X(t))}{\sqrt{t|\ln t|}}\geq \sqrt{2r-1}.
  \end{equation}
\end{cor}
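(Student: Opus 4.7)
The plan is to deduce Corollary \ref{cor:simple} directly from Corollary \ref{cor:small} by splitting into two overlapping sub-cases covering the full range $r\in \,]1/2,3/2[$, and then to obtain the $\liminf$ conclusion from the characterization \eqref{eq:43} of the analyticity radius. No new estimate is needed; the main task is to verify that the constants have the claimed dependence on $u_0$ only through $\|\inp{A}^r u_0\|$.

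\textbf{Case $r\in \,]1/2,5/4]$.} Here $u_0\in H^r$ embeds into $\dot H^r\cap L^2$, so both conclusions of Corollary \ref{cor:small} (that is, \eqref{eq:7cqlay} and \eqref{eq:7bbbqlay}) apply to $X$. Using the equivalence $\|v\|_{H^r}\sim \|v\|+\|A^r v\|$ on the vector $\e^{\sqrt{2r-1-\epsilon}\sqrt{t|\ln t|}A}X(t)$, I would add the two bounds to obtain \eqref{eq:29}. The constant $C$ in \eqref{eq:7cqlay} depends on $u_0$ only through $\|A^r u_0\|$, and the constant $\widetilde C$ in \eqref{eq:7bbbqlay} only through $\|u_0\|$ and $\|A^r u_0\|$; both are dominated by $\|\inp{A}^r u_0\|$, yielding the stated dependence.

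\textbf{Case $r\in [5/4,3/2[$.} For this range one is allowed to take $\tilde r=r$ in Theorem \ref{thm:impr_ana_small_times1}\,\ref{item:4}, and correspondingly in \eqref{eq:7cqlaymod}. This gives \eqref{eq:29} directly, with the constant depending on $u_0$ through $\|\inp{A}^r u_0\|$. The two cases overlap at $r=5/4$, so together they cover all $r\in \,]1/2,3/2[$. A minor point to check is that the small-time solution $X$ of Proposition \ref{prop:kato small data2} agrees, on the common interval of definition, with the solution produced by Theorem \ref{thm:impr_ana_small_times1}; this is precisely the uniqueness statement used in Remarks \ref{remarks:uniq}\,\ref{item:2}.

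\textbf{The $\liminf$ statement.} From \eqref{eq:29}, for each $\epsilon\in \,]0,2r-1]$ and all $t\in \,]0,T_0]$ we have $\e^{\sqrt{2r-1-\epsilon}\sqrt{t|\ln t|}A}X(t)\in H^r\subseteq L^2$. By the characterization \eqref{eq:43} of the analyticity radius, this gives
\begin{equation*}
\rad(X(t))\geq \sqrt{2r-1-\epsilon}\,\sqrt{t|\ln t|}\mforall t\in \,]0,T_0].
\end{equation*}
Dividing by $\sqrt{t|\ln t|}$ and taking $\liminf_{t\to 0}$ yields $\liminf_{t\to 0}\rad(X(t))/\sqrt{t|\ln t|}\geq \sqrt{2r-1-\epsilon}$, and since $\epsilon\in \,]0,2r-1]$ is arbitrary, letting $\epsilon\downarrow 0$ gives \eqref{eq:46}. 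I do not foresee a genuine obstacle; the only subtlety worth isolating is the bookkeeping of the constant dependencies in Case $r\in \,]1/2,5/4]$, where one must combine a homogeneous bound with an $L^2$ bound.
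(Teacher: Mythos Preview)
Your proof is correct and follows exactly the route the paper takes: the paragraph immediately preceding Corollary \ref{cor:simple} in the paper sketches precisely this two-case split (using \eqref{eq:7cqlay} together with \eqref{eq:7bbbqlay} for $r\in\,]1/2,5/4]$, and \eqref{eq:7cqlaymod} with $\tilde r=r$ for $r\in[5/4,3/2[$), and notes the same point about the constant $\widetilde C$ depending only on $\|\inp{A}^r u_0\|$. Your derivation of \eqref{eq:46} via \eqref{eq:43} and the identification of solutions via Remarks \ref{remarks:uniq}\,\ref{item:2} are also the intended arguments.
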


\subsection{Discussion} \label{Example1}
 We are only allowed to  put $\tilde r =r$ in (\ref{eq:7cqlaymod}) if  $r\in [5/4,3/2[$ due to the restriction $\tilde r<3/2$  of
 Theorem \ref{thm:impr_ana_small_times1} \ref{item:4}.

One may
 conjecture that  also in the case $r> 3/2$ the quantities
 \begin{equation}\label{eq:30}
   \|\e^{\sqrt{2 r-1-\epsilon}\ \sqrt{t|\ln t|}A}X(t)\|_{
    H^{r}};\ \epsilon\in ]0,2r-1],
 \end{equation} are all finite for  $t$ sufficiently small (given that $u_0\in H^r$). However  the proof  for
 $ r\in [5/4,3/2[$ does not provide any indication.

On the other hand if $u_0$ is ``much smoother'', in fact analytic, there is indeed
an improvement. Suppose there exists $\theta_0 >0$ such that
one of the following conditions holds
\begin{enumerate}[A)]
\item\label{item:5}  In addition to the assumptions of Theorem
\ref{thm:impr_ana_small_times1} \ref{item:3} $\e^{\theta_0 A}u_0\in
\dot H^r$ (and possibly $\e^{\theta_0 A}u_0\in
L^2$).
\item\label{item:6} In addition to the assumptions of Theorem
\ref{thm:impr_ana_small_times1} \ref{item:4} $\e^{\theta_0 A}u_0\in
H^r$.
\end{enumerate}
 Then we have the following version of Corollary
\ref{cor:small}: With \ref{item:5}  we can in (\ref{eq:7cqlay}) replace $X(t)$
by $\e^{\theta_0 A}X(t)$ (and similarly in (\ref{eq:7bbbqlay})). With
\ref{item:6}   we can in (\ref{eq:7cqlaymod}) replace $X(t)$
 by $\e^{\theta_0 A}X(t)$.

The proof of these statements goes  along the same line as  the
previous ones. Notice that we only have to check the previous
 proofs with $\theta\to \theta+\theta_0$ in various bounds. In
 particular this replacement is introduced in the construction of
 Banach spaces.

Another point to be discussed is the limit $r\to 1/2$. Clearly the
expression  $\sqrt{2 r-1}$, related to   (\ref{eq:30}),  vanishes in this
  limit. One may ask if this kind of behaviour is expected. Clearly
  this question is related to whether Theorem
  \ref{prop:ana_small_times1} should be considered as being
  ``optimal'' for $r=1/2$. There is a partial affirmative answer to
  the latter given by an example: We shall construct a specific
  (classical) solution to (\ref{eq:equamotion}) for
 specific $M$ and $P$ such that
\begin{equation}
  \label{eq:31}
  u\in BC([0,\infty[,H^r) \mand u_0\in H^r\cap  \dot B^{1/2}_{2,\infty}\mforall r<1/2,
\end{equation} and for which
\begin{equation}
  \label{eq:32}
  \lim_{t\to 0}\tfrac {\rad (u(t))}
{\sqrt t}=\sqrt 6.
\end{equation} This $u_0\notin \dot
B^{1/2,0}_{2,\infty}$  and hence $u_0\notin \dot H^{1/2}$  (note the
inclusions (\ref{eq:13})).

Note, in comparison with  (\ref{eq:32}),  that by  Theorem \ref{prop:ana_small_times1} for any $u_0\in
H^{1/2}$ the corresponding  solution obeys
\begin{equation}
  \label{eq:32H}
  \lim_{t\to 0}\tfrac {\rad (u(t))}
{\sqrt t}=\infty.
\end{equation}
  Moreover the
theory of Subsections \ref{Local solvability in dot r
and r for
large data} and \ref{Analyticity bounds in
    dot r and  r
    for small times and for large data} can be extended to the case of
   data
   $u_0\in\dot B^{1/2,0}_{2,\infty}$, cf. a  discussion in the
   beginning of the proof of Proposition \ref{prop:kato_small_data}. In particular for $u_0\in L^2\cap
   \dot B^{1/2,0}_{2,\infty}$ (\ref{eq:32H}) remains true for the
   corresponding solution. In fact the theory can be extended to the
   case  $u_0\in \dot B^{1/2}_{2,\infty}$ provided that  $\|u_0\|_{
     \dot B^{1/2}_{2,\infty}}$ is sufficiently small (so that
   (\ref{eq:6}) is fulfilled). This  leads to the
   existence of a
   unique real-analytic global solution, cf.   Subsection \ref{Global
     solvability in dot 1/2 for small data}. However in that case we can
   only conclude   weaker analyticity statements. To be specific, if  $u_0\in L^2\cap \dot
   B^{1/2}_{2,\infty}$ and $\|u_0\|_{
     \dot B^{1/2}_{2,\infty}}$ is sufficiently small we can conclude
   that $\liminf _{t\to 0}\big (\rad (u(t))/
\sqrt t\big )\geq \kappa$ for some $\kappa >0$ that depends  on
the (small)  norm $\|u_0\|_{
     \dot B^{1/2}_{2,\infty}}$. We have not calculated this norm  for
   the specific example given below, and consequently we do not know whether the example and therefore in particular
   (\ref{eq:32}) fit into this extended theory.

\subsubsection{An example}\label{subsubsec:example}
 Motivated by \cite{WJZHJ} we use the Hopf-Cole transformation \cite{Ev} and obtain a solution of the vector Burgers' equation: If $v=v(t,x)$ is  a positive solution to the
heat equation
\begin{equation}
  \label{eq:33}
  \frac{\partial }{\partial t}v=\triangle v;\ t>0,
\end{equation} then $w:=-2\ln v$ fulfills
\begin{equation}
  \label{eq:34}
 \frac{\partial }{\partial t}w+\tfrac12 |\nabla w|^2=\triangle w.
\end{equation} Taking first order partial derivatives in (\ref{eq:34})
we get
\begin{equation}
  \label{eq:34h}
 \frac{\partial }{\partial t}\partial_jw+(\nabla w\cdot \nabla
)\partial_jw=\triangle \partial_jw;\ j=1,2,3.
\end{equation}

Next take $M=I$ (in fact $M$ can be any invertible real $3\times
3$--matrix), $P=I$ and $u=\nabla
w$ (or more generally $u=M^{-1}\nabla
w$) we obtain from (\ref{eq:34h})
\begin{equation}
  \label{eq:35}
  \frac{\partial }{\partial t}u+(Mu\cdot \nabla)u-\triangle
  u=0\mand u=Pu.
\end{equation}
In particular we have constructed a solution to (\ref{eq:equamotion})
with $u_0=u(0,\cdot)$.

%This general construction will be used again in
%Subsection \ref{Example2}.

We choose
\begin{equation}
  \label{eq:36}
  v(t,x)=1-(t+1)^{-3/2}\exp\parb{-\tfrac {|x|^2}{4(t+1)}}.
\end{equation} Clearly (\ref{eq:33}) holds. We compute
\begin{equation}
  \label{eq:38}
  u_j(t,x)=\partial_jw(t,x)=-(t+1)^{-5/2}x_j\exp\parb{-\tfrac {|x|^2}{4(t+1)}}/v(t,x),
\end{equation} from which we obtain
\begin{equation}
  \label{eq:40}
  (u_0)_j(x)=\partial_jw_0(x)=-x_j\exp\parb{-\tfrac {|x|^2}{4}}/\parbb{1-\exp\parb{-\tfrac {|x|^2}{4}}}.
\end{equation} Here the denominator vanishes like ${|x|^2}/{4}$
at $x=0$. Consequently the components of $u_0$ have a Coulomb
singularity at  $x=0$.  In Fourier space this behaviour corresponds
to a decay like $\xi_j|\xi|^{-3}$ at infinity, cf. (\ref{eq:48})
stated below. Whence indeed $u_0\in L^2\cap\parb{\dot
B^{1/2}_{2,\infty}\setminus \dot B^{1/2,0}_{2,\infty}}$.

As for the property (\ref{eq:31}) we notice the (continuous)
embedding, cf. (\ref{eq:13}),
\begin{equation}
  \label{eq:44}
  L^2\cap \dot B^{1/2}_{2,\infty}\subseteq H^r\mfor  r\in [0,1/2[,
\end{equation} which by a scaling argument leads to the bound
\begin{equation}
  \label{eq:47}
  \|f\|_{\dot H^r}\leq C_r\|f\|_{L^2}^{1-2r}\|f\|^{2r}_{\dot
    B^{1/2}_{2,\infty}}\mfor  f\in L^2\cap \dot B^{1/2}_{2,\infty}\mand r\in [0,1/2[.
\end{equation}
We note the properties
\begin{align}
  \label{eq:45}
 &u\in BC([0,\infty[,L^2),\\
\label{eq:45B}
 &u\in
B([0,\infty[,\dot B^{1/2}_{2,\infty}).
\end{align}
  Only (\ref{eq:45B}), or equivalently
  \begin{equation}
    \label{eq:41}
    \sup_{t\geq0}\|u(t,\cdot)\|_{\dot B^{1/2}_{2,\infty}}<\infty,
  \end{equation} needs an elaboration.

 To prove (\ref{eq:41})  we introduce for $\kappa\geq1$ and $j=1,2,3$ the
  functions
  \begin{align*}
    f_j(\kappa,y)&=y_j\exp\parb{-\tfrac {|y|^2}{4}}/\parb{\kappa-\exp\parb{-\tfrac
        {|y|^2}{4}}},\\
\tilde  f_j(\kappa,y)&=4y_j/\parb{\tilde \kappa+|y|^2};\
\tilde \kappa=4(\kappa-1),
  \end{align*} and notice that
  \begin{equation*}
    u_j(t,x)=-(t+1)^{-1/2}f_j((t+1)^{3/2},x/\sqrt{t+1}).
  \end{equation*} We pick $\chi\in C_c^\infty (\R)$ with $\chi(s)=1$
  for $|s|<1$. By a scaling argument (\ref{eq:41}) will follow from
  the bound
\begin{equation}
    \label{eq:41b}
    \sup_{\kappa\geq 1}\|\chi(|\cdot|)f_j(\kappa,\cdot)\|_{\dot B^{1/2}_{2,\infty}}<\infty.  \end{equation} The proof of (\ref{eq:41b}) relies on a comparison
  argument. We notice that
\begin{equation}
    \label{eq:41c}
    \sup_{\kappa\geq 1}\|\chi(|\cdot|)\tilde  f_j(\kappa,\cdot)\|_{\dot B^{1/2}_{2,\infty}}<\infty,
  \end{equation} which may be seen as follows: First we notice the representation of the Fourier transform
  \begin{equation}\label{eq:48}
    (F \tilde f_j)(\kappa,\xi)=C\partial_{\xi_j}\big \{|\xi|^{-1}\int
    _0^\infty s^{-3/2}\exp\parb{-(4s)^{-1}}\e^{-\tilde
      \kappa|\xi|^2s}\d s\big \}.
  \end{equation} By computing the derivative and
  then estimating the second exponential  $\leq 1$ we deduce the
  bound $|(F\tilde  f_j)(\kappa,\xi)|\leq C |\xi|^{-2}$ uniformly in
  $\kappa\geq 1$. Using this estimate and the convoluton integral
  representation of the product
  we obtain that $|(F\{\chi\tilde  f_j\})(\kappa,\xi)|\leq C \inp{\xi}^{-2}$ uniformly in
  $\kappa\geq 1$ from which (\ref{eq:41c}) follows.

Due to (\ref{eq:41c}) and (\ref{eq:13}) it suffices for (\ref{eq:41b}) to show
\begin{equation}
    \label{eq:41d}
    \sup_{\kappa\geq 1}\|\chi(|\cdot|)\{f_j(\kappa,\cdot)-\tilde f_j(\kappa,\cdot)\}\|_{H^{1}}<\infty.
  \end{equation} Clearly (\ref{eq:41d}) follows from the uniform
  pointwise bounds
  \begin{equation*}
   |\chi(|y|)\{f_j(\kappa,y)-\tilde
  f_j(\kappa,y)\}|\leq C|y|\mand  |\nabla \parb{\chi(|y|)\{f_j(\kappa,y)-\tilde
  f_j(\kappa,y)\}}|\leq C,
  \end{equation*}
 which in turn follow  from elementary Taylor expansion.

 We
conclude from (\ref{eq:47})--(\ref{eq:45B})  that indeed
\begin{equation}
  \label{eq:39}
  u\in BC([0,\infty[,H^r)\mfor  r\in [0,1/2[.
\end{equation}

As for the property (\ref{eq:32}) we claim more generally that
\begin{equation}
  \label{eq:49}
  \rad(u(t))^2 = 6(t+1)\ln(t+1)\mforall t> 0.
\end{equation}
  To see this note that $v(t,iy)$ is zero on the surface $|y|^2 = 6(t+1)\ln(t+1)$ and if $|y|^2 < (1-\epsilon)6(t+1)\ln(t+1)$ then $|v(t,x+iy)| > 1-(t+1)^{-3\epsilon/2}$.

%\subsection{Analyticity in small time variable}\label{Analyticity in
%  small time variable for large data}

\section{Analyticity bounds for all times}
In this section we shall study analyticity properties of the
global small data solutions of Proposition \ref{prop:kato big data}.
\subsection{Global analyticity bounds in $\dot H^{1/2}$  and  $H^{1/2}$
     for small  data} \label{Analyticity bounds in
    dot 1/2 and  1/2
    for all  times and for small  data}
%\subsection{Global analyticity in time variable for small  data}\label{Analyticity in time variable for small  data}
Let us begin this subsection by considering  $r\in [1/2, 3/2[$ as in
Subsection \ref{Analyticity bounds in
    dot r and  r
    for small times and for large data}. The contraction condition
  (\ref{eq:6}) leads to the following combination of (\ref{eq:18}) and
  (\ref{eq:8bba})
\begin{equation}
  \label{eq:25bb}
  C_4\inp{\lambda}^{5/2-r}\e^{\lambda^2/2}T^{r/2-1/4}<1;\
  C_4=4C_1
  C_3\|A^ru_0\|.
\end{equation} The constants $C_1$ and $ C_3$ from
(\ref{eq:18}) and (\ref{eq:8bba}), respectively, are independent of
$\lambda$, $T$ and $u_0$ (but depend on $r$).

Obviously (\ref{eq:25bb}) cannot be fulfilled for $T=\infty$ unless
$r=1/2$. On the other hand if $r=1/2$ and $\|A^ru_0\|$ is sufficiently
small the condition is fulfilled for  $T=\infty$ for $\lambda\geq 0$
smaller than some critical positive number. This observation leads to
the following global analyticity result (using again Lemma
\ref{Lemma:spect_bnd}, (\ref{eq:7cq}) and (\ref{eq:7bbbq})):

\begin{prop}\label{prop:kato big data anal} Suppose $u_0\in\dot
H^{1/2}$ and that the constant $C_4=4C_1
  C_3\|A^{1/2}u_0\|$ in (\ref{eq:25bb}) (with $r=1/2$) obeys
  $C_4<1$. For $u_0\neq 0$ define $\bar \lambda>0$ as the solution to  the equation
  \begin{equation*}
   4C_1
  C_3\|A^{1/2}u_0\|\inp{\bar \lambda}^{2}\e^{\bar \lambda^2/2}=1.
  \end{equation*} If $u_0= 0$ define $\bar \lambda=\infty$.

  Then the solution $X$ to the
integral equation (\ref{eq:5})  as constructed in Proposition
\ref{prop:kato big data} obeys the following bounds uniformly in
$\lambda\in[0,\bar \lambda[$ and $t>0$
\begin{subequations}
  \begin{align}
  \label{eq:50}
  \|A^{5/4}\e^{\lambda\sqrt tA}X(t)\|&\leq
  2C_1\|A^{1/2}u_0\|\inp{\lambda}^{3/4}\e^{\lambda^2/4}t^{-3/8},\\
\|A^{1/2}\e^{\lambda\sqrt tA}X(t)\|&\leq
  C\parbb{\e^{\lambda^2/4}\|A^{1/2}u_0\|+\inp{\lambda}^2\e^{3\lambda^2/4}\parb{2C_1\|A^{1/2}u_0\|}^2},\label{eq:50b}
\end{align} and if in addition $u_0\in L^2$,
\begin{equation}
  \label{eq:52}
  \|\e^{\lambda\sqrt tA}X(t)\|\leq
  \widetilde C\parbb{\e^{\lambda^2/4}\|u_0\|+\inp{\lambda}^{3/2}\e^{3\lambda^2/4}\parb{2C_1\|A^{1/2}u_0\|}^2t^{1/4}}.
\end{equation}
\end{subequations}
\end{prop}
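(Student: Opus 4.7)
The plan is to apply Theorem~\ref{prop:ana_small_times1} with $r = 1/2$, $\tilde r = 5/4$ (so $s_1 = 3/8$, $s_2 = 5/4$) on the \textit{infinite} interval $I = ]0,\infty[$, treating $\lambda \in [0,\bar\lambda[$ as a fixed parameter. The crucial simplification at $r = 1/2$ is that the factor $T^{r/4-1/8}$ appearing in both (\ref{eq:18}) and (\ref{eq:8bba}) equals~$1$ independently of $T$; consequently the proof of Theorem~\ref{prop:ana_small_times1} adapts verbatim to $I = ]0,\infty[$, giving $|Y|_\lambda \leq C_1\|A^{1/2}u_0\|$ and $\gamma \leq C_3\inp{\lambda}^{2}\e^{\lambda^2/2}$. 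The contraction condition (\ref{eq:6}) with $R = |Y|_\lambda$ then becomes $4C_1C_3\|A^{1/2}u_0\|\inp{\lambda}^{2}\e^{\lambda^2/2} < 1$, which is precisely the defining inequality of $\bar\lambda$.

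This yields, for each $\lambda \in [0,\bar\lambda[$, a fixed point $X_\lambda \in B_{2|Y|_\lambda} \subset \vB_{\zeta_\lambda,\theta_\lambda,I,s_1,s_2}$, where $\zeta_\lambda = \lambda^2/4 + (3/4)\ln\inp{\lambda}$ and $\theta_\lambda(t) = \lambda\sqrt{t}$. To identify $X_\lambda$ with the small-data global solution $X$ of Proposition~\ref{prop:kato big data}, I would observe that both arise as limits of the same sequence of Picard iterates $X_0 = Y$, $X_n = Y + B(X_{n-1},X_{n-1})$, the former in the analyticity norm and the latter in the plain one. Convergence in $\vB_{\zeta_\lambda,\theta_\lambda}$ forces pointwise convergence in $\dot H^{5/4}$ at every fixed $t > 0$ (since $\e^{\lambda\sqrt{t}A} \geq I$ on $L^2$), and the analogous statement holds for convergence in $\vB$; hence the two limits agree pointwise and $X = X_\lambda$. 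This matching across distinct Banach spaces is the main conceptual point to handle carefully.

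With the identification in hand, the bound $|X|_\lambda \leq 2|Y|_\lambda \leq 2C_1\|A^{1/2}u_0\|$, combined with $\e^{\zeta_\lambda} = \e^{\lambda^2/4}\inp{\lambda}^{3/4}$, unpacks directly into (\ref{eq:50}). For (\ref{eq:50b}) I split $X = Y + B(X,X)$ and bound each term: the linear piece via Lemma~\ref{Lemma:spect_bnd} (with $\alpha = 0$) applied to $A^{1/2}u_0$, contributing $C\e^{\lambda^2/4}\|A^{1/2}u_0\|$, and the nonlinear piece via (\ref{eq:7cq}) at $r = 1/2$ together with the just-established bound on $|X|_\lambda$, contributing $C\inp{\lambda}^{2}\e^{3\lambda^2/4}(2C_1\|A^{1/2}u_0\|)^2$. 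Finally, (\ref{eq:52}) is obtained by the same decomposition when $u_0 \in L^2$, using Lemma~\ref{Lemma:spect_bnd} (with $\alpha = 0$) applied directly to $u_0$ and invoking (\ref{eq:7bbbq}) at $r = 1/2$ for the bilinear term, which generates the $t^{1/4}$ factor. Throughout, the constants $C_1$ and $C_3$ from (\ref{eq:18}) and (\ref{eq:8bba}) are $\lambda$-independent, which is exactly what is needed for the estimates to hold uniformly in $\lambda \in [0,\bar\lambda[$.
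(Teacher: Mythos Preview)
Your argument is essentially the same as the paper's: specialize the machinery of Theorem~\ref{prop:ana_small_times1} to $r=1/2$ on $I=]0,\infty[$ so that the contraction condition reduces to the defining relation for $\bar\lambda$, and then read off the three bounds from $|X|_\lambda\le 2|Y|_\lambda$, Lemma~\ref{Lemma:spect_bnd}, and the estimates (\ref{eq:7cq}) and (\ref{eq:7bbbq}). Your explicit identification of $X_\lambda$ with the solution $X$ of Proposition~\ref{prop:kato big data} via the common Picard iterates is a point the paper leaves implicit (cf.\ Remark~\ref{remarks:uniq}~\ref{item:2}), so your treatment is in fact slightly more detailed than the original.
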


\subsection{Improved global analyticity bounds in $\dot H^{1/2}$  and  $H^{1/2}$
    for small  data} \label{Improved analyticity
    bounds in dot 1/2 and  1/2
    for all  times and for small  data}
In this subsection we improve on the results of Subsection \ref{Analyticity bounds in
    dot 1/2 and  1/2
    for all  times and for small  data} along the line of the method
  of Subsection \ref{Improved analyticity
    bounds in dot r and  r, r>
  1/2,
    for small times and for large data}. Whence we fix $u_0\in\dot
H^{1/2}$ and $T\in]0,\infty[$ and define the underlying Banach space $\vB$ in terms of
 the finite interval $I=]0,T]$ and a
time-dependent choice of $\lambda$ (and (\ref{eq:20}) with $\tilde
r-r=3/4$, and the parameters  $s_1=3/8$ and $s_2=5/4$). Explicitly we choose
$\lambda=\lambda(t)$ as in (\ref{eq:21}) to be used in the expression
(\ref{eq:20}) (with $\tilde r-r=3/4$). As in the previous subsection
we will need the parameter $\lambda_0$ of (\ref{eq:21}) to be smaller
than a certain critical positive number dictated by the  contraction
condition (\ref{eq:25}) (with $r=1/2$). That is we need
\begin{equation}
  \label{eq:25bbc}
  C_4\inp{\lambda_0}^{2}\e^{\lambda_0^2/4}<1;\
  C_4=4C_1
  C_3\|A^{1/2}u_0\|;
\end{equation} here the  constants $C_1$ and $ C_3$ from
(\ref{eq:18}) and (\ref{eq:24}), respectively, are (again) independent of
$\lambda$, $T$ and $u_0$. (Notice that (\ref{eq:25bbc}) ``improves''
(\ref{eq:25bb}) (for $r=1/2$)  in that the exponent $\lambda^2/2\to
\lambda_0^2/4$.) Mimicking the proofs of Theorem
\ref{thm:impr_ana_small_times1} and Corollary \ref{cor:small} we
obtain under the condition (\ref{eq:25bbc}) the following improvement
of Proposition \ref{prop:kato big data anal}:
\begin{thm}
\label{thm:global anal}
  \begin{enumerate}[i)]
  \item \label{item:12}
Suppose $u_0\in\dot
H^{1/2}$ and that the constant $C_4=4C_1
  C_3\|A^{1/2}u_0\|$ in (\ref{eq:25bbc})  obeys
  $C_4<1$. For $u_0\neq 0$ define $\bar \lambda>0$ as the solution to  the equation
  \begin{equation}\label{eq:51}
   4C_1
  C_3\|A^{1/2}u_0\|\inp{\bar \lambda}^{2}\e^{\bar \lambda^2/4}=1.
  \end{equation} If $u_0= 0$ define $\bar \lambda=\infty$.

Then the solution $X$ to the integral equation (\ref{eq:5}) as constructed in Proposition \ref{prop:kato big data} obeys the following bounds uniformly in
$\lambda_0\in[0,\bar \lambda[$ and $t>0$
\begin{subequations}
 \begin{align}
  \label{eq:50c}
  \|A^{5/4}\e^{\lambda_0\sqrt tA}X(t)\|&\leq
  2C_1\|A^{1/2}u_0\|\inp{\lambda_0}^{3/4}\e^{\lambda_0^2/4}t^{-3/8},\\
\|A^{1/2}\e^{\lambda_0\sqrt tA}X(t)\|&\leq
  C\parbb{\e^{\lambda_0^2/4}\|A^{1/2}u_0\|+\inp{\lambda_0}^2\e^{\lambda_0^2/2}\parb{2C_1\|A^{1/2}u_0\|}^2}.\label{eq:50bc}
\end{align}

\item \label{item:12b} Suppose  in addition that $u_0\in L^2$. Then
\begin{equation}
  \label{eq:52c}
  \|\e^{\lambda_0\sqrt tA}X(t)\|\leq
  \widetilde C\parbb{\e^{\lambda_0^2/4}\|u_0\|+\inp{\lambda_0}^{3/2}\e^{\lambda_0^2/2}\parb{2C_1\|A^{1/2}u_0\|}^2t^{1/4}}.
\end{equation}
\end{subequations}
\end{enumerate}
\end{thm}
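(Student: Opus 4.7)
The strategy is to mirror the improvement technique of Subsection \ref{Improved analyticity bounds in dot r and  r, r> 1/2, for small times and for large data}, now applied at the critical exponent $r=1/2$ where the Cauchy problem admits a genuinely global treatment. Fix $T\in]0,\infty[$ and $\lambda_0\in[0,\bar\lambda[$, and work in the Banach space $\vB=\vB_{\zeta,\theta,I,s_1,s_2}$ with $I=]0,T]$, $s_1=3/8$, $s_2=5/4$, and $\zeta,\theta$ defined by (\ref{eq:20}) (with $\tilde r-r=3/4$) evaluated along the time-dependent curve $\lambda(t)=\lambda_0\sqrt{t/T}$ from (\ref{eq:21}). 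The two crucial estimates are already in hand: the linear bound (\ref{eq:18}) with $r=1/2$ yields $|Y|\leq C_1\|A^{1/2}u_0\|$ independently of both $T$ and $\lambda_0$, and the improved bilinear bound (\ref{eq:24}) with $r=1/2$ yields $\gamma=C_3\inp{\lambda_0}^{2}\e^{\lambda_0^2/4}$, which is also $T$-independent since $r/4-1/8=0$. Setting $R=C_1\|A^{1/2}u_0\|$, the contraction condition $4\gamma R<1$ is precisely (\ref{eq:25bbc}), which by the defining equation (\ref{eq:51}) is valid for every $\lambda_0\in[0,\bar\lambda[$ uniformly in $T>0$.

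Consequently the abstract scheme of Subsection \ref{Abstract scheme} produces a unique fixed point $X\in B_{2R}\subseteq\vB$ on every $]0,T]$, and the uniqueness-across-spaces argument invoked in Remarks \ref{remarks:uniq} \ref{item:2} identifies this $X$ on $]0,T]$ with the global solution from Proposition \ref{prop:kato big data}. The explicit bounds then follow by unpacking $|X|\leq 2|Y|$ and the identity $X=Y+B(X,X)$ at the right endpoint, exploiting that $\lambda(T)=\lambda_0$ so that the weight $\e^{\theta(T)A}$ equals $\e^{\lambda_0\sqrt TA}$; since $T>0$ is arbitrary we may relabel $T\to t$ to obtain bounds for all $t>0$.

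For (\ref{eq:50c}) I would sharpen the trivial bound on $Y$ by writing $A^{5/4}=t^{-3/8}(\sqrt tA)^{3/4}A^{1/2}$ and invoking Lemma \ref{Lemma:spect_bnd} with $\alpha=3/4$ to get
\begin{equation*}
t^{3/8}\|\e^{\lambda(t)\sqrt tA}A^{5/4}\e^{-tA^2}u_0\|\leq C_1\inp{\lambda(t)}^{3/4}\e^{\lambda(t)^2/4}\|A^{1/2}u_0\|\leq C_1\inp{\lambda_0}^{3/4}\e^{\lambda_0^2/4}\|A^{1/2}u_0\|,
\end{equation*}
since $\lambda(t)\leq\lambda_0$ on $I$. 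Combining with $|X|\leq 2|Y|$ and evaluating at $t=T$ yields (\ref{eq:50c}). For (\ref{eq:50bc}) I would split $X=Y+B(X,X)$ and bound the linear term in $\dot H^{1/2}$ by Lemma \ref{Lemma:spect_bnd} with $\alpha=0$, producing the factor $\e^{\lambda_0^2/4}\|A^{1/2}u_0\|$, and the bilinear term by (\ref{eq:7cql}) with $r=1/2$, producing $\inp{\lambda_0}^{2}\e^{\lambda_0^2/2}|X|^{2}\leq\inp{\lambda_0}^{2}\e^{\lambda_0^2/2}\parb{2C_1\|A^{1/2}u_0\|}^{2}$. The bound (\ref{eq:52c}) uses the same decomposition, estimating the bilinear contribution instead by (\ref{eq:7bbbql}) (which carries the exponent $5/2-2r=3/2$ and brings in the $t^{1/4}$ factor) and the linear $L^2$ term again by Lemma \ref{Lemma:spect_bnd}.

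The main obstacle is conceptual rather than technical: the improvement from $\e^{\lambda^{2}/2}$ in Proposition \ref{prop:kato big data anal} to $\e^{\lambda_0^{2}/4}$ here rests entirely on the cancellation (\ref{eq:23}) built into the time-dependent choice (\ref{eq:21}), so one must verify that this cancellation survives intact when $T=\infty$ is admitted. That is precisely the content of the observation $r/4-1/8=0$ at the critical scale $r=1/2$: the $T^{r/4-1/8}$ factor in (\ref{eq:24}) degenerates to $1$, so the contraction threshold becomes genuinely global, and $\bar\lambda$ in (\ref{eq:51}) is determined by a $T$-free equation in $\lambda_0$ alone.
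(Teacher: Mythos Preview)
Your proposal is correct and follows essentially the same route as the paper: fix a finite $T$, use the time-dependent choice $\lambda(t)=\lambda_0\sqrt{t/T}$ in the weighted norm, invoke (\ref{eq:18}) and (\ref{eq:24}) at $r=1/2$ so that the contraction threshold becomes the $T$-free condition (\ref{eq:25bbc}), identify the fixed point with the global solution of Proposition \ref{prop:kato big data}, and read off the bounds at $t=T$ via (\ref{eq:7cql}), (\ref{eq:7bbbql}) and Lemma \ref{Lemma:spect_bnd}. One small redundancy: your ``sharpening'' of $|Y|$ via Lemma \ref{Lemma:spect_bnd} with $\alpha=3/4$ is already built into the weight $\e^{-\zeta}$ in the norm, so (\ref{eq:50c}) follows directly from $|X|\leq 2|Y|\leq 2C_1\|A^{1/2}u_0\|$ after unwinding the definition of $|\cdot|$ at $t=T$.
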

We will use the following corollary in Subsection \ref{Completing the proof}.  We omit its proof.
\begin{cor}\label{cor:o(1)}
Suppose $u_0\in\dot
H^{1/2}$ and that the constant $C_4=4C_1
  C_3\|A^{1/2}u_0\|$ in (\ref{eq:25bbc})  obeys
  $C_4<1$.  Suppose $ \liminf_{t \to \infty}\|A^{1/2}X(t)\| = 0$ where $X$ is the solution to the integral equation (\ref{eq:5}) as constructed in Proposition \ref{prop:kato big data}. Then for any $\lambda \geq 0$, as $t \to \infty$
\begin{subequations}
 \begin{align}
t^{3/8}\|A^{5/4}\e^{\lambda \sqrt tA}X(t)\| = o(1),\\
\|A^{1/2}\e^{\lambda \sqrt tA}X(t)\| = o(1).
\end{align}
\end{subequations}
\end{cor}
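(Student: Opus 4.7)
The plan is to shift the time origin to a late time $t_n$ where $\|A^{1/2}X(t_n)\|$ is small, apply Theorem \ref{thm:global anal} to the shifted solution, and then pass the resulting analyticity weight $e^{\lambda_0\sqrt{t-t_n}A}$ to the desired weight $e^{\lambda\sqrt t A}$.

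By hypothesis we may pick a sequence $t_n\to\infty$ with $\varepsilon_n:=\|A^{1/2}X(t_n)\|\to 0$. By uniqueness in Proposition \ref{prop:kato big data} combined with the time-translation invariance of the integral equation (\ref{eq:1}), the function $\tilde X_n(s):=X(t_n+s)$, $s\ge 0$, coincides with the solution constructed by Proposition \ref{prop:kato big data} (and hence by Theorem \ref{thm:global anal}) starting from the data $X(t_n)\in\dot H^{1/2}$, provided the associated smallness condition holds. The routine checks that $\tilde X_n$ lies in the relevant space $\vB^0$ follow from the global bounds of Proposition \ref{prop:kato big data anal} applied to $X$ and from density of $H^{5/4}$ in $\dot H^{1/2}$ (to verify the vanishing condition (\ref{eq:14bb}) at $s=0$ via approximation of $X(t_n)$). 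Now fix $\lambda\ge 0$ and set $\lambda_0:=\lambda+1$. The critical radius $\bar\lambda(t_n)$ defined by $4C_1C_3\,\varepsilon_n\,\inp{\bar\lambda(t_n)}^{2}e^{\bar\lambda(t_n)^2/4}=1$ tends to $\infty$ as $\varepsilon_n\to 0$, hence $\lambda_0<\bar\lambda(t_n)$ for all sufficiently large $n$, and Theorem \ref{thm:global anal} gives, uniformly in $s>0$,
\begin{align*}
s^{3/8}\|A^{5/4}\e^{\lambda_0\sqrt{s}A}\tilde X_n(s)\|&\le 2C_1\varepsilon_n\,\inp{\lambda_0}^{3/4}e^{\lambda_0^2/4},\\
\|A^{1/2}\e^{\lambda_0\sqrt{s}A}\tilde X_n(s)\|&\le C\parb{e^{\lambda_0^2/4}\varepsilon_n+\inp{\lambda_0}^{2}e^{\lambda_0^2/2}(2C_1\varepsilon_n)^{2}}.
\end{align*}

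Next I would compare the two weights. Setting $s=t-t_n$, the spectral theorem yields $\|e^{\lambda\sqrt t A}f\|\le\|e^{\lambda_0\sqrt{t-t_n}A}f\|$ on $L^2$ as soon as $\lambda\sqrt t\le \lambda_0\sqrt{t-t_n}$; writing $\sqrt t-\sqrt{t-t_n}=t_n/(\sqrt t+\sqrt{t-t_n})\to 0$ as $t\to\infty$ with $t_n$ fixed, one sees that $\lambda_0\sqrt{t-t_n}-\lambda\sqrt t=\sqrt{t-t_n}-\lambda\bigl(\sqrt t-\sqrt{t-t_n}\bigr)\to+\infty$, so the inequality holds for all $t\ge T_n$ with some $T_n$. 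Since $A^{5/4}$ and $A^{1/2}$ commute with $e^{cA}$, we obtain
\begin{equation*}
\|A^{5/4}\e^{\lambda\sqrt tA}X(t)\|\le\|A^{5/4}\e^{\lambda_0\sqrt{t-t_n}A}X(t)\|,\quad \|A^{1/2}\e^{\lambda\sqrt tA}X(t)\|\le\|A^{1/2}\e^{\lambda_0\sqrt{t-t_n}A}X(t)\|,
\end{equation*}
for $t\ge T_n$, and multiplying the first of these by $t^{3/8}$ introduces only the benign factor $(t/(t-t_n))^{3/8}\to 1$.

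Combining the two displays yields
\begin{equation*}
\limsup_{t\to\infty}t^{3/8}\|A^{5/4}\e^{\lambda\sqrt tA}X(t)\|\le 2C_1\varepsilon_n\,\inp{\lambda_0}^{3/4}e^{\lambda_0^2/4},
\end{equation*}
and likewise $\limsup_{t\to\infty}\|A^{1/2}\e^{\lambda\sqrt tA}X(t)\|\le C\bigl(e^{\lambda_0^2/4}\varepsilon_n+\inp{\lambda_0}^{2}e^{\lambda_0^2/2}(2C_1\varepsilon_n)^{2}\bigr)$. Letting $n\to\infty$ makes both right-hand sides vanish, which is the desired $o(1)$ statement. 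The main technical obstacle is the passage between the two analyticity weights $\e^{\lambda\sqrt tA}$ and $\e^{\lambda_0\sqrt{t-t_n}A}$: the margin $\lambda_0-\lambda=1$ is exactly what buys back the correction $\lambda(\sqrt t-\sqrt{t-t_n})$ for large $t$, and without this margin the operator $\e^{(\lambda\sqrt t-\lambda_0\sqrt{t-t_n})A}$ would fail to be a contraction. A minor subsidiary point is verifying that $\tilde X_n\in\vB^0$ so that Theorem \ref{thm:global anal} genuinely applies, which is handled by the density argument sketched above.
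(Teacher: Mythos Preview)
The paper explicitly omits the proof of this corollary, so there is no reference argument to compare against. Your time-shift argument is correct and is precisely the technique the authors themselves deploy in Section~\ref{Optimal rate of growth of analyticity radius } (see the opening of Subsection~\ref{Optimizing the bound of the analyticity radius for large times}, where they apply Theorem~\ref{thm:global anal} with $u_0\to X(T)$ and let the critical $\bar\lambda(T)\to\infty$); so your approach is exactly what the paper would have written out, and the margin $\lambda_0=\lambda+1$ to absorb the discrepancy $\lambda(\sqrt t-\sqrt{t-t_n})$ is the natural device.
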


\begin{remark}
  \label{remark:global anal} If  $u_0\in
 H^{1/2}$ and $\bar \lambda>0$ obeys (\ref{eq:51}) then clearly $\bar
\lambda \sqrt t$ is a lower bound of $\rad (X(t))$. In particular
in the sense of taking  the limit
$\|A^{1/2}u_0\|\rightarrow 0$ we obtain
\begin{equation}
  \label{eq:53}
  \liminf \tfrac {\rad (X(t))}{\sqrt{-4\ln \|A^{1/2}u_0\|}\sqrt
  t}\geq 1\text{ uniformly in } t>0.
\end{equation}
 Clearly this statement is weak  in the
 small time regime compared to (\ref{eq:32H}). On the other hand,
  as demonstrated in Section \ref{Optimal rate of growth of analyticity
    radius },  (\ref{eq:53}) is useful for obtaining
 further improved bounds of  the  analyticity radius in
 the large time regime.
\end{remark}

\section{Differential inequalities for small global solutions in $\dot
  H^{1/2}$  and  $H^{1/2}$} \label{Differential inequalities for small
  global solutions in dot 1/2  and  1/2}
In this section we continue our  study of  analyticity bounds of global small
data   solutions in $\dot
  H^{1/2}$  and  $H^{1/2}$ initiated in the previous section.  This
  study will be continued and completed in Section \ref{Optimal rate of growth of analyticity
    radius } where some optimal analyticity radius bounds in
 the large time regime will be presented. We  impose throughout this
 section the
 conditions of Theorem \ref{thm:global anal} \ref{item:12}. Notice
 that a simplified version of the bound (\ref{eq:50bc})
  takes the form
  \begin{equation}
    \label{eq:55}
    \|A^{1/2}\e^{\lambda_0\sqrt tA}X(t)\|\leq
  \breve C\inp{\bar \lambda}^{-2}\mforall \lambda_0\in [0,\bar
  \lambda[\mand t>0.
  \end{equation}  Our goal is twofold:
\begin{enumerate}[1)]
\item\label{item:5i}  Under an additional decay condition of the quantity $\|A^{1/2}X(t)\|$ we shall  improve on the right hand side of (\ref{eq:55}) in
 the large time regime. A similar improvement of (\ref{eq:52c}) will
  be established  in terms of decay of the quantity $\|X(t)\|$.
\item\label{item:6i}  Under an additional decay condition on the
  quantity $\|X(t)\|$ we shall
  show decay of the quantity $\|A^{1/2}X(t)\|$.
\end{enumerate}
 We think \ref{item:5i} has some independent interest, although more
 refined bounds will be presented in Section \ref{Optimal rate of growth of analyticity
    radius } (in particular presumably better
 bounds on analyticity radii than can be derived from the methods
 presented here). As for
 \ref{item:6i}, our  result will be used in Section \ref{Optimal rate of growth of analyticity
    radius }.

The  analysis is partly inspired by \cite{FT}, \cite{Sc1} and \cite{OT}.

\subsection{Energy inequality} \label{Energy inequality} Partly as a
motivation we recall here a
version of the
 energy inequality well-known for a class of solutions to
 (\ref{eq:equamotiona}). We  state it for the  function $X$ of
 Theorem \ref{thm:global anal} \ref{item:12} subject to the further conditions
  \begin{equation}
    \label{eq:54}
    u_0=Pu_0\in L^2\mand \nabla\cdot \parb{MX(t)}=0\mforall t>0.
  \end{equation} Notice that the second condition of (\ref{eq:54}) is
  fulfilled for the problem (\ref{eq:equamotiona}). Using Theorem \ref{thm:global anal},
  (\ref{eq:equamotion}) and (\ref{eq:54}) we can derive
  \begin{equation}
    \label{eq:56}
   \tfrac{\d }{\d t}\|X(t)\|^2=-2\|AX(t)\|^2 +2\inp[\big]{X(t),(MX(t)\cdot \nabla)X(t)}=-2\|AX(t)\|^2\mforall t>0.
  \end{equation} We refer the
   reader to Subsection \ref{Global
     solutions} for a  discussion relevant for this
   derivation. (Actually a more general result than (\ref{eq:56}) is
   stated in  Corollary
   \ref{cor:energybound2}  in Subsection \ref{Sobolev and analiticity  bounds for finite
  intervals}.) We obtain by integrating (\ref{eq:56})
  \begin{equation}
    \label{eq:57}
    \|X(t)\|^2=\|u_0\|^2-2\int_0^t\|AX(s)\|^2\,\d s\mforall t>0.
  \end{equation} In particular the {\it energy inequality }$
  \|X(t)\|^2\leq \|u_0\|^2$ holds. Clearly this bound improves the
  bound $ \|X(t)\|=O(t^{1/4})$ of (\ref{eq:52}) at
  infinity.

We notice that for the problem
(\ref{eq:equamotiona})  and under the above conditions
 it can be proven  that  the quantity $ \|X(t)\|=o(t^0)$  as $t\to \infty$. Under some further (partly generic)
conditions it is shown in \cite{Sc1} that  $ \|X(t)\|=O(t^{-5/4})$
while $\|X(t)\|\neq o(t^{-5/4})$.

\subsection{Differential inequalities for exponentially weighted Sobolev norms}\label{Differential inequalities for analyticity radius growing
  like sqrt t}
Under the conditions of
 Theorem \ref{thm:global anal} \ref{item:12}
introduce for $r\geq 0$ and $\lambda_0\in ]0,\bar \lambda[$ the quantities
\begin{equation}
  \label{eq:58}
  J_r(t)=\|A^rX(t)\|^2\mand G_r(t)=\|A^r\e^{\lambda_0\sqrt
    tA}X(t)\|^2\mfor t>0.
\end{equation} We are mainly interested in these quantities for
$r=1/2$ and $r=0$. Any consideration for $r\in [0, 1/2[$ will involve the
additional requirement $u_0\in L^2$. Under the additional conditions (\ref{eq:54}) we have,
due to the previous subsection,
the a priori bound $J_{0}(t)=O(t^{-\sigma})$  with  $\sigma=0$ for
   $t\to \infty$ however in the following we do not assume (\ref{eq:54}).
\begin{lemma}
  \label{lemma:dif-eq} For all  $\kappa>0$
   and $\lambda_0\in ]0,\bar \lambda[$ there exists
  $K=K(\kappa, \lambda_0)>0$
  such that
  \begin{equation}
    \label{eq:59}
    \tfrac{\d }{\d t}G_{1/2}(t)\leq -\kappa
    t^{-1}G_{1/2}(t)+Kt^{-1}J_{1/2}(t) \mforall t>0.
  \end{equation}
\end{lemma}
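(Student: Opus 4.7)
\medskip

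\noindent\textbf{Proof plan.} Set $W(t) = e^{\lambda_0\sqrt t\,A}X(t)$, so $G_{1/2}(t) = \|A^{1/2}W(t)\|^2$. The plan is to differentiate $G_{1/2}$ directly, isolate the standard dissipation $-2\|A^{3/2}W\|^2$, absorb the cubic term and the weight-derivative term into a portion of this dissipation, and finally compare the remaining $-\|A^{3/2}W\|^2$ to $t^{-1}G_{1/2}$ by a low/high frequency split. Concretely, using $\partial_t X = -A^2 X - P(MX\cdot\nabla)X$ together with $\partial_t[e^{\lambda_0\sqrt t A}] = \tfrac{\lambda_0}{2\sqrt t}Ae^{\lambda_0\sqrt t A}$, one computes
\begin{equation*}
\tfrac{d}{dt}G_{1/2}(t) = \tfrac{\lambda_0}{\sqrt t}\|AW\|^2 - 2\|A^{3/2}W\|^2 - 2\Re\langle AW,\,e^{\lambda_0\sqrt t A}P(MX\cdot\nabla)X\rangle.
\end{equation*}

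\noindent The cubic term is estimated by moving the exponential weight past the bilinear product in Fourier using $|\xi|\le|\eta|+|\xi-\eta|$ and then applying Lemma \ref{lemma:sobolev} with $r=5/4$, yielding $\|e^{\lambda_0\sqrt t A}P(MX\cdot\nabla)X\|\le C\|A^{5/4}W\|^2$. Combined with Cauchy--Schwarz and the log-convexity bounds $\|AW\|^2\le\|A^{1/2}W\|\|A^{3/2}W\|$ and $\|A^{5/4}W\|^2\le\|A^{1/2}W\|^{1/2}\|A^{3/2}W\|^{3/2}$, the cubic contribution is bounded by $2C\|A^{1/2}W\|\|A^{3/2}W\|^2 = 2C\sqrt{G_{1/2}(t)}\,\|A^{3/2}W\|^2$. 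By the uniform bound \eqref{eq:55}, $\sqrt{G_{1/2}}\le \breve C\inp{\bar\lambda}^{-2}$ is small (under the standing smallness $C_4<1$, assumed tight enough), so the cubic term absorbs into $\tfrac{1}{2}\|A^{3/2}W\|^2$. For the linear weight term, Young's inequality gives $\tfrac{\lambda_0}{\sqrt t}\|AW\|^2\le \tfrac{\lambda_0^2}{2t}G_{1/2}+\tfrac12\|A^{3/2}W\|^2$. Together,
\begin{equation*}
\tfrac{d}{dt}G_{1/2}(t) \le \tfrac{\lambda_0^2}{2t}G_{1/2}(t) - \|A^{3/2}W(t)\|^2.
\end{equation*}

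\noindent To convert $-\|A^{3/2}W\|^2$ into a negative multiple of $t^{-1}G_{1/2}$ plus an error involving $J_{1/2}$, split Fourier space at $|\xi|=M$: on high frequencies $|\xi|\le|\xi|^3/M^2$, while on low frequencies $e^{2\lambda_0\sqrt t|\xi|}\le e^{2\lambda_0\sqrt t M}$, so
\begin{equation*}
G_{1/2}(t) \le e^{2\lambda_0\sqrt t M}J_{1/2}(t) + M^{-2}\|A^{3/2}W(t)\|^2.
\end{equation*}
Choosing $M = \sqrt{\mu/t}$ makes the exponential factor $e^{2\lambda_0\sqrt\mu}$ independent of $t$, giving $\|A^{3/2}W\|^2 \ge \tfrac{\mu}{t}\bigl(G_{1/2} - e^{2\lambda_0\sqrt\mu}J_{1/2}\bigr)$. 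Substituting yields
\begin{equation*}
\tfrac{d}{dt}G_{1/2} \le -\tfrac{\mu-\lambda_0^2/2}{t}G_{1/2} + \tfrac{\mu e^{2\lambda_0\sqrt\mu}}{t}J_{1/2}.
\end{equation*}
Given $\kappa>0$, pick $\mu = \kappa + \lambda_0^2/2$ and set $K = \mu e^{2\lambda_0\sqrt\mu}$, proving \eqref{eq:59}.

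\medskip

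\noindent\textbf{Main obstacle.} The delicate step is the nonlinear absorption: one must commute the exponential weight past $P(MX\cdot\nabla)X$ (handled by the convolution/triangle-inequality trick in Fourier) and then ensure that the prefactor $\|A^{1/2}W\|$ in $C\|A^{1/2}W\|\|A^{3/2}W\|^2$ is small enough to be swallowed by the dissipation. This smallness is precisely the content of \eqref{eq:55}; verifying that the Theorem \ref{thm:global anal} constants are compatible with the absorption (possibly demanding $C_4$ a bit smaller than $1$) is the only non-mechanical point. The frequency-splitting choice $M=\sqrt{\mu/t}$ is natural in view of the parabolic scaling and is what makes the error term proportional to $t^{-1}J_{1/2}$ rather than to a $t$-growing multiple of $J_{1/2}$.
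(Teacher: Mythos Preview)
Your differentiation of $G_{1/2}$, the Fourier triangle-inequality trick for moving $e^{\lambda_0\sqrt t A}$ past the bilinearity, and the final frequency-splitting step (which is equivalent to the paper's pointwise spectral inequality) are all correct. The gap is in the nonlinear absorption. You bound the cubic contribution by $2C\sqrt{G_{1/2}(t)}\,\|A^{3/2}W\|^2$ and then absorb it into $\tfrac12\|A^{3/2}W\|^2$ by invoking smallness of $\sqrt{G_{1/2}}$ from \eqref{eq:55}. But \eqref{eq:55} only gives $\sqrt{G_{1/2}}\le\breve C\inp{\bar\lambda}^{-2}$ with $\breve C$ a fixed universal constant; when $C_4$ is near $1$ (so $\bar\lambda$ is near $0$), $\inp{\bar\lambda}^{-2}$ is near $1$ and this bound carries no smallness whatsoever. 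Your parenthetical ``possibly demanding $C_4$ a bit smaller than $1$'' is an added hypothesis the lemma does not allow: it is stated under the standing condition $C_4<1$ only, for \emph{all} $\lambda_0\in]0,\bar\lambda[$.

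The paper avoids this by a trick you are missing. After Cauchy--Schwarz it has $R(t)\le G_{3/2}(t)+C^2G_1(t)^2$; it then picks an auxiliary $\lambda_1\in]\lambda_0,\bar\lambda[$ and uses
\[
G_1(t)^2\le G_{3/2}(t)G_{1/2}(t)\le \frac{\sup_{x\ge0}x^2e^{-2x}}{(\lambda_1-\lambda_0)^2\,t}\,G_{1/2,\lambda_1}(t)\,G_{1/2}(t)\le \widetilde C\,t^{-1}G_{1/2}(t),
\]
where $G_{1/2,\lambda_1}$ is bounded by \eqref{eq:55} applied at $\lambda_1$. The point is that the gap $\lambda_1-\lambda_0$ produces an explicit $t^{-1}$ decay, so the cubic term becomes $C^2\widetilde C\,t^{-1}G_{1/2}(t)$ with a possibly large constant $\widetilde C$. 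Largeness is harmless here: the resulting inequality $\tfrac{d}{dt}G_{1/2}\le -G_{3/2}+\lambda_0 t^{-1/2}G_1+C^2\widetilde C\,t^{-1}G_{1/2}$ is converted to \eqref{eq:59} by a single pointwise spectral bound in $x=\sqrt t\,|\xi|$, with $K$ absorbing whatever size $\widetilde C$ has. The $\lambda_1$-trick trades smallness for $t^{-1}$ decay, and that is precisely what lets the lemma hold for every $\lambda_0<\bar\lambda$ without strengthening $C_4<1$.
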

\begin{proof} We compute
  \begin{align}
    \label{eq:60}
  \tfrac{\d }{\d t}G_{1/2}(t)&=-2  G_{3/2}(t)+\lambda_0 t^{-1/2}G_{1}(t)+R(t);\\R(t)&=2\inp{A^{1/2}\e^{\lambda_0\sqrt
    tA}X(t),A^{1/2}\e^{\lambda_0\sqrt
    tA}P(MX(t)\cdot \nabla)X(t)}.\nonumber
  \end{align}
By using the Cauchy-Schwarz inequality and Lemma \ref{lemma:sobolev}
we are led to the bounds
\begin{equation}
  \label{eq:61}
  R(t)\leq  2G_{1}(t)^{1/2} CG_{1}(t)^{1/2} G_{3/2}(t)^{1/2} \leq
  G_{3/2}(t) +C^2G_1(t)^2 \mforall t>0.
\end{equation}

 Now, pick any $\lambda_1\in ]\lambda_0, \bar \lambda[$. We can
 estimate the second term on the right hand side of (\ref{eq:61}) by
 first
 using the  Cauchy-Schwarz inequality and (\ref{eq:55}) (with
 $\lambda_0\to \lambda_1$) to obtain
 \begin{align}
   \label{eq:62}
  G_{1}(t)^2\leq G_{3/2}(t)G_{1/2}(t)&\leq  \tfrac {\sup_{x\geq
      0}x^2\e^{-2x}}{(\lambda_1-\lambda_0)^2t}G_{1/2,\lambda_1}(t)
  G_{1/2}(t) \leq \widetilde Ct^{-1}G_{1/2}(t);\\&\;\widetilde C=C(\lambda_0)\breve C\inp{\bar \lambda}^{-2}.\nonumber
 \end{align} Clearly (\ref{eq:60})--(\ref{eq:62}) lead to
 \begin{equation}
   \label{eq:63}
   \tfrac{\d }{\d t}G_{1/2}(t)\leq -  G_{3/2}(t)+\lambda_0
   t^{-1/2}G_{1}(t)+C^2\widetilde C t^{-1}G_{1/2}(t).
 \end{equation}

Next we insert $0= \kappa
    t^{-1}G_{1/2}(t) -Kt^{-1}J_{1/2}(t)-\kappa
    t^{-1}G_{1/2}(t) +Kt^{-1}J_{1/2}(t)$ on the right hand
side of (\ref{eq:63}). We need to examine the condition
\begin{equation}
  \label{eq:64}
  -  G_{3/2}(t)+\lambda_0 t^{-1/2}G_{1}(t) +(\kappa+C^2\widetilde C )
    t^{-1}G_{1/2}(t)-Kt^{-1}J_{1/2}(t)\leq0.
\end{equation} By the spectral theorem the
bound (\ref{eq:64}) will follow from
\begin{equation}
  \label{eq:65}
  -x^3\e^{2\lambda_0 x}+\lambda_0 x^2\e^{2\lambda_0 x}+(\kappa+C^2\widetilde C ) x\e^{2\lambda_0 x}\leq Kx\mforall x\geq 0.
\end{equation} The estimate (\ref{eq:65}) is obviously fulfilled for
some
 $K=K(\kappa,\lambda_0)>0$.
\end{proof}

 \begin{cor}
   \label{cor:decay1}
   \begin{enumerate}[i)]
   \item \label{item:9} For all  $\kappa>0$
  and  $\lambda_0\in ]0,\bar \lambda[$ there exists
  $K=K(\kappa, \lambda_0)>0$
  such that
   \begin{equation}
     \label{eq:66}
    G_{1/2}(t)\leq Kt^{-\kappa}\int _0^ts^{\kappa-1}J_{1/2}(s)\,\d s
    \mforall t>0.
   \end{equation}

 \item \label{item:10} Suppose that  for some $\sigma> -1/2$  the
   bound $J_{1/2}(t)=O(t^{-\sigma-1/2})$ for
   $t\to \infty$ holds. Then $J_{r}(t)=O(t^{-\sigma-r})$ and
   $G_{r}(t)=O(t^{-\sigma-r})$ for all $r\geq 1/2$.

 \item \label{item:11} Suppose the conditions of
 Theorem \ref{thm:global anal} \ref{item:12b} and that    for some $\sigma> -1/2$  the
   bound
$J_{0}(t)=O(t^{-\sigma})$
holds. Then
  $J_{r}(t)=O(t^{-\sigma-r})$ and $G_{r}(t)=O(t^{-\sigma-r})$ for all  $r\geq 0$.
  \item\label{item:newitem}Suppose the conditions of
 Theorem \ref{thm:global anal} \ref{item:12b} and $J_0(t) = O(1)$.  Then $G_0(t) = o(1)$.
   \end{enumerate}
\end{cor}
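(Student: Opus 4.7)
The plan is to apply $e^{\lambda_0\sqrt tA}$ to the integral equation (\ref{eq:1}) for $X(t)$ and show each resulting term is $o(1)$ as $t\to\infty$ via a twice-applied dominated convergence argument in Fourier space. The key preliminary input is integrability of $s\mapsto G_{5/4}(s)$ on $(0,\infty)$: part \ref{item:11} with $\sigma=0$ gives $G_{5/4}(s)=O(s^{-5/4})$ for large $s$, and the uniform short-time bound (\ref{eq:50c}) from Theorem \ref{thm:global anal} \ref{item:12} gives $G_{5/4}(s)\leq Cs^{-3/4}$ for all $s>0$. Using the triangle inequality in Fourier space followed by Lemma \ref{lemma:sobolev} with $r=5/4$ (exactly as in the derivation of (\ref{eq:15ba}) with $s_2=5/4$, so $2s_2-5/2=0$), I obtain
\begin{equation*}
\|e^{\lambda_0\sqrt sA}P(MX(s)\cdot\nabla)X(s)\|\leq C\,G_{5/4}(s),
\end{equation*}
so the function $g(s):=e^{\lambda_0\sqrt sA}P(MX(s)\cdot\nabla)X(s)$ belongs to $L^1(0,\infty;L^2)$.

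Factoring $e^{\lambda_0\sqrt tA}=e^{\lambda_0(\sqrt t-\sqrt s)A}\,e^{\lambda_0\sqrt sA}$ inside the Duhamel integral, the integral equation yields
\begin{equation*}
e^{\lambda_0\sqrt tA}X(t)=e^{\lambda_0\sqrt tA-tA^2}u_0-\int_0^t e^{\lambda_0(\sqrt t-\sqrt s)A-(t-s)A^2}g(s)\,\d s.
\end{equation*}
For the linear term, the Fourier multiplier $e^{\lambda_0\sqrt t|\xi|-t|\xi|^2}$ is bounded uniformly in $t\geq 0$ by $e^{\lambda_0^2/4}$ (completing the square in $|\xi|$) and tends to $0$ pointwise in $\xi\neq 0$ as $t\to\infty$; since $u_0\in L^2$, dominated convergence gives $\|e^{\lambda_0\sqrt tA-tA^2}u_0\|\to 0$. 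For the nonlinear term, the elementary inequality $\sqrt t-\sqrt s\leq \sqrt{t-s}$ permits bounding the Fourier symbol of $e^{\lambda_0(\sqrt t-\sqrt s)A-(t-s)A^2}$ by $e^{\lambda_0^2/4}$ uniformly in $0\leq s\leq t$, so the integrand in $L^2$-norm is dominated by the integrable function $e^{\lambda_0^2/4}\|g(s)\|$. For each fixed $s>0$ the same symbol vanishes pointwise in $\xi$ as $t\to\infty$ (since $(t-s)|\xi|^2\to\infty$ dominates the linear term), so a first application of dominated convergence gives $\|e^{\lambda_0(\sqrt t-\sqrt s)A-(t-s)A^2}g(s)\|\to 0$ for each $s$, and a second one in $s$ then yields $o(1)$ for the whole integral.

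The main obstacle is the uniform Fourier-multiplier bound $\lambda_0(\sqrt t-\sqrt s)|\xi|-(t-s)|\xi|^2\leq \lambda_0^2/4$ for all $\xi\in\R^3$ and $0\leq s\leq t$, which is the mechanism that simultaneously provides the pointwise (in $s$) vanishing limit and the uniform-in-$t$ integrable dominator; once $\sqrt t-\sqrt s\leq\sqrt{t-s}$ is available this is immediate by completing the square in $|\xi|$. The remaining two dominated convergence arguments are then routine.
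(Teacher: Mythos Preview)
Your argument for part \ref{item:newitem} is correct and follows essentially the same route as the paper: both use the integral equation \eqref{eq:1}, invoke the $G_{5/4}$ decay from part \ref{item:11} (with $\sigma=0$), and finish with dominated convergence. The only difference is cosmetic: the paper splits $\int_0^t=\int_0^T+\int_T^t$, bounding the tail by $O(T^{-1/4})$ via $G_{5/4}(s)=O(s^{-5/4})$ and writing the initial piece as $\e^{-(t-T)A^2}$ applied to a fixed $L^2$ element (hence $o(1)$), whereas you package the same two ingredients as $\|g\|\in L^1(0,\infty)$ and run dominated convergence in $\xi$ and then in $s$. Your use of \eqref{eq:50c} for the short-time integrability of $G_{5/4}$ is a slight addition---the paper only needs the unweighted integrability of $\int_0^T\e^{-(T-s)A^2}P(MX\cdot\nabla)X\,\d s$ in $L^2$---but it is certainly available and makes the argument clean.

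Note that your proposal addresses only part \ref{item:newitem}; parts \ref{item:9}--\ref{item:11} are presumed as input (you use \ref{item:11} explicitly). In the paper these follow directly from Lemma \ref{lemma:dif-eq}: \ref{item:9} by multiplying \eqref{eq:59} by the integrating factor $t^\kappa$, \ref{item:10} by taking $\kappa>\sigma+1/2$ in \eqref{eq:66} and then comparing $G_r$ with $G_{1/2,\lambda_1}$ for $\lambda_1\in]\lambda_0,\bar\lambda[$, and \ref{item:11} by inserting the interpolation \eqref{eq:67} into \eqref{eq:59} to get a differential inequality involving $J_0$.
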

\begin{proof} As for \ref{item:9} notice that  $t^\kappa$ is
  an integrating factor for (\ref{eq:59}).

For \ref{item:10} we choose
  $\kappa> 1/2+\sigma$ in the bound (\ref{eq:66}) yielding the
   bound $G_{1/2}(t)=O(t^{-\sigma-1/2})$. Whence also
   $G_{r}(t)=O(t^{-\sigma-r})$ for all $r\geq 1/2$ (here we used the
   quantity $G_{1/2,\lambda_1}(t)$ of (\ref{eq:62})). In particular
   $J_{r}(t)=O(t^{-\sigma-r})$ for all $r\geq 1/2$.

For \ref{item:11} we first prove that $G_{1/2}(t)=O(t^{-\sigma-1/2})$.
By the Cauchy-Schwarz inequality
  \begin{equation}
    \label{eq:67}
    J_{1/2}(t)\leq \sup_{x\geq 0}x^{1/2}\e^{-\lambda_0 \sqrt t x}\,J_{0}(t)^{1/2}
    G_{1/2}(t)^{1/2}\leq \tfrac {C}{\lambda_0\sqrt {t} }KJ_{0}(t) + G_{1/2}(t)/K.
  \end{equation}
In combination with (\ref{eq:59}) this estimate  leads to
\begin{equation}
    \label{eq:59b}
    \tfrac{\d }{\d t}G_{1/2}(t)\leq -(\kappa-1)
    t^{-1}G_{1/2}(t)+C(\kappa,\lambda_0)t^{-3/2}J_{0}(t);\;C(\kappa,\lambda_0)=K^2\tfrac {C}{\lambda_0}.
  \end{equation}
By choosing   $\kappa> 2+\sigma$ in (\ref{eq:59b})  we deduce
 the following analogue
of (\ref{eq:66}) where $\widetilde K:=C(\kappa,\lambda_0)$ and $\tilde \kappa:=\kappa-1$
\begin{equation}
     \label{eq:66b}
    G_{1/2}(t)\leq \widetilde Kt^{-\tilde \kappa}\int _0^ts^{\tilde\kappa-3/2}J_{0}(s)\,\d s
    \mforall t>0.
   \end{equation} It follows from (\ref{eq:66b}) that indeed  $G_{1/2}(t)=O(t^{-\sigma-1/2})$.

To complete the proof of \ref{item:11} it suffices to
 show that $G_0(t)=O(t^{-\sigma})$. Split
\begin{equation*}
  G_0(t)=\|1_{[0,1[}(\sqrt
    tA)\e^{\lambda_0\sqrt
    tA}X(t)\|^2+\|1_{[1,\infty[}(\sqrt
    tA)\e^{\lambda_0\sqrt
    tA}X(t)\|^2.
\end{equation*} The first term bounded by
$\e^{2\lambda_0}J_{0}(t)=O(t^{-\sigma})$. The second term is bounded by
\begin{equation*}
  \|\big (\sqrt tA\big )^{1/2}\e^{\lambda_0\sqrt
    tA}X(t)\|^2= t^{1/2}G_{1/2}(t)=t^{1/2}O(t^{-\sigma-1/2})=O(t^{-\sigma}).
\end{equation*}

To prove \ref{item:newitem} we use the integral equation (\ref{eq:1}) in the form $X = Y + B(X,X)$ where
$Y(t) = \e^{-tA^2}u_0$. The decay of the first term is clear.  For the nonlinear term we split the integral from $0$ to $t$ into an integral from $0$ to $T$ and another from $T$ to $t$.  Using the bound on $G_{5/4}$ from \ref{item:11} in the second integral we obtain a term which is $O(T^{-1/4})$ while the first integral is given by $\e^{-(t-T)A^2}g(T)$ for some $g(T) \in L^2$ and is thus $o(1)$.
\end{proof}
\begin{remark*}
   In studying the types of inequalities proved in Corollary \ref{cor:decay1} we were motivated by [OT].  However the main theorem in that paper is stated incorrectly and the proof given there is also incorrect.
\end{remark*}
For completeness of presentation we  end this section by giving another proof of  Corollary
\ref{cor:decay1} \ref{item:11}. Although the proof  goes along similar  lines
it is somewhat more direct.
\begin{thm}
  \label{thm:second} Suppose  that
$u_0\in H^{1/2}$. For all $\kappa>0$  and $\lambda_0\in
]0,\bar \lambda[$ there exists $K=K(\kappa,\lambda_0)>0$
  such that
  \begin{equation}
    \label{eq:59cc}
    \tfrac{\d }{\d t}G_{0}(t)\leq -\kappa
    t^{-1}G_{0}(t)+Kt^{-1}J_{0}(t) \mforall t>0.
  \end{equation}
Whence \begin{equation}
     \label{eq:66cc}
    G_{0}(t)\leq Kt^{-\kappa}\int _0^ts^{\kappa-1}J_{0}(s)\,\d s
    \mforall t>0.
   \end{equation}

 In particular if for some $\sigma> -1/2$   the  bound
$J_{0}(t)=O(t^{-\sigma})$
holds,  then $G_{0}(t)=O(t^{-\sigma})$  and whence, more
generally, $J_{r}(t)=O(t^{-\sigma-r})$ and $G_{r}(t)=O(t^{-\sigma-r})$ for all $r\geq 0$.
\end{thm}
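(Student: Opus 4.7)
My plan is to mimic closely the proof of Lemma \ref{lemma:dif-eq}, differentiating $G_0(t)$ in place of $G_{1/2}(t)$, and to close the resulting inequality by converting the dissipation $-G_1$ into damping of $G_0$ via a low/high-frequency split controlled by $J_0$. Once the differential inequality \eqref{eq:59cc} is established, \eqref{eq:66cc} follows from the standard $t^\kappa$ integrating factor, and the decay $G_0(t)=O(t^{-\sigma})$ by picking $\kappa>\max(0,\sigma)$; the extension to $J_r$ and $G_r$ for $r\geq 0$ is then immediate from Corollary \ref{cor:decay1} \ref{item:11}.

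The time derivative computes, exactly as in \eqref{eq:60}, to
\begin{equation*}
\tfrac{\d}{\d t}G_0(t)=-2G_1(t)+\lambda_0 t^{-1/2}G_{1/2}(t)+R_0(t),
\end{equation*}
with $R_0(t)=-2\inp[\big]{\e^{\lambda_0\sqrt{t}A}X(t),\e^{\lambda_0\sqrt{t}A}P(MX(t)\cdot\nabla)X(t)}$. The linear drift is absorbed through the Cauchy--Schwarz inequality $G_{1/2}\leq G_0^{1/2}G_1^{1/2}$ and Young's inequality, giving $\lambda_0 t^{-1/2}G_{1/2}(t)\leq G_1(t)+(\lambda_0^2/4)t^{-1}G_0(t)$. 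For the nonlinear term I would apply Cauchy--Schwarz to split off one factor $G_0(t)^{1/2}$, then bound the product $\|\e^{\lambda_0\sqrt{t}A}(MX\cdot\nabla X)\|$ by $CG_1(t)^{1/2}G_{3/2}(t)^{1/2}$ using Lemma \ref{lemma:sobolev} with $r=1$ together with the Fourier-side triangle inequality as in \eqref{eq:15ba}--\eqref{eq:9bba}. The crucial input is \eqref{eq:55}: repeating the spectral shift argument from \eqref{eq:62} with $\lambda_1\in\;]\lambda_0,\bar\lambda[$ and \eqref{eq:55} applied at $\lambda_1$ yields $G_{3/2}(t)\leq \widetilde C/t$ uniformly in $t>0$, and Young's inequality then gives $|R_0(t)|\leq \tfrac12 G_1(t)+C't^{-1}G_0(t)$. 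Collecting these three estimates produces $\tfrac{\d}{\d t}G_0(t)\leq -\tfrac12 G_1(t)+C''t^{-1}G_0(t)$ for some $C''=C''(\lambda_0,\bar\lambda)$.

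To convert $-\tfrac12 G_1$ into genuine damping of $G_0$, I decompose the spectral integral defining $G_0(t)$ at $|\xi|=\nu/\sqrt{t}$, obtaining the elementary bound $G_0(t)\leq \e^{2\lambda_0\nu}J_0(t)+(t/\nu^2)G_1(t)$ and hence
\begin{equation*}
-\tfrac12 G_1(t)\leq -\tfrac{\nu^2}{2t}G_0(t)+\tfrac{\nu^2 \e^{2\lambda_0\nu}}{2t}J_0(t).
\end{equation*}
Taking $\nu$ large enough that $\nu^2/2-C''\geq\kappa$ produces \eqref{eq:59cc} with $K=\nu^2\e^{2\lambda_0\nu}/2$; the integrated form \eqref{eq:66cc} then follows by multiplying by the integrating factor $t^\kappa$ and integrating from $0$, using that $t^\kappa G_0(t)\to 0$ as $t\to 0$ since $\kappa>0$ and $G_0$ is bounded near $0$ by $\|u_0\|^2$. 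For $J_0(t)=O(t^{-\sigma})$ one picks $\kappa>\max(0,\sigma)$: the contribution to $\int_0^t s^{\kappa-1}J_0(s)\,\d s$ from $[0,1]$ is a bounded constant, while that from $[1,t]$ is $O(t^{\kappa-\sigma})$, whence $G_0(t)=O(t^{-\sigma})$. The principal obstacle throughout is the choice of derivative split for $R_0$: the value $r=1$ in Lemma \ref{lemma:sobolev} is forced because it produces exactly the $G_{3/2}^{1/2}$ factor absorbable by the uniform analyticity bound \eqref{eq:55} (yielding the decay $t^{-1/2}$) together with a $G_1^{1/2}$ factor at the scale of the dissipation; other splits either leave a factor with no uniform global control, or deliver the wrong power of $t^{-1}$ for the subsequent spectral-split closure to achieve the damping rate $-\kappa/t$.
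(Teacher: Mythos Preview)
Your proof is correct and follows essentially the same route as the paper's. The only organizational difference is that the paper absorbs the drift term $\lambda_0 t^{-1/2}G_{1/2}$ and converts $-G_1$ into the $-\kappa t^{-1}G_0+Kt^{-1}J_0$ damping in one stroke via the single pointwise spectral inequality \eqref{eq:65cc}, whereas you split this into two steps (Young for the drift, then the low/high frequency decomposition at $|\xi|=\nu/\sqrt t$); these are equivalent reformulations of the same estimate.
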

\begin{proof} We
  compute
\begin{equation}
    \label{eq:60c}
  \tfrac{\d }{\d t}G_{0}(t)=-2  G_{1}(t)+\lambda_0 t^{-1/2}G_{1/2}(t)+R(t),
  \end{equation} where
\begin{equation}
  \label{eq:61c}
  R(t)\leq  2G_{0}(t)^{1/2} CG_{1}(t)^{1/2} G_{3/2}(t)^{1/2} \leq G_{1}(t)+
  C^2G_{0}(t) G_{3/2}(t).
\end{equation}
   In particular, due to the estimate
   $G_{3/2}(t)=G_{3/2,\lambda_0}(t)\leq C(\lambda_1 -\lambda_0
  )t^{-1}G_{1/2,\lambda_1}(t)$,
   $\lambda_1\in ]\lambda_0,\bar \lambda[$, and  (\ref{eq:55})
   applied to $G_{1/2,\lambda_1}(t)$,
\begin{equation}
  \label{eq:61cc}
  R(t)\leq  G_{1}(t)+
  \widetilde C t^{-1}G_{0}(t).
\end{equation}

To obtain (\ref{eq:59cc}) we insert $0= \kappa
    t^{-1}G_{0}(t) -Kt^{-1}J_{0}(t)-\kappa
    t^{-1}G_{0}(t) +Kt^{-1}J_{0}(t)$ on the right hand
side of (\ref{eq:60c}), and due to (\ref{eq:61cc}) we need only to examine the condition
\begin{equation}
  \label{eq:64cc}
  -  G_{1}(t)+\lambda_0 t^{-1/2}G_{1/2}(t) +\widetilde C t^{-1}G_{0}(t)+\kappa
    t^{-1}G_{0}(t)-Kt^{-1}J_{0}(t)\leq0.
\end{equation} By the spectral theorem the
bound (\ref{eq:64cc}) will follow from
\begin{equation}
  \label{eq:65cc}
  -x^2\e^{2\lambda_0 x}+\lambda_0 x\e^{2\lambda_0 x}+(\kappa +\widetilde C )\e^{2\lambda_0 x}\leq K\mforall x\geq 0,
\end{equation} which in turn   obviously  is valid for some
$K=K(\kappa,\lambda_0)$. Whence we have shown (\ref{eq:59cc}).

The   remaining
statements are  immediate consequences of (\ref{eq:59cc}), cf. the
proof of Corollary \ref{cor:decay1}.
\end{proof}

\section{Optimal rate of growth of  analyticity
  radii }\label{Optimal rate of growth of analyticity
  radius }
We shall combine Subsections \ref{Improved analyticity
    bounds in dot 1/2 and  1/2
    for all  times and for small  data} and \ref{Differential inequalities for analyticity radius growing
  like sqrt t} to obtain improved analyticity radius bounds of  global solutions
with small data in  $\dot H^{1/2}$  or   $H^{1/2}$
in the large time regime. An example shows that our bounds are optimal.

\subsection{Optimizing  bounds of  analyticity
  radii for large times} \label{Optimizing the bound of the analyticity
  radius for large times}
Suppose the conditions of Theorem \ref{thm:global anal} \ref{item:12} and that for some $\sigma> -1/2$  the
   bound $\|A^{1/2}X(t)\|=O(t^{-(2\sigma+1)/4})$ for
   $t\to \infty$ holds. We shall then apply Theorem \ref{thm:global anal}
   to $u_0\to X(T)$ and $X\to u_{T}$, where
   $u_{T}(\tau):=X(\tau+T)$; here  $T>0$ is an auxiliary  variable
   that in the end will be large (proportional to the time $t=\tau+T$). Notice
   that $X(T)\in \dot H^{1/2}$ (since we have assumed  that
   $u_0\in \dot H^{1/2}$), and that $u_{T}$ is  the unique
   small solution to
   the
   integral equation (\ref{eq:5}) with data $X(T)$, cf.  Proposition
   \ref{prop:kato big data}. The fact that here indeed $u_{T}$ is  a  solution to
   (\ref{eq:5}) requires an argument not given here. (We refer the
   reader to Subsection \ref{Global
     solutions} for a thorough discussion of related issues in a
   different setting.)
%alternatively  we may invoke
%   Remarks \ref{rmks:strong-solutions}  \ref{item:15} and \ref{item:17}
%   and Proposition \ref{prop:strong-solutions}.
Since
   $T^{-(2\sigma+1)/4}\to 0$ for $T\to \infty$ we obtain for the
   critical value, $\bar \lambda=\bar \lambda(T)$ of (\ref{eq:51}), that
   $\bar \lambda\to \infty$ for $T\to \infty$. In fact
   \begin{equation*}
     \liminf _{T\to \infty}\bar \lambda/\sqrt {\ln(T)}\geq (2\sigma+1)^{1/2}.
   \end{equation*} Consequently for any $\epsilon_0\in]0,1[$, $\lambda_0:=\sqrt
   {(2\sigma+1)(1-\epsilon_0)\ln(T)}$ is a legitimate choice in
   Theorem \ref{thm:global anal} with $u_0\to X(T)$  provided  $T$
   is large enough.
We shall use this observation to prove the following main result.
\begin{thm}\label{thm:opt}
  Suppose the conditions of Theorem \ref{thm:global anal} \ref{item:12}, i.e. $u_0\in\dot
H^{1/2}$ and that the constant $4C_1
  C_3\|A^{1/2}u_0\|<1$. Let $X$ denote the corresponding solution to
  the integral equation (\ref{eq:5}).  We have:
  \begin{enumerate}[i)]
  \item \label{item:7} Suppose   that for some $\sigma> -1/2$  the
  following bound holds
  \begin{equation}
    \label{eq:69a}
   \|A^{1/2}X(t)\|=O(t^{-(2\sigma+1)/4})\mfor t\to \infty.
  \end{equation}
     Let $0\leq \tilde
   \epsilon<\epsilon\leq 1$ be given. Then there exist constants
   $t_0>1$ and $C>0$  such that
   \begin{equation}
     \label{eq:68}
     \|A^{1/2}\exp\parbb {\sqrt{(1-\epsilon)(2\sigma+1)}\sqrt{t\ln t}A} X(t)\|\leq C t^{-\tilde \epsilon (2\sigma+1)/4}\mforall
   t\geq t_0.
   \end{equation}
\item \label{item:8}
Suppose $u_0\in L^2$, and  that for some $\sigma> -1/2$  the
  following bound holds
\begin{equation}
    \label{eq:69b}
    \|X(t)\|=O(t^{-\sigma/2})\mfor t\to \infty.
  \end{equation}
Then  (\ref{eq:69a}) holds (and therefore in particular the conclusion
of \ref{item:7}).

Let $0\leq \tilde
   \epsilon<\epsilon\leq 1$ be given. Then there exist constants
   $t_0>1$ and $C>0$  such that
   \begin{equation}
     \label{eq:68b}
     \|\exp\parbb {\sqrt{(1-\epsilon)(2\sigma+1)}\sqrt{t\ln t}A} X(t)\|\leq C t^{1/4-\tilde \epsilon (2\sigma+1)/4}\mforall
   t\geq t_0.
   \end{equation}

In particular
 \begin{equation}
    \label{eq:46ww}
 \liminf_{t\to \infty}\tfrac{\rad(X(t))}{\sqrt{t\ln t}}\geq \sqrt{2\sigma+1}.
  \end{equation}
\end{enumerate}
\end{thm}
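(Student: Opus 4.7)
The plan is to apply Theorem \ref{thm:global anal} \emph{not} at time $0$ but at a large auxiliary time $T=T(t)$: view the translated profile $u_T(\tau):=X(\tau+T)$ as the unique small-data solution of \eqref{eq:5} with initial data $X(T)$, and then take $\tau$ and $T$ comparable to $t$. The identification of $u_T$ with the fixed-point solution built from $X(T)$ is a uniqueness/propagation statement of the kind discussed in Subsection \ref{Global solutions}; we take it for granted, which is the only non-bookkeeping ingredient.

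\textbf{Step 1 (reducing (ii) to (i)).} Assuming $u_0\in L^2$ and $\|X(t)\|=O(t^{-\sigma/2})$, i.e.\ $J_0(t)=O(t^{-\sigma})$, Corollary \ref{cor:decay1} \ref{item:11} gives $J_{1/2}(t)=O(t^{-\sigma-1/2})$, which is exactly \eqref{eq:69a}. So henceforth we may assume \eqref{eq:69a} and focus on proving \eqref{eq:68} and \eqref{eq:68b}.

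\textbf{Step 2 (critical $\bar\lambda(T)$).} Let $\bar\lambda(T)$ denote the solution of \eqref{eq:51} with $u_0$ replaced by $X(T)$. Using $\|A^{1/2}X(T)\|\le CT^{-(2\sigma+1)/4}$ and solving asymptotically, $\bar\lambda(T)^2=(2\sigma+1)\ln T+O(\ln\ln T)$ as $T\to\infty$. Hence, for every fixed $\delta>0$,
\[
\bar\lambda(T)^2\geq (2\sigma+1-\delta)\ln T\qquad\text{for }T\text{ large.}
\]

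\textbf{Step 3 (choice of $T$ and $\lambda_0$).} Fix $0\le\tilde\epsilon<\epsilon\le 1$, pick $\alpha\in\bigl(0,\tfrac{\epsilon-\tilde\epsilon}{1-\tilde\epsilon}\bigr)$, and set $T=\alpha t$, $\tau=(1-\alpha)t$, and
\[
\lambda_0:=\sqrt{\tfrac{(1-\epsilon)(2\sigma+1)}{1-\alpha}\ln t}\,.
\]
Then by construction $\lambda_0\sqrt\tau=\sqrt{(1-\epsilon)(2\sigma+1)\,t\ln t}$, matching the exponent in \eqref{eq:68}/\eqref{eq:68b}. Because $(1-\epsilon)/(1-\alpha)<1$, Step 2 gives $\lambda_0<\bar\lambda(T)$ for $t$ large, so Theorem \ref{thm:global anal} is applicable with data $X(T)$ and parameter $\lambda_0$.

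\textbf{Step 4 (bounding the two terms).} Applying \eqref{eq:50bc} to $u_T$, the linear term is bounded by $Ce^{\lambda_0^2/4}\|A^{1/2}X(T)\|\le Ct^{(2\sigma+1)/4\,[(1-\epsilon)/(1-\alpha)-1]}$, and the quadratic term by $C(\ln t)\,t^{(2\sigma+1)/2\,[(1-\epsilon)/(1-\alpha)-1]}$. The choice of $\alpha$ in Step 3 forces $[(1-\epsilon)/(1-\alpha)-1]\le -\tilde\epsilon$, so the linear term gives the advertised $t^{-\tilde\epsilon(2\sigma+1)/4}$; the quadratic term has an exponent twice as negative, which absorbs the $\ln t$ factor. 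This proves \eqref{eq:68}.

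For \eqref{eq:68b} we use \eqref{eq:52c} instead. The linear part gives $e^{\lambda_0^2/4}\|X(T)\|\le Ct^{(1-\epsilon)(2\sigma+1)/(4(1-\alpha))-\sigma/2}$; writing this exponent as $\tfrac14+\tfrac{2\sigma+1}{4}[(1-\epsilon)/(1-\alpha)-1]$ and using Step 3 yields the bound $Ct^{1/4-\tilde\epsilon(2\sigma+1)/4}$. The quadratic term carries an extra $\tau^{1/4}$ factor (responsible for the $t^{1/4}$), but the $\|A^{1/2}X(T)\|^2$ contributes $-(2\sigma+1)/2$ to the exponent, again strictly better than needed.

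\textbf{Step 5 (analyticity radius).} Since \eqref{eq:68b} (with any fixed $\tilde\epsilon\in[0,1)$ and $\epsilon>\tilde\epsilon$) shows $e^{\sqrt{(1-\epsilon)(2\sigma+1)}\sqrt{t\ln t}A}X(t)\in L^2$ for all $t\ge t_0$, we get $\rad(X(t))\ge\sqrt{(1-\epsilon)(2\sigma+1)\,t\ln t}$. Letting $\epsilon\downarrow 0$ gives \eqref{eq:46ww}.

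\textbf{Main obstacle.} Beyond the (assumed) identification of $u_T$ with the fixed-point solution built from $X(T)$, the only delicate point is the interplay of the two exponentials $e^{\lambda_0^2/4}$ and $e^{\lambda_0^2/2}$ coming from \eqref{eq:50bc}/\eqref{eq:52c} with the algebraic decay of $\|A^{1/2}X(T)\|$; the gain comes entirely from the sub-unit ratio $(1-\epsilon)/(1-\alpha)<1$, which is precisely what the improved exponent $\lambda_0^2/4$ of Subsection \ref{Improved analyticity bounds in dot 1/2 and  1/2 for all  times and for small  data} (as opposed to $\lambda_0^2/2$) affords.
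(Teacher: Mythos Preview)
Your proof is correct and follows essentially the same strategy as the paper: translate to a large auxiliary time $T$ proportional to $t$, invoke the decay hypothesis to make $\|A^{1/2}X(T)\|$ small, and then apply Theorem~\ref{thm:global anal} to the translated solution with $\lambda_0$ of order $\sqrt{\ln t}$. The paper parameterizes slightly differently---it writes $t=(n+1)T$ with a fixed large integer $n$ and introduces an intermediate $\epsilon_0\in(\tilde\epsilon,\epsilon)$, then takes $\lambda_0=\sqrt{(2\sigma+1)(1-\epsilon_0)\ln T}$ and shows $\lambda_0\sqrt{\tau}\ge\sqrt{(1-\epsilon)(2\sigma+1)\,t\ln t}$---whereas you take $T=\alpha t$ with $\alpha\in\bigl(0,\tfrac{\epsilon-\tilde\epsilon}{1-\tilde\epsilon}\bigr)$ and choose $\lambda_0$ so that $\lambda_0\sqrt{\tau}$ equals the target exactly. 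These are equivalent bookkeeping choices; your version is arguably cleaner since the condition $\alpha<(\epsilon-\tilde\epsilon)/(1-\tilde\epsilon)$ encodes both $\lambda_0<\bar\lambda(T)$ and the required exponent $(1-\epsilon)/(1-\alpha)-1\le-\tilde\epsilon$ in one stroke, and you track the quadratic term explicitly rather than leaving it implicit.
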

\begin{proof}
  We prove first \ref{item:7}. So fix $0\leq \tilde
   \epsilon<\epsilon\leq 1$. To make contact to the discussion at the
   beginning of this subsection let us then choose $\epsilon_0\in ]\tilde
   \epsilon,\epsilon[$. We introduce in addition to the variable
   $T$  a ``new time''  $\tau$  and a parameter  $n$ by the relations
   \begin{equation}\label{eq:70}
    t=\tau+T=(n+1)T;\;T\geq T_0.
   \end{equation} We will let the parameters $n$ and  $T_0$ be chosen
   large. First we fix  $n$ by  the condition
   \begin{equation}\label{eq:69}
   n(1-\epsilon_0)> (n+1)(1-\epsilon).
   \end{equation}

As noted at the
   beginning of this subsection  we are allowed to choose
\begin{equation}
  \label{eq:71}
 \lambda_0=\sqrt
   {(2\sigma+1)(1-\epsilon_0)\ln(T)}
\end{equation} in
   Theorem \ref{thm:global anal} with $u_0\to X(T)$  provided  $T\geq
   T_0$ for some large $T_0>0$. We do that, and estimate using
   (\ref{eq:70}) and (\ref{eq:69})
   \begin{align}
     \label{eq:72}
    \lambda_0\sqrt{\tau}&= \sqrt
   {(2\sigma+1)(1-\epsilon_0)\ln(t/(n+1))} \sqrt{tn/(n+1)}\nonumber \\ & \geq
   \sqrt {(2\sigma+1)(1-\epsilon)\ln(t)} \sqrt{t};\mforall\;t\geq t_0:=
   (n+1)T_0\text{ for
     a }T_0>0.
   \end{align}  Here $T_0>0$ possibly needs to be chosen larger than
   before. Now fix such a $T_0$. Whence also
     $t_0$ is fixed, and with this value of  $t_0$ indeed the left
     hand side of (\ref{eq:68}) is finite for all $t\geq t_0$. The
     bound (\ref{eq:68}) follows then from (\ref{eq:50bc}). We have proved
     \ref{item:7}.

As for \ref{item:8}, the first statement is a consequence of Corollary
\ref{cor:decay1} \ref{item:11}.
The second statement  follows from the  proof of  \ref{item:7} and (\ref{eq:52c}).
\end{proof}

\subsection{Example} \label{Example2} For the example presented in
Subsection \ref{Example1} we have  the conditions of Theorem
\ref{thm:opt} \ref{item:8} fulfilled after a translation in
time of the given solution $u$; i.e. by replacing $u\to u_{\widetilde T}$ for a sufficiently large
$\widetilde T>0$. This is with $\sigma=5/2$. The estimate
(\ref{eq:46ww}) is {\it sharp}
by (\ref{eq:49}). Similarly the more precise bounds  (\ref{eq:68b})
are
 sharp. More precisely  the power of $t$ on  the right of (\ref{eq:68b}) cannot be
improved for any $\epsilon\in ]0,1[$ since indeed  the estimate is false with  $\tilde\epsilon=\epsilon$. This
follows readily from  an examination of the analytic extension of
(\ref{eq:38}). Likewise (\ref{eq:68})  is
sharp in the same sense. This can be seen by using  the
optimality of (\ref{eq:68b}) discussed above and an argument similar
to the one presented at the end of the proof of Corollary
\ref{cor:decay1} \ref{item:11}.

 \section{Global solutions for arbitrary data}\label{Openness of the set of global solutions with respect to data}
In this section we shall discuss strong  solutions and in
particular strong global solutions without assuming the
$\dot H^{1/2}$-smallness condition of Subsection \ref{Global solvability in dot 1/2 for small data}.
\subsection{Strong solutions} \label{Global solutions} Dealing with  global
large data solutions we need first to
define a  notion of    global solutions without referring directly to
the fixed point equation (\ref{eq:5}) (since the fixed point condition
(\ref{eq:6}) now may fail). The  definition needs to be based directly
on (\ref{eq:equamotion}).  So we introduce (the equations written
slightly differently):
\begin{equation}\label{eq:equamotionB}
\begin{cases}
\big (\tfrac{\partial }{\partial t}u+P(Mu\cdot \nabla)u -\triangle
u\big )(t,\cdot)=0\mfor t\in I\\
u(t):=u(t,\cdot)\in\Ran P\mfor t\in \bar I
\end{cases}\;.
\end{equation} As in Subsection \ref{Abstract scheme} $I$ is an
interval of the form $]0,T]$ or of the form $I=]0,\infty[$. We shall  introduce a notion of
 strong solution to (\ref{eq:equamotionB}). The solutions  with
 $I=]0,\infty[$   will be called strong global solutions.  For that
purpose we need
the spaces appearing in Proposition \ref{prop:kato_small_data}. For
simplicity we shall restrict our discussion  to the $H^r$ setting
(leaving out the $\dot H^r$  setting with
$r\in[1/2,3/2[$).

So let $\vB=\vB_{I,3/8,5/4}$ and
$\vB^0\subseteq \vB$ be the  spaces as specified in the beginning of
Section \ref{Integral equation} (constructed in terms of an arbitrarily given
interval $I$).

\begin{defn}\label{defns:strong} Let  $r\geq 1/2$. For   $I=]0,T]$ we
  say that $u\in C(\bar I,H^r)$ is a {\it strong solution} to the
  problem
  (\ref{eq:equamotionB}) if    the following conditions hold:
  \begin{enumerate}[(1)]

  \item \label{item:20} $u(t)\in PH^r$ $\mforall t\in \bar I$,

  \item \label{item:21} $u\in \vB^0$,

\item \label{item:22} $u\in C^1(I,
  \vS'(\R^3) )$ and
\begin{equation}
    \label{eq:73}
   \tfrac{\d}{\d t}u=-A^2u-P(Mu\cdot \nabla)u;\,t\in I.
  \end{equation}
  \end{enumerate}

Here the differentiability in $t$ is meant in the weak* topology and in (\ref{eq:73}) is meant in the sense of distributions. The class of such functions is denoted by
  $\vS_{r,I}$.  For $I=]0,\infty[$ we define $\vG_{r}$ to be the
  subset of $C(\bar I,H^r)$ consisting of $u$'s  such that $1_{\tilde I}u\in
  \vS_{r,\tilde I}$ for all intervals of the form $\tilde I=]0,\tilde T]$, and we refer
  to any  $u\in \vG_{r}$ as  a {\it  strong global solution} to the
  problem
  (\ref{eq:equamotionB}) with  $I=]0,\infty[$.
\end{defn}

\begin{remarks}\label{rmks:strong-solutions}
  \begin{enumerate}[1)]

  \item \label{item:18} Obviously the condition \ref{item:21} is
    redundant if $r\geq 5/4$.
  \item \label{item:13}  For any  strong solution $u$  on $I$ the first  term on the right hand side of
    (\ref{eq:73}) is an element of  $C(I,H^{r-2})$
    while  the second
    term is  an element of $C( I,L^2)$, cf. (\ref{eq:8}). Consequently $u\in C^1(I,
  H^{\min (r-2,0)})$.
\item \label{item:14} For any  $u\in \vS_{r,I}$
  \begin{equation*}
   \tfrac{\d}{\d s} \parbb{
     \e^{-(t-s)A^2}u(s)}=-\e^{-(t-s)A^2}P(Mu(s)\cdot
\nabla)u(s)\mforall 0<s<t\in I,
  \end{equation*} and consequently (by integration) the integral
  equation (\ref{eq:1}) with $u_0=u(0)$ holds for all $t\in I$. In fact it follows
  that $X=u$ is a solution to (\ref{eq:5}) in $\vB^0$ (with
  $Y(t)=\e^{-tA^2}u_0$). Due to the uniqueness statement of Proposition \ref{prop:kato_small_data}
  it follows that $u$ coincides with the function $X$ of Proposition
  \ref{prop:kato_small_data} on a sufficiently small interval $\tilde I
  =]0,\tilde T]$. As a consequence similarly  if $r\in ]1/2, 3/2[$ or  $r\in
  [5/4,\infty[$, $u$ coincides  on a sufficiently small interval with the function $X$ of Propositions
  \ref{prop:kato small data2} or \ref{prop:kato small data2mod},
  respectively.
\item \label{item:15} Conversely, the solutions $X$ of
  Propositions
  \ref{prop:kato_small_data} and
   \ref{prop:kato big data} with data
  $u_0=Pu_0\in H^{1/2}$ are indeed solutions in the sense of Definitions
  \ref{defns:strong} (with $r=1/2$ and on  the same interval $I$).
  Similarly it is readily verified that the solution  $X$ of
  Propositions
  \ref{prop:kato small data2} or \ref{prop:kato small data2mod} with data
  $u_0=Pu_0\in H^{r}$ for  $r\in ]1/2, 3/2[$ or $r\in [5/4,\infty[$,
  respectively, is a  solution in the sense of Definition
  \ref{defns:strong} (with the same $r$ and $I$).

\item \label{item:16} The class $\vS_{r,I}$ is  right translation
  invariant, i.e. if $u\in \vS_{r,I}$  and $t_0\in ]0,T[$ (where $T$
  is the right end point of $I$)  then $u_{t_0}(\cdot):=
  u(\cdot+t_0)\in\vS_{r,I_0}$; $I_0:=]0,\infty[\cap\big (I-\{t_0\}\big
  )$. In particular $\vG_{r}$ is  right translation
  invariant ($u\in\vG_{r}\Rightarrow u_{t_0}\in\vG_{r}$ for any $t_0>0$).
\item \label{item:17} With the modification  of Definitions
  \ref{defns:strong} given by omitting \ref{item:20}, the previous
  discussion,
  \ref{item:18}--\ref{item:16},
  is  still appropriate (possibly  slightly
  modified).   Notice that we did not impose the
  condition \ref{item:20} (viewed as a condition on the data) in the bulk of the paper.
\end{enumerate}

\end{remarks}

Strong solutions to the same initial value problem are  unique:
\begin{prop}\label{prop:strong-solutions}
  Suppose $u_1\in \vS_{r_1,I_1}$ and $u_2\in \vS_{r_2,I_2}$ obey $u_1(0)=u_2(0)=u_0$ for some
  $u_0\in PH^{r_1}\cap PH^{r_2}$. Then $u_1=u_2$ on $I_1\cap I_2$.
\end{prop}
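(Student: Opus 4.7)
The plan is to combine the small-time uniqueness of Proposition \ref{prop:kato_small_data} with a continuation argument based on the right-translation invariance of the strong-solution class recorded in Remark \ref{rmks:strong-solutions} \ref{item:16}. Since $H^{r_j}\hookrightarrow H^{1/2}$ and the defining conditions of $\vS_{r_j, I_j}$ pass to $\vS_{1/2, I_j}$ under this embedding (the space $\vB^0$ in Definition \ref{defns:strong} already being fixed at $s_1 = 3/8$, $s_2 = 5/4$, independent of $r$), we may assume $r_1 = r_2 = 1/2$. By Remark \ref{rmks:strong-solutions} \ref{item:14}, each $u_j$ belongs to $\vB^0 = \vB^0_{I_j, 3/8, 5/4}$ and solves the integral equation $u_j = \e^{-(\cdot)A^2}u_0 + B(u_j,u_j)$ on $\bar I_j$.

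Choose $\tilde T > 0$ small enough that (a) the smallness condition (\ref{eq:6}) of Proposition \ref{prop:kato_small_data} holds on $\tilde I := (0, \tilde T]$ for some $R > 0$, and (b) $|u_j|_{\tilde I, 3/8, 5/4} \leq 2R$ for $j = 1, 2$. Point (b) can be arranged because (\ref{eq:14}) forces $\sup_{0 < t \leq \tilde T} t^{3/8}\|A^{5/4}u_j(t)\| \to 0$ as $\tilde T \to 0^+$. Then $u_1$ and $u_2$ are both fixed points of the contraction $X \mapsto \e^{-(\cdot)A^2}u_0 + B(X,X)$ acting on the ball $B_{2R} \subset \vB_{\tilde I, 3/8, 5/4}$, so $u_1 = u_2$ on $\tilde I$.

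To propagate this equality to $J := I_1 \cap I_2$, set
\begin{equation*}
  t^* := \sup\{\,s \in J : u_1 = u_2 \text{ on } (0, s]\,\},
\end{equation*}
so that $t^* \geq \tilde T > 0$. Suppose for contradiction that $t^* < \sup J$. Continuity of each $u_j$ on $\bar I_j$ together with the definition of $t^*$ gives $u_1(t^*) = u_2(t^*) =: v_0 \in PH^{1/2}$. By Remark \ref{rmks:strong-solutions} \ref{item:16}, the translates $v_j(\tau) := u_j(\tau + t^*)$ belong to $\vS_{1/2, J - t^*}$ and share the initial datum $v_0$. Reapplying the previous paragraph to the pair $(v_1, v_2)$ with datum $v_0$ produces $\delta > 0$ with $v_1 = v_2$ on $(0, \delta]$, equivalently $u_1 = u_2$ on $(0, t^* + \delta]$, contradicting the maximality of $t^*$. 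Hence $t^* = \sup J$ and the proposition follows. The only delicate point is checking that the translated pair $(v_1, v_2)$ really re-enters the framework of Proposition \ref{prop:kato_small_data} at the new basepoint $t^*$, with the correct vanishing behaviour (\ref{eq:14}) as $\tau \to 0^+$; this is precisely guaranteed by the translation invariance in Remark \ref{rmks:strong-solutions} \ref{item:16}, so no new analysis is required.
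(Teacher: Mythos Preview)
Your proof is correct and follows essentially the same route as the paper's: reduce to a common $r$ (the paper just writes ``we can assume $r_1=r_2$''), invoke Remark \ref{rmks:strong-solutions} \ref{item:14} for small-time uniqueness via Proposition \ref{prop:kato_small_data}, and then run a continuation argument using the right-translation invariance of Remark \ref{rmks:strong-solutions} \ref{item:16}. The only cosmetic difference is that the paper argues by taking $t_0=\inf\{t:u_1(t)\neq u_2(t)\}$ rather than your $t^*=\sup\{s:u_1=u_2\text{ on }(0,s]\}$, and it is slightly terser about why both solutions lie in the contraction ball---your explicit use of the $\vB^0$ vanishing to arrange $|u_j|_{\tilde I}\leq 2R$ is a welcome clarification.
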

\begin{proof}
  This is a standard argument for ODE's.  We can assume that
  $r_1=r_2$ and $I:=I_1=I_2$.  Suppose $u_1\neq u_2$ on $I$. Then
  let
  \begin{equation*}
  t_0=\inf \{t\in I | u_1(t)\neq u_2(t)\}.
  \end{equation*}
 Clearly $t_0\in\bar I$, and by continuity
  $t_0<T$ and
  $u_1(t_0)= u_2(t_0)$. Due to Remark \ref{rmks:strong-solutions}
  \ref{item:16} we can assume  that $t_0=0$. Due to   Remark
  \ref{rmks:strong-solutions} \ref{item:14}
  it follows that $u_1(t)= u_2(t)$ for all
  $t\in \tilde I
  =]0,\tilde T]$  for some sufficiently small  $\tilde T>0$. This is a
  contradiction.

\end{proof}

\subsection{Sobolev and analyticity bounds for bounded
  intervals} \label{Sobolev and analiticity  bounds for finite
  intervals}
In this subsection we show that strong solutions are smooth, in fact
real analytic, in the $x$-variable.
\begin{prop}
  \label{prop:sobolev} Let $r\geq 1/2$ and $0<T_0<T<\infty$ be
  given. Let $u\in \vS_{r,I}$ where $I=]0,T]$, and denote by $|u|$ the
  norm $|u|=\|u\|_{\vB}$. There exist
  $\delta=\delta(T_0,|u|)>0$ and
  $C=C(T_0,T, |u|,\sup_{t\in I}\|u(t)\|_{H^{1/2}})>0$ such that
  \begin{equation}
    \label{eq:74}
    \|\e^{\delta A}u(t)\|_{H^{1/2}}\leq C \mforall t\in [T_0,T].
  \end{equation}
\end{prop}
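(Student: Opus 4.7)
The strategy is a restart argument: for each $t\in[T_0,T]$ I would pick $s$ slightly less than $t$, treat $u(s)$ as initial data, apply the local analyticity theorem on the short interval $[s,s+\tau]$, identify the resulting analytic fixed point with the given $u$ by uniqueness of strong solutions, and then read off the analyticity bound at time $t$.

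The key quantitative input is that finiteness of $|u|=\|u\|_{\vB_{I,3/8,5/4}}$ forces the uniform homogeneous Sobolev estimate
\[
\|A^{5/4}u(s)\|\le |u|\,s^{-3/8}\le |u|\,(T_0/2)^{-3/8}=:K\qquad\text{for every }s\in[T_0/2,T].
\]
With this in hand I would apply Theorem \ref{prop:ana_small_times1} with $r=\tilde r=5/4$, with $u(s)$ in place of the initial datum, and with a fixed auxiliary analyticity parameter, say $\lambda_0=1$. Because the datum is controlled only in the \emph{homogeneous} space $\dot H^{5/4}$, the contraction threshold coming from (\ref{eq:18}) and (\ref{eq:8bba}) with $r=5/4$ takes the form $T^{3/8}\le c\,\|A^{5/4}u(s)\|^{-1}$ and therefore holds on a single common interval length $\tau_0=\tau_0(T_0,|u|)>0$ depending only on $T_0$ and $|u|$. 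The theorem yields a unique solution $X$ on $[0,\tau_0]$ of the integral equation with data $u(s)$, satisfying
\[
\sup_{\tau\in[0,\tau_0]}\bigl\|\e^{\lambda_0\sqrt{\tau}A}A^{5/4}X(\tau)\bigr\|\le C_1(T_0,|u|),
\]
and, since $u(s)\in L^2$ with $\|u(s)\|\le M_*:=\sup_{I}\|u(\cdot)\|_{H^{1/2}}$, its $L^2$ addendum also gives $\sup_{\tau\in[0,\tau_0]}\bigl\|\e^{\lambda_0\sqrt{\tau}A}X(\tau)\bigr\|\le C_2(T_0,|u|,M_*)$.

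By Remarks \ref{rmks:strong-solutions} \ref{item:14}--\ref{item:15} combined with Proposition \ref{prop:strong-solutions}, the shifted solution $\tau\mapsto u(s+\tau)$ is itself a strong solution with initial value $u(s)$, and therefore coincides with $X$ on $[0,\tau_0]$; cf.\ also Remarks \ref{remarks:uniq} \ref{item:2} for the agreement of fixed points in the weighted versus unweighted Banach spaces. Setting $\tau_*=\min(\tau_0,T_0/2)$ and $\delta=\lambda_0\sqrt{\tau_*}$ (which indeed depend only on $T_0$ and $|u|$), for every $t\in[T_0,T]$ I would take $s=t-\tau_*\in[T_0/2,T]$ and conclude, via the elementary Fourier-side bound $\|f\|^2_{H^{1/2}}\le 2\|f\|^2+\|A^{5/4}f\|^2$, the estimate
\[
\|\e^{\delta A}u(t)\|_{H^{1/2}}\le\sqrt{2}\,\bigl(C_2+C_1\bigr)=:C(T_0,T,|u|,M_*),
\]
as required.

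The main obstacle is the identification step: one must verify that the fixed point uniquely determined by Theorem \ref{prop:ana_small_times1} inside its ball in the weighted space $\vB_{\zeta,\theta,\cdot}$ actually equals the a priori given strong solution $u(s+\cdot)$, uniformly in the restart time $s\in[T_0/2,T]$. This is essentially the content of the consistency discussion quoted above, but keeping careful track of the various balls, and of the uniform size of $\tau_0$ (which must be achievable from the \emph{homogeneous} bound $K$ rather than from the inhomogeneous quantity $\|u(s)\|_{H^{5/4}}$), is where the technical work lies. Once this is in place, the stated dependences $\delta=\delta(T_0,|u|)$ and $C=C(T_0,T,|u|,M_*)$ follow directly from the explicit contraction condition and from the $\dot H^{5/4}$ and $L^2$ bounds provided by Theorem \ref{prop:ana_small_times1}.
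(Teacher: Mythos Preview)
Your proposal is correct and follows essentially the same restart argument as the paper: use $|u|$ to bound $\|A^{5/4}u(s)\|$ uniformly for $s\ge T_0/2$, apply the weighted contraction scheme with $\lambda=1$ on a uniformly short interval, identify the fixed point with $u(s+\cdot)$ via uniqueness of strong solutions, and set $\delta$ to be the square root of the interval length. The only cosmetic difference is that the paper runs the contraction in the $r=1/2$ space $\vB_{\zeta,\theta,I_\epsilon,3/8,5/4}$ and bounds $|Y|$ directly by $\tilde C_1\epsilon^{3/8}\|A^{5/4}u(t_0-\epsilon)\|$ via Lemma~\ref{Lemma:spect_bnd} with $\alpha=0$, whereas you invoke Theorem~\ref{prop:ana_small_times1} with $r=\tilde r=5/4$; both routes yield the same contraction threshold $T^{3/8}\lesssim\|A^{5/4}u(s)\|^{-1}$ and the same conclusion.
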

  \begin{proof} For all $u\in
 \vB$ and $\widetilde T\in]0,T[$
    \begin{equation}
      \label{eq:75}
      \sup_{t\in [\tilde T,T]}\|A^{5/4}u(t)\|_{L^2}\leq \tilde T^{-3/8}|u|.
    \end{equation}

 For any given $u\in \vS_{r,I}$ we shall obtain an analyticity bound
 for the restriction of $u$ to $]t_0-\epsilon,t_0]$
 for
   $t_0\in  [T_0,T]$ and for suitable $\epsilon\in ]0,T_0[$. For that
we shall apply  the procedure of the proof of Theorem \ref{prop:ana_small_times1} to
the strong solution
$u_{t_0,\epsilon}:=u(\cdot +t_0-\epsilon)$  on
 the interval $I_{\epsilon}=]0,\epsilon]$. The application will
   be with $\lambda=1$ and $r=1/2$ in the definitions of $\zeta$ and $\theta$
   (given in (\ref{eq:20})) and for $\epsilon>0$ small, and the
   underlying Banach
   space will be $\tilde \vB=\vB_{\zeta,\theta,I_{\epsilon},3/8,5/4}$. Indeed for
   $\epsilon>0$ taken small enough the conditions
   (\ref{eq:6}) hold for some $R>0$ that can be taken independent of
   $T$ and $t_0 \in
   [T_0,T]$, cf. (\ref{eq:8bba}). Notice here that by Lemma \ref{Lemma:spect_bnd}  (with
   $\alpha=0$ and  $f=A^{5/4}u(t_0-\epsilon)$) and (\ref{eq:75})
   \begin{equation}
     \label{eq:76}
     |\e^{-(\cdot)A^2}u(t_0-\epsilon)|_{\tilde \vB}\leq
     \tilde C_1\epsilon^{3/8} \|A^{5/4}u(t_0-\epsilon)\|_{L^2}\leq
     \tilde C_1\epsilon^{3/8} (T_0-\epsilon)^{-3/8}|u|.
   \end{equation} So we can choose  $R$ in (\ref{eq:6}) to be equal to
   the constant on the right hand side of (\ref{eq:76}), and indeed the conditions
   (\ref{eq:6}) are  fulfilled for all
   sufficiently small  $\epsilon>0$, cf. (\ref{eq:8bba}). Fix any such
   $\epsilon>0$ and let $\delta=\sqrt \epsilon$. Then we invoke
   (\ref{eq:7cq}) and (\ref{eq:7bbbq}) with  $u=v=u_{t_0,\epsilon}$ and with time $t=\epsilon$ as
   well as Lemma \ref{Lemma:spect_bnd} (with $\alpha=0$, $f=\inp{A}^{1/2}u(t_0-\epsilon)$ and
   also applied for   $t=\epsilon$).

  \end{proof}

In combination with Theorems \ref{prop:ana_small_times1} and
\ref{prop:ana_small_times1mod} we obtain:
\begin{cor}
  \label{cor:sobol-analyt-bounds}
Let $r\geq 1/2$,  $0<T<\infty$ and $u\in \vS_{r,I}$ be
  given; $I:=]0,T]$.  Let $u_0=u(0)$. There exist
  $\delta=\delta(u,r)>0$ and
  $C=C(u,T,r)>0$ such that
  \begin{equation}
    \label{eq:74b}
    \|\e^{\min (\sqrt t, \delta) A}u(t)\|_{H^{1/2}}\leq C \mforall t\in I.
  \end{equation}
In particular, for all $\bar r\geq   1/2$ and with $\tilde C
=C\max_{x\geq 0}x^{\bar r-1/2}\e^{-x}$
\begin{equation}
    \label{eq:74bb}
    \|A^{\bar r}u(t)\|_{L^2}\leq \tilde C \min (\sqrt t, \delta)^{-(\bar r-1/2)}\mforall t\in I.
  \end{equation}
If $r>1/2$  the dependence  of $\delta$ and $C$ on
  $u$ can be chosen to be through $|u|$
  and $\|u_0\|_{H^r}$ and through $|u|$, $\sup_{t\in I}\|u(t)\|_{H^{1/2}}$
  and $\|u_0\|_{H^r}$, respectively.

For all $k\in
  \N\cup\{0\}$ and for all $\bar r\geq   1/2$
\begin{equation}
  \label{eq:80}
  u\in C^k(I,H^{\bar r}).
\end{equation}

Writing $u(t)(x)=u(t,x)$,
  \begin{equation}
    \label{eq:77}
    u\in C^{\infty}(I\times \R^3).
  \end{equation}

\end{cor}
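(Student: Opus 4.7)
The corollary combines a spatial analyticity statement, \eqref{eq:74b}, a Sobolev-norm estimate \eqref{eq:74bb} that is its spectral-theoretic consequence, and the $C^k$/$C^\infty$ regularity statements that follow by bootstrapping the evolution equation \eqref{eq:73}. My strategy is to handle these three tasks in that order.

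First I would establish \eqref{eq:74b} by patching a short-time analyticity bound to the bounded-interval bound of Proposition \ref{prop:sobolev}. By Remark \ref{rmks:strong-solutions}\ref{item:14}--\ref{item:15}, on a sufficiently small subinterval $]0,T_1]\subseteq I$ the strong solution $u$ coincides with the fixed-point solution furnished by one of the Propositions \ref{prop:kato_small_data}, \ref{prop:kato small data2}, \ref{prop:kato small data2mod}, depending on whether $r=1/2$, $r\in ]1/2,3/2[$ or $r\in[5/4,\infty[$. Applying the corresponding analyticity result, Theorem \ref{prop:ana_small_times1} or \ref{prop:ana_small_times1mod}, with a fixed parameter $\lambda$ (say $\lambda=1$) on this initial subinterval then yields
\begin{equation*}
  \|\e^{\sqrt t\, A}u(t)\|_{H^{1/2}}\le C_1\mforall t\in]0,T_1].
\end{equation*}
Meanwhile Proposition \ref{prop:sobolev} applied with some $T_0<T_1$ provides constants $\delta_0,C_2>0$ such that $\|\e^{\delta_0 A}u(t)\|_{H^{1/2}}\le C_2$ on $[T_0,T]$. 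Setting $\delta:=\min(\delta_0,\sqrt{T_0})$ and splitting $I$ at $T_0$ then yields \eqref{eq:74b}. The claimed dependence of $\delta$ and $C$ (in particular the refined form for $r>1/2$) is read off directly from the corresponding dependencies in the input propositions, where the short-time piece depends on $u$ only through $|u|$ and $\|u_0\|_{H^r}$ while Proposition \ref{prop:sobolev} contributes the extra dependence on $\sup_{t\in I}\|u(t)\|_{H^{1/2}}$.

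For \eqref{eq:74bb} I would invoke the spectral theorem alone. With $\theta(t):=\min(\sqrt t,\delta)$ one has the functional calculus bound $\|A^{\bar r-1/2}\e^{-\theta(t) A}\|_{\vB(L^2)}\le (\sup_{y\ge 0}y^{\bar r-1/2}\e^{-y})\,\theta(t)^{-(\bar r-1/2)}$, hence
\begin{equation*}
  \|A^{\bar r}u(t)\|_{L^2}\le \bigl(\sup_{y\ge 0}y^{\bar r-1/2}\e^{-y}\bigr)\theta(t)^{-(\bar r-1/2)}\|\e^{\theta(t)A}A^{1/2}u(t)\|_{L^2},
\end{equation*}
and \eqref{eq:74b} bounds the last factor.

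For the temporal regularity \eqref{eq:80} I would argue by induction on $k$. The base case $k=0$, i.e.\ $u\in C(I,H^{\bar r})$ for every $\bar r\ge 1/2$, follows from \eqref{eq:74bb} combined with a local continuity argument based on the integral equation \eqref{eq:1} applied with starting time $t_0-\epsilon$ for each interior $t_0\in I$, in the spirit of the proof of Proposition \ref{prop:sobolev}. For the inductive step, equation \eqref{eq:73} writes $\tfrac{\d}{\d t}u=-A^2u-P(Mu\cdot\nabla)u$; using Lemma \ref{lemma:sobolev} together with \eqref{eq:74bb} for sufficiently large auxiliary Sobolev indices, both terms lie in $C^{k-1}(I,H^{\bar r'})$ for every $\bar r'\ge 1/2$, so $u\in C^k(I,H^{\bar r})$ for every $\bar r\ge 1/2$. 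Finally, \eqref{eq:77} is obtained by combining \eqref{eq:80} with the Sobolev embedding $H^{\bar r}\subseteq C^m(\R^3)$, valid for $\bar r>m+3/2$: for each $k,m\in\N\cup\{0\}$ one chooses $\bar r$ large enough to embed $H^{\bar r}$ into $C^m$ and obtains continuous $\partial_t^k\partial_x^\alpha u$ on $I\times\R^3$ for all $|\alpha|\le m$. The main obstacle I anticipate is the bootstrap in the inductive step: one must guarantee that the nonlinearity $P(Mu\cdot\nabla)u$ does not lose regularity, and this is only possible because \eqref{eq:74bb} is simultaneously available for every $\bar r\ge 1/2$ on compact subsets of $I$, which is itself a manifestation of the real-analyticity established in \eqref{eq:74b}.
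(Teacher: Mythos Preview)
Your proposal is correct and follows essentially the same approach as the paper: patch the short-time analyticity from Theorems \ref{prop:ana_small_times1}/\ref{prop:ana_small_times1mod} (with $\lambda=1$) to Proposition \ref{prop:sobolev} on $[T_0,T]$ for \eqref{eq:74b}, deduce \eqref{eq:74bb} spectrally, and then bootstrap via \eqref{eq:73} and Sobolev embedding for \eqref{eq:80} and \eqref{eq:77}. For the base case $k=0$ of \eqref{eq:80} the paper invokes Propositions \ref{prop:kato small data2mod} and \ref{prop:strong-solutions} explicitly (restart at an interior time with data in $H^{\bar r}$ for $\bar r>5/4$ and identify by uniqueness), which is precisely the ``local continuity argument'' you sketch.
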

\begin{proof} We apply Theorems \ref{prop:ana_small_times1} and
\ref{prop:ana_small_times1mod} with $\lambda=1$. There exist $T_0\in
]0,T[$ and $C>0$ such that
\begin{equation}
    \label{eq:74c}
    \|\e^{\sqrt t A}u(t)\|_{H^{1/2}}\leq C \mforall t\in ]0,T_0].
  \end{equation} These constants are for $r=1/2$ chosen in agreement
  with  an
  approximation property of $u_0$. This is not the case for $r>1/2$
  where the bounds (\ref{eq:18}) and (\ref{eq:18mod_anal}) can be used
  directly to get the  appropriate smallness in terms of the quantities
  $\|u_0\|_{H^r}$ and $r$. We shall use Proposition
  \ref{prop:sobolev} with the $T_0$ from (\ref{eq:74c}). Whence for $r>1/2$ we  apply Proposition
  \ref{prop:sobolev} with $T_0$ chosen as a function of the
  quantities $\|u_0\|_{H^r}$ and $r$. For $r=1/2$ we apply Proposition
  \ref{prop:sobolev} with   $T_0$ depending  on $u$ through $u_0$.

As for (\ref{eq:80}) with $k=0$ we  apply   (\ref{eq:74bb}) in combination
with Propositions \ref{prop:kato small data2mod} and
\ref{prop:strong-solutions} (notice that we can
assume that $\bar  r > 5/4$ from the very definition of
$\vS_{r,I}$). The statement (\ref{eq:80}) with arbitrary $k\geq 1$ follows
inductively by repeated differentiation of (\ref{eq:73}).

  The statement (\ref{eq:77}) follows from (\ref{eq:80}) and  the Sobolev embedding theorem.

\end{proof}
\begin{remark}
  \label{remark:sobol-analyt-bounds} For $r>1/2$ the Sobolev bounds (\ref{eq:74bb}) can
be improved  in the short time regime due to Theorems \ref{prop:ana_small_times1} and
\ref{prop:ana_small_times1mod}: For any given $\bar r\geq r$
the quantity has a  bound  of the form $Ct^{-(\bar r-r)/2}$ for
small $t>0$.
\end{remark}

The energy inequality was studied under certain conditions in
Subsection \ref{Energy inequality}. We can now prove it more generally:
\begin{cor}
  \label{cor:energybound2} Let $r\geq 1/2$, an interval $I=]0,T]$  and $u\in \vS_{r,I}$ be
  given.  Let $u_0=u(0)$. Suppose in addition  the condition
\begin{equation}
    \label{eq:54bb}
    \nabla\cdot \parb{Mu(t)}=0\mforall t\in I.
  \end{equation}
Then
\begin{equation}
    \label{eq:57bb}
    \|u(t)\|^2=\|u_0\|^2-2\int_0^t\|Au(s)\|^2\,\d s\mforall t\in I.
  \end{equation}

In particular $\|u(t)\|\leq \|u_0\|$  for all $ t\in I$.
\end{cor}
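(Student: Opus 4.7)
The plan is to run the standard energy identity argument, using the strong smoothness already supplied by Corollary~\ref{cor:sobol-analyt-bounds} to legitimize every manipulation on the open interval $I$, and then pass to $t=0$ at the end using the continuity built into the definition of $\vS_{r,I}$.

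First, on $I$ the solution is smooth in both variables, and in particular $u\in C^1(I,H^{\bar r})$ for every $\bar r\geq 1/2$ by Corollary~\ref{cor:sobol-analyt-bounds}. This makes $\|u(t)\|^2$ a $C^1$ function on $I$ whose derivative may be computed by pairing with the PDE~\eqref{eq:73}. I would write
\begin{equation*}
  \tfrac{\d}{\d t}\|u(t)\|^2
  =2\inp[\big]{u(t),\dot u(t)}
  =-2\inp[\big]{u(t),A^2u(t)}-2\inp[\big]{u(t),P(Mu(t)\cdot\nabla)u(t)}.
\end{equation*}
The first inner product is $-2\|Au(t)\|^2$ by self-adjointness of $A$. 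For the second, I would use $Pu(t)=u(t)$ together with self-adjointness of $P$ to drop the projection, so that the remaining task is to show $\inp{u,(Mu\cdot\nabla)u}=0$.

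The key algebraic step is an integration by parts which is the only place the hypothesis \eqref{eq:54bb} enters. With sufficient decay at infinity (guaranteed by $u(t)\in H^{\bar r}$ for every $\bar r$, so $u(t)$ and $|u(t)|^2$ vanish at infinity in the classical sense), I would compute
\begin{equation*}
  \inp[\big]{u,(Mu\cdot\nabla)u}
  =\int (Mu)_k\,u_j\,\partial_ku_j\,\d x
  =\tfrac12\int(Mu)_k\,\partial_k|u|^2\,\d x
  =-\tfrac12\int\bigl(\nabla\cdot(Mu)\bigr)|u|^2\,\d x=0,
\end{equation*}
where the last equality uses \eqref{eq:54bb}. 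Consequently $\frac{\d}{\d t}\|u(t)\|^2=-2\|Au(t)\|^2$ pointwise on $I$.

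Finally I would integrate this identity from $t_0>0$ to $t$ and let $t_0\to 0^+$. Since $u\in C(\bar I,H^r)$ with $r\geq 1/2$, in particular $u\in C(\bar I,L^2)$, so $\|u(t_0)\|^2\to\|u_0\|^2$. To pass the right-hand side to the limit I need $\int_0^t\|Au(s)\|^2\,\d s<\infty$, which is exactly what \eqref{eq:74bb} gives: with $\bar r=1$ it yields $\|Au(s)\|\leq \tilde C\min(\sqrt s,\delta)^{-1/2}$, so $\|Au(s)\|^2\lesssim s^{-1/2}$ near $0$ and is integrable. This establishes \eqref{eq:57bb}, and then $\|u(t)\|\leq\|u_0\|$ is immediate since the integral term is nonnegative. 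The only delicate point worth double-checking is the integration by parts, but the $C^\infty$-regularity from \eqref{eq:77} together with the $H^{\bar r}$ bounds from \eqref{eq:74bb} leave no genuine obstacle.
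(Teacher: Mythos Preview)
Your proof is correct and follows essentially the same route as the paper: use the regularity furnished by Corollary~\ref{cor:sobol-analyt-bounds} to justify differentiating $\|u(t)\|^2$ on $I$, kill the nonlinear term via the divergence-free condition \eqref{eq:54bb} and integration by parts, and then integrate back to $t=0$ using continuity in $L^2$ together with the integrability of $\|Au(s)\|^2$ near $s=0$ coming from \eqref{eq:74bb}. Your write-up is simply more explicit than the paper's (which invokes \eqref{eq:74bb} with $\bar r=3/2$ and Remark~\ref{rmks:strong-solutions}~\ref{item:13} in place of your appeal to \eqref{eq:80}).
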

\begin{proof}
  Due to (\ref{eq:74bb}) (applied for the first identity with $\bar  r=3/2$ in combination
  with  Remark
  \ref{rmks:strong-solutions} \ref{item:13}) and (\ref{eq:54bb}) the computation
\begin{equation}
    \label{eq:56bb}
   \tfrac{\d }{\d t}\|u(t)\|^2=-2\|Au(t)\|^2
   +2\inp[\big]{u(t),(Mu(t)\cdot \nabla)u(t)}=-2\|Au(t)\|^2\mforall
   t\in I,
  \end{equation} is legitimate. By integration of (\ref{eq:56bb}) we
  obtain (\ref{eq:57bb}). Note incidentally  that
  $\|Au(s)\|^2=O(s^{-1/2})$, due to (\ref{eq:74bb}), yielding an
  independent proof of the convergence of  the
  integral in (\ref{eq:57bb}).

\end{proof}

\subsection{Global analyticity stability}\label{Completing the proof}
We shall study the set of data for which we have global
solutions. There are several works (for example \cite{PRST},\cite{GIP1}, \cite{GIP2}, \cite{ADT}, \cite{FO}, \cite{Zh}) which study the stability of solutions to the Navier-Stokes equations.  Perhaps the first result is \cite{PRST} but there are many further results for different spaces.  In particular the fact that $\vI_r$ defined below is an open set in our setting is a known result (\cite{GIP1}, \cite{GIP2}, \cite{ADT}).  Although we give the openness result we concentrate particularly on the stability of the region of analyticity and corresponding estimates.

We shall prove two stability results. The first is for  bounded
intervals only, however it is used in the proof of  our second (global)
stability result (and besides it has some independent interest, see for example Corollary \ref{cor:semicontinuity}):
\begin{prop}\label{thm:global-stability_0}  Let $I$ be an interval of
  the form $I=]0,T]$, and let $\theta: \bar I\to [0,\infty[$ be a continuous
  function obeying the following estimate for some $\lambda\geq 0$:
  \begin{equation}
    \label{eq:82}
\theta(s+t)\leq \lambda\sqrt s+\theta(t) \mfor s,t,s+t\in \bar I.
  \end{equation}
Suppose
  $u\in  \vS_{1/2,I}$ obeys
\begin{equation}
   \label{eqn:58new_0}
    A^{1/2}\e^{\theta(\cdot)
A}u(\cdot) \in C(\bar I, L^2).
\end{equation}
Let $u_0=u(0)$. There exists $\delta_0 >0$ such that:
\begin{enumerate}[i)]
  \item \label{item:12dd}
 If $\delta \leq  \delta_0, v_0
\in PH^{1/2}$  and $\|A^{1/2}\e^{\theta(0)A}(v_0 - u_0)\| \leq \delta$ it follows
that there exists
$v \in \vS_{1/2,I}$ with $v(0)=v_0$ obeying
\begin{subequations}
  \begin{align}
  \label{eqn:radiusstability_0}
  \|A^{1/2}\e^{\theta( t)A}(v(t)-u(t))\| &\leq K_1\delta,\\
  \label{eqn:radiusstability2_0}
  t^{3/8}\|A^{5/4}\e^{\theta( t)A}(v(t)-u(t))\|&\leq K_2\delta.
  \end{align}
%In particular  if $\mu < \lambda$ and $s > 0$
%\begin{equation}
%\label{eq:mustability_0}
%t^{s/2}\|A^{s + 1/2}\e^{\mu \sqrt tA}(v(t)-u(t))\|< CK\delta\text{
%  where }C=\sup_{x\geq 0}x^{s}\e^{(\mu -\lambda)x}.
%\end{equation}
\item \label{item:12dd2} If $\delta \leq  \delta_0, v_0
\in PH^{1/2}$ and $\|\e^{\theta(0)A}(v_0 - u_0)\|_{H^{1/2}} \leq \delta$ it follows
 in addition that
\begin{equation}
  \label{eqn:radiusstability3_0}
  \|\e^{\theta( t)A}(v(t)-u(t))\|\leq K_3\delta.
\end{equation}
\end{subequations}
\end{enumerate}
 In \eqref{eqn:radiusstability_0}--\eqref{eqn:radiusstability3_0}  the
 constants $K_1,K_2,K_3>0$ depend on $\theta$, $u$, $T$ and $\delta_0$ but not on
$\delta$, and all bounds are uniform in $t\in I$.
  \end{prop}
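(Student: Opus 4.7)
The plan is to solve for $w = v - u$ via a contraction argument in a weighted Banach space of the type introduced in Section \ref{Analyticity bounds for small times}. Subtracting the integral equation (\ref{eq:1}) for the two strong solutions (which they satisfy by Remark \ref{rmks:strong-solutions} \ref{item:14}) yields the difference equation
\begin{equation*}
w(t) = \e^{-tA^2}w_0 - B(u,w)(t) - B(w,u)(t) - B(w,w)(t),\qquad w_0 = v_0 - u_0,
\end{equation*}
with $B$ the bilinear operator of Section \ref{Integral equation}. The natural ambient space is $\vB = \vB_{0,\theta,I',3/8,5/4}$ with norm (\ref{eq:3}), augmented with $\sup_{t\in I'}\|\e^{\theta(t)A}A^{1/2}w(t)\|$ to track the continuous $\dot H^{1/2}$ bound, taken on sub-intervals $I'\subseteq I$ specified below. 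The subadditivity $\theta(s+t)\le \lambda\sqrt s+\theta(t)$ together with the continuity hypothesis \eqref{eqn:58new_0} is precisely what is needed to make the semigroup bound of Lemma \ref{Lemma:spect_bnd} interact compatibly with the time-dependent weight $\e^{\theta(t)A}$.

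The key estimates mirror those of Section \ref{Analyticity bounds for small times}. For the free term, applying $\theta(t)\le \lambda\sqrt t+\theta(0)$ and Lemma \ref{Lemma:spect_bnd} (with $\alpha = 3/4$) to $A^{1/2}\e^{\theta(0)A}w_0$ yields $|\e^{-\cdot A^2}w_0|_\vB \le C\delta$. For the bilinear term, using $\theta(t)\le \lambda\sqrt{t-s}+\theta(s)$, the Fourier-side product estimate (\ref{eq:15ba}), and Lemma \ref{Lemma:spect_bnd} (with $\alpha = 5/4$), one obtains
\begin{equation*}
|B(w_1,w_2)|_\vB \le \gamma\,|w_1|\,|w_2|,
\end{equation*}
with $\gamma$ depending only on $\lambda$ (cf. (\ref{eq:8bba})); consequently the linear-in-$w$ contribution satisfies $|B(u,w)+B(w,u)|_\vB \le 2\gamma U\,|w|$, where $U = |u|_\vB$. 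Exactly as in the proof of Theorem \ref{prop:ana_small_times1}, provided $U < 1/(4\gamma)$ and $\delta$ is small, the map $w\mapsto Y - B(u,w)-B(w,u)-B(w,w)$ is a contraction on $\{|w|\le 2|Y|\}$, producing the unique solution with the required local bounds of the form (\ref{eqn:radiusstability_0})--(\ref{eqn:radiusstability2_0}). The additional bound (\ref{eqn:radiusstability3_0}) comes from the weighted analog of (\ref{eq:7bbbq}) under the extra $L^2$ control on the data.

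The main obstacle, and the technical heart of the argument, is that on the full interval $I$ the norm $U = |u|_\vB$ will generally exceed the contraction threshold $1/(4\gamma)$. To overcome this I would partition $\bar I$ into finitely many sub-intervals $[T_j,T_{j+1}]$, $j = 0,\dots,N-1$, $T_0 = 0 < \cdots < T_N = T$, on each of which the local weighted norm of $u$ is below $1/(4\gamma)$. For $j = 0$ one uses the approximation argument from the proof of Proposition \ref{prop:kato_small_data}: since $\e^{\theta(0)A}u_0 \in \dot H^{1/2}$ is approximable by $\dot H^r$-elements for $r > 1/2$, Lemma \ref{Lemma:spect_bnd} gives $t^{3/8}\|\e^{\theta(t)A}A^{5/4}\e^{-tA^2}u_0\|\to 0$, and the bilinear term in the integral equation for $u$ is controlled by the same mechanism, so that $t^{3/8}\|\e^{\theta(t)A}A^{5/4}u(t)\|\to 0$ as $t\to 0$; thus $T_1$ can be chosen small. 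For $j \ge 1$, since $u(T_j) \in H^r$ for every $r \ge 0$ by Corollary \ref{cor:sobol-analyt-bounds}, the same limit $\tilde t^{3/8}\|\e^{\theta(T_j+\tilde t)A}A^{5/4}u(T_j+\tilde t)\|\to 0$ holds as $\tilde t \to 0$, by boundedness of the smooth factor and the $\tilde t^{3/8}$ prefactor; the sub-interval length $T_{j+1}-T_j$ is then chosen accordingly. Compactness of $\bar I$ guarantees $N < \infty$. Carrying out the contraction on each sub-interval with initial data $w(T_j)$ inherited from the previous step, and chaining the bounds over the finitely many steps (at each step the constant $\|A^{1/2}\e^{\theta(T_j)A}w(T_j)\|$ is propagated), yields the global estimates (\ref{eqn:radiusstability_0})--(\ref{eqn:radiusstability3_0}) with constants $K_1,K_2,K_3$ depending on $u$, $\theta$, $T$, and $\delta_0$ through this decomposition but not on $\delta$.
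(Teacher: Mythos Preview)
Your overall strategy---solve for $w=v-u$ by a contraction in the weighted space on short subintervals, then chain the bounds---is exactly the paper's. The paper likewise partitions $\bar I$ into finitely many pieces of a common length $s_0$, runs the fixed point for $F(w)=\e^{-sA^2}w(t_0)+B(w,u_{t_0})+B(u_{t_0},w)+B(w,w)$ on each, and propagates the weighted $\dot H^{1/2}$ bound with a fixed multiplicative factor at every step.

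There is, however, a genuine gap in your justification for $j\ge 1$. You invoke Corollary \ref{cor:sobol-analyt-bounds} to get $u(T_j)\in H^r$ for all $r$ and conclude that $\tilde t^{3/8}\|A^{5/4}\e^{\theta(T_j+\tilde t)A}u(T_j+\tilde t)\|\to 0$ ``by boundedness of the smooth factor''. But unweighted smoothness of $u(T_j)$ says nothing about $\e^{\theta(T_j)A}u(T_j)$ beyond what the hypothesis \eqref{eqn:58new_0} already gives, namely $A^{1/2}\e^{\theta(T_j)A}u(T_j)\in L^2$; the ``smooth factor'' $\e^{\theta(T_j+\tilde t)A}A^{5/4}u(T_j+\tilde t)$ need not be bounded, so the prefactor $\tilde t^{3/8}$ alone does not force the limit. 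Relatedly, your compactness-of-$\bar I$ remark does not by itself prevent the subinterval lengths $T_{j+1}-T_j$ from shrinking to zero; as written this is an inductive construction, not a covering argument. The fix the paper uses (Lemma \ref{lemma:uniformboundonu}) is to exploit compactness of the \emph{range} of the continuous map $\bar I\ni t_0\mapsto A^{1/2}\e^{\theta(t_0)A}u(t_0)$ in $L^2$: on that compact set $1_{[N,\infty[}(A)\to 0$ uniformly, which together with the heat smoothing, the subadditivity \eqref{eq:82}, and a high/low frequency split produces a single $s_0>0$ such that both $|\e^{-(\cdot)A^2}u(t_0)|_{s_0,t_0}$ and $|u_{t_0}|_{s_0,t_0}$ are below the contraction threshold for \emph{every} $t_0\in[0,T[$. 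With this uniform $s_0$ in hand, your inductive contraction goes through exactly as you outline.
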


  In the proof we will use norms of the form
\begin{equation}
|w|_{s_0,t_0}: = \sup_{0 < s \leq
  \min (s_0,T-t_0)}s^{3/8}\|A^{5/4}\e^{\theta(s+t_0)A}w(s)\|;\;s_0>0,\,t_0\in [0,T[.
\end{equation}
Mimicking Subsection
    \ref{Analyticity bounds in
    dot r and  r
    for small times and for large data} thus with $\zeta(s)=1$ and
  $\theta(s)\to \theta(s+t_0)$  we find
\begin{subequations}
\begin{align}\label{eqn:A1halfbd0}
|B(w_1,w_2)|_{s_0,t_0} &\leq \gamma_{\lambda}|w_1|_{s_0,t_0}\cdot|w_2|_{s_0,t_0},\\
\label{eqn:A1halfbd}
\|A^{1/2}\e^{\theta(s+t_0)A}B(w_1,w_2)(s)\| &\leq
\gamma_{\lambda}|w_1|_{s_0,t_0}\cdot|w_2|_{s_0,t_0},\\
\label{eqn:A1halfbd2}
\|\e^{\theta(s+t_0)A}B(w_1,w_2)(s)\| &\leq
s^{1/4} \gamma_{\lambda}|w_1|_{s_0,t_0}\cdot|w_2|_{s_0,t_0}.
\end{align}
\end{subequations}
Here $\gamma_{\lambda} = dc_{\lambda}$ where  $d$ is independent of $\lambda,
s_0$, $t_0$ and $T$ and
$$c_{\lambda} := \sup_{x\geq0}\,\inp{x}^{5/4}\e^{\lambda x}\e^{-x^2}.$$
We will need the following lemma:

\begin{lemma}\label{lemma:uniformboundonu}
Suppose $\theta$ and  $u \in \vS_{1/2,I}$  are given as in
Proposition \ref{thm:global-stability_0}. Let  $\epsilon \in \ ]0,(2\gamma_{\lambda})^{-1}[$
be given. Then there is an $s_0 \in ]0,1[, s_0 =
s_0(\epsilon,\theta,u)$,  so that
\begin{align}&\forall t_0 \in [0,T[ \;\forall s\in ]0,\min
  (s_0,T-t_0)]: \label{eq:unifbdu}\\
    &2s^{3/8}\|A^{5/4}\e^{\theta(s+t_0)A}\e^{-sA^2}u(t_0)\| \leq
    \epsilon \mand s^{3/8}\|A^{5/4}\e^{\theta(s+t_0)A}u(s+t_0)\| \leq
    \epsilon.\nonumber
 \end{align}
\end{lemma}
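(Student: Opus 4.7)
The plan is to establish the two inequalities in \eqref{eq:unifbdu} in turn, exploiting the semigroup inequality $\e^{\theta(s+t_0)A}\le \e^{\lambda\sqrt s A}\e^{\theta(t_0)A}$ that holds pointwise in Fourier space by \eqref{eq:82}, together with the fact that the hypothesis \eqref{eqn:58new_0} makes
\[
  K:=\{A^{1/2}\e^{\theta(t_0)A}u(t_0):t_0\in\bar I\}
\]
a compact subset of $L^2$, being the image of the compact interval $\bar I$ under a continuous map.

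For the first inequality, the semigroup inequality and the factorization $A^{5/4}=A^{3/4}\cdot A^{1/2}$ reduce the task to showing that $F_s(g):=2s^{3/8}\|A^{3/4}\e^{\lambda\sqrt s A}\e^{-sA^2}g\|\le \epsilon$ for all $g\in K$ when $s$ is sufficiently small. Lemma \ref{Lemma:spect_bnd} (with $\alpha=3/4$) gives that $\{F_s\}_{0<s\le 1}$ is $L^2$-Lipschitz uniformly in $s$. For fixed $g\in L^2$, dominated convergence applied to the Fourier-space integrand $s^{3/4}|\xi|^{3/2}\e^{2\lambda\sqrt s|\xi|-2s|\xi|^2}|\hat g(\xi)|^2$, which is bounded independently of $s$ by a constant multiple of $|\hat g(\xi)|^2$ and tends pointwise to $0$, yields $F_s(g)\to 0$ as $s\to 0$. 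Compactness of $K$ together with uniform equicontinuity then upgrades pointwise to uniform convergence on $K$ through a standard finite-net argument, producing a single $s_0\in\,]0,1[$ that works for all $t_0\in\bar I$.

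For the second inequality, Remark \ref{rmks:strong-solutions} \ref{item:14}--\ref{item:16} ensures that $u_{t_0}(\cdot):=u(\cdot+t_0)$ is a strong solution satisfying the integral equation $u_{t_0}=\e^{-(\cdot)A^2}u(t_0)+B(u_{t_0},u_{t_0})$. Applying the $|\cdot|_{s_1,t_0}$-norm for $s_1\le s_0$ and invoking the bilinear estimate \eqref{eqn:A1halfbd0} together with the first inequality (which controls $|\e^{-(\cdot)A^2}u(t_0)|_{s_1,t_0}\le\epsilon/2$ uniformly in $t_0$) yields
\[
  \phi_{t_0}(s_1):=|u_{t_0}|_{s_1,t_0}\le \tfrac{\epsilon}{2}+\gamma_\lambda\,\phi_{t_0}(s_1)^2,\qquad s_1\le s_0.
\]
Since $\epsilon<(2\gamma_\lambda)^{-1}$, the quadratic $\gamma_\lambda x^2-x+\epsilon/2$ has two positive roots $x_-\le\epsilon<(2\gamma_\lambda)^{-1}\le x_+$, and the above inequality forces $\phi_{t_0}(s_1)\in[0,x_-]\cup[x_+,\infty[$. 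The map $s_1\mapsto\phi_{t_0}(s_1)$ is nondecreasing and continuous, so provided it lies below $x_+$ for $s_1$ sufficiently small, connectedness pins it in $[0,x_-]$ throughout $]0,s_0]$, yielding $\phi_{t_0}(s_0)\le x_-\le\epsilon$ uniformly in $t_0$.

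The main obstacle is verifying the initial estimate $\limsup_{s_1\to 0^+}\phi_{t_0}(s_1)<x_+$, since the hypothesis \eqref{eqn:58new_0} controls only the $A^{1/2}$-weighted norm. I would handle this by re-inserting the integral equation into $\phi_{t_0}(s_1)$: the heat-flow contribution is already at most $\epsilon/2$ by Step 1, and the nonlinear contribution tends to $0$ with $s_1$ by the integral representation of $B(u_{t_0},u_{t_0})$, the local Sobolev bounds of Corollary \ref{cor:sobol-analyt-bounds} for $t_0>0$, and, at $t_0=0$, the $\vB^0$-membership of $u$ built into the definition of $\vS_{1/2,I}$.
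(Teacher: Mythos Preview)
Your treatment of the first inequality is essentially the paper's: both use the compactness of $K=\{A^{1/2}\e^{\theta(t_0)A}u(t_0):t_0\in\bar I\}$ in $L^2$ together with the semigroup inequality from \eqref{eq:82}. You phrase the compactness step as uniform equicontinuity plus dominated convergence, whereas the paper splits into $1_{[0,N]}(A)$ and $1_{[N,\infty[}(A)$; these are equivalent.

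For the second inequality, however, your continuity argument has a gap precisely at the point you call ``the main obstacle''. To run the quadratic inequality $\phi_{t_0}(s_1)\le \epsilon/2+\gamma_\lambda\phi_{t_0}(s_1)^2$ you must first know $\phi_{t_0}(s_1)<\infty$ for small $s_1$, i.e.\ that $\e^{\theta(\cdot+t_0)A}A^{5/4}u_{t_0}(\cdot)\in L^2$. Your proposed fix is circular: bounding the nonlinear term via \eqref{eqn:A1halfbd0} already presupposes $|u_{t_0}|_{s_1,t_0}<\infty$. The appeal to Corollary~\ref{cor:sobol-analyt-bounds} does not help either, since it yields analyticity radius $\min(\sqrt t,\delta)$ for a fixed $\delta$ that need not dominate $\theta(t_0)$; and the $\vB^0$-membership at $t_0=0$ is in the \emph{unweighted} space, whereas $\theta(0)$ may be positive. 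The hypothesis \eqref{eqn:58new_0} gives only $\e^{\theta(t)A}u(t)\in\dot H^{1/2}$, not $\dot H^{5/4}$, so there is no immediate a~priori route to finiteness of $\phi_{t_0}$.

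The paper avoids this entirely: it runs the contraction mapping of Subsection~\ref{Abstract scheme} directly in the weighted space $|\cdot|_{s_0,t_0}$ with $R=\epsilon/2$ (the first estimate supplies $|Y|\le R$, and $4\gamma_\lambda R=2\gamma_\lambda\epsilon<1$), producing a fixed point $w$ with $|w|_{s_0,t_0}\le\epsilon$. Crucially, this construction needs only the $\dot H^{1/2}$ input from \eqref{eqn:58new_0}. Then uniqueness of strong solutions (Remark~\ref{rmks:strong-solutions}~\ref{item:14} and Proposition~\ref{prop:strong-solutions}) in the \emph{unweighted} space identifies $w=u_{t_0}$ a~posteriori. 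This ``construct then identify'' step is what your argument is missing.
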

\begin{proof}
Using $\theta(s+t_0) \leq \lambda\sqrt{s} + \theta(t_0)$  and the spectral theorem we have for any $N > 0$
\begin{eqnarray}\label{eq:smallness}
    && s^{3/8}\|A^{5/4}\e^{\theta(s+t_0)A}\e^{-sA^2}u(t_0)\|\nonumber\\
&\leq &\|(\sqrt{s}A)^{3/4}\e^{\lambda\sqrt{s}A}\e^{-sA^2}\|\cdot\|1_{[N,\infty[}(A)A^{1/2}\e^{\theta(t_0)A}u(t_0)\nonumber\| \\
&+& s^{3/8}\|\e^{\lambda\sqrt{s}A}\e^{-sA^2}\|\cdot\|A^{3/4}1_{[0,N]}(A)A^{1/2}\e^{\theta(t_0)A}u(t_0)\|\nonumber\\
&\leq & c_{\lambda}\|1_{[N,\infty[}(A)A^{1/2}\e^{\theta(t_0)A}u(t_0)\| +c_{\lambda}s^{3/8}N^{3/4}\|A^{1/2}\e^{\theta(t_0)A}u(t_0)\|.
\end{eqnarray}
Since the map ${\bar I\ni t_0 \to A^{1/2}\e^{\theta(t_0)A}u(t_0)}$ is
continuous, it maps into a compact set on which ${1_{[N,\infty[}(A)
  \to 0}$ uniformly as ${N \to \infty}$. We then fix $N$ so that the
first term of (\ref{eq:smallness}) is less than $\epsilon/4$ for all
$t_0 \in \bar I$.  Once $N$ is fixed we can choose $s_0 \in ]0,1[$ so
that the second term in (\ref{eq:smallness}) is less than $\epsilon/4$
for $s \in [0,s_0]$. We have proved the first estimate of
(\ref{eq:unifbdu}), $|\e^{-(\cdot)A^2}u(t_0)|_{s_0,t_0}\leq \epsilon/2$.

 To show the second estimate of
(\ref{eq:unifbdu}) we go back to the integral equation \eqref{eq:1}
and use $u(t_0)$ as initial data following the scheme of Subsection
\ref{Abstract scheme} (with $R=\epsilon/2$). We use the first
estimate in combination with (\ref{eqn:A1halfbd0}). By uniqueness the
constructed fixed point $w=u_{t_0}$ where $u_{t_0}(s)=u(s+t_0)$.
\end{proof}

\begin{proof}[Proof of Proposition \ref{thm:global-stability_0}]
We now choose  $\epsilon = (3\gamma_{\lambda})^{-1}$ and $s_0$ in
accordance with Lemma \ref{lemma:uniformboundonu}. We can assume that
$m_0:=  T/s_0\in \N$.  We build the solution $v$ in the interval $I$ by constructing it in a series of intervals $[(m-1)s_0,ms_0], m = 1,2, \dots, m_0$. We assume inductively we have constructed $v(t)$ in the interval $0\leq t \leq ms_0$ (with $m \leq m_0 -1$) and that we have the estimate
\begin{equation}
   \label{eq:inductivebd}
   \|A^{1/2}\e^{\theta(t)A}(v(t) - u(t))\| \leq (2c_{\lambda})^m\delta
\end{equation}
in this interval (this is true for $m = 0$). Let $t_0 = ms_0$ and $u_{t_0}(s) = u(s+ t_0)$. Consider the map
\begin{equation}\label{eqn:Fdef}
F(w)(s): = \e^{-sA^2}(v(t_0) - u(t_0)) + B(w,u_{t_0})(s) + B(u_{t_0},w)(s) + B(w,w)(s).
\end{equation}
 We have
\begin{subequations}
\begin{align}\label{eq:81}
|F(w)|_{s_0,t_0} &\leq c_{\lambda}(2c_{\lambda})^m\delta + 2\gamma_{\lambda}|w|_{s_0,t_0}\cdot|u_{t_0}|_{s_0,t_0} + \gamma_{\lambda}|w|_{s_0,t_0}^2,\\
|F(w_1) - F(w_2)|_{s_0,t_0} &\leq \gamma_{\lambda}(2|u_{t_0}|_{s_0,t_0} + |w_1|_{s_0,t_0} + |w_2|_{s_0,t_0})\cdot|w_1 - w_2|_{s_0,t_0}.\label{eq:83}
\end{align}
\end{subequations}
Then a simple computation shows $F:B_{2R} \to B_{2R}$ is a strict contraction
if $R = c_{\lambda}(2c_{\lambda})^m\delta$ and $\delta\leq \delta_0$
where $\delta_0>0$ is chosen small enough.  If the fixed point is
denoted by $w$, we define $v(t) = u(t) + w(t-t_0)$ for $t \in
[ms_0,(m+1)s_0]$.  The bound (\ref{eq:inductivebd}) with $m
\rightarrow m+1$ in the interval $[ms_0,(m+1)s_0]$ follows from $w =
F(w)$ and the estimate (\ref{eqn:A1halfbd}).  This completes the
induction and gives (\ref{eq:inductivebd}) with $m = m_0$ for $t \in
I$. We have constructed a solution $v$ obeying
(\ref{eqn:radiusstability_0}). From the very construction we have partly
shown (\ref{eqn:radiusstability2_0}), however our bounds are somewhat
poor at $ms_0$, $m = 1, \dots, m_0-1$ (assuming here $m_0\geq 2$). In order to show
(\ref{eqn:radiusstability2}) near $ms_0$, $m = 1, \dots, m_0-1$,  we can repeat the above
procedure in the intervals $[(m-1/2)s_0, (m+1/2)s_0]$. The consistency
of our definitions in overlapping intervals follows from
uniqueness. For (\ref{eqn:radiusstability3_0}) we use
(\ref{eqn:A1halfbd2}) to show inductively
\begin{equation}
   \label{eq:inductivebdss}
   \|\e^{\theta(t)A}(v(t) - u(t))\| \leq (2c_{\lambda})^m\delta \mfor  0\leq t \leq ms_0,
\end{equation} cf. (\ref{eq:inductivebd}).
\end{proof}

We will use Proposition \ref{thm:global-stability_0} to shed some light on (\ref{eq:semi-continuity}), the conjectured lower semicontinuity of the analyticity radius of $u \in \vS_{1/2,I}$.  To motivate the construction in the following corollary, it should be noted that by definition of $\rad(u(t))$, if $t > 0$ and
\begin{equation}\label{eqn:lsc in time}
\liminf_{s \uparrow t} \rad(u(s)) \geq \rad(u(t))
\end{equation}
then
\begin{equation}
\forall \, \alpha < \rad(u(t)) \, \exists \, a < t \, \text{so that} \,  \|\e^{\alpha A}u(s)\| < \infty \, \forall s  \in ]a,t]
\end{equation}
but the uniform bound
\begin{equation}\label{eqn:unifbd}
\forall \, \alpha < \rad(u(t)) \, \exists \, a < t \, \text{so that} \,  \sup_{s\in]a,t]}\|\e^{\alpha A}u(s)\| < \infty
\end{equation}
does not readily follow from the definitions.

\begin{cor}\label{cor:semicontinuity}
Fix $u \in \vS_{1/2,I}$ and let $u_0 = u(0)$.  For $t \in I$ and any $v \in \vS_{1/2,I}$ let $v_0 = v(0)$ and define
\begin{equation} \label{eq:newrad}
r_t(v_0): = \sup\left\{\alpha \geq 0| \sup_{s\in ]a,t]} \|\e^{\alpha A}v(s)\| < \infty\; {\rm for \; some} \, a < t\right\}
\end{equation}
Then $r_t(\cdot)$ is lower semicontinous at $u_0$ as a function of the initial data $v_0$ in the $H^{1/2}$ topology.  More precisely, for $t \in I$
\begin{equation}\label{eqn:lscnewrad}
\liminf_{\,\,\|v_0 - u_0\|_{H^{1/2}} \rightarrow  0} r_t(v_0) \geq r_t(u_0).
\end{equation}

If the analyticity radius of $u$ satisfies (\ref{eqn:lsc in time}) and (\ref{eqn:unifbd}) then $r_t(u_0) = \rad(u(t))$ and the analyticity radius at $t$ is lower semicontinous as a function of the initial data at $u_0$.  More precisely, (\ref{eq:semi-continuity}) is valid.
\end{cor}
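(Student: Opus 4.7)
The plan is to apply Proposition \ref{thm:global-stability_0} with a carefully constructed weight $\theta$ for which the resulting stability estimate holds at weight $\theta(s) = \alpha$ on some terminal subinterval $[a+\eta, t]$; this directly yields $r_t(v_0) \geq \alpha$, and then \eqref{eqn:lscnewrad} follows by letting $\alpha \uparrow r_t(u_0)$.

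First I reduce \eqref{eqn:lscnewrad} to showing that for every $\alpha < r_t(u_0)$, one has $r_t(v_0) \geq \alpha$ for all $v_0$ in some $H^{1/2}$-neighborhood of $u_0$. Fix such an $\alpha$, pick $\beta \in \,]\alpha, r_t(u_0)[$, and choose $a \in \,]0, t[$ with $M := \sup_{s \in ]a, t]}\|\e^{\beta A}u(s)\| < \infty$. The $H^{1/2}$-continuity of $u$ together with Fatou's lemma in Fourier extends this bound to $s = a$, so $\sup_{s \in [a, t]}\|\e^{\beta A}u(s)\| \leq M$.

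Next I build a continuous $\theta : [0,t] \to [0,\infty[$ with $\theta(0) = 0$, $\theta \equiv \alpha$ on some $[a+\eta, t]$, satisfying \eqref{eq:82} and $A^{1/2}\e^{\theta(\cdot)A}u(\cdot) \in C([0,t], L^2)$. I take $\theta$ piecewise: $\theta(\tau) = \sqrt{\tau}$ on $[0, \delta_1^2]$; $\theta \equiv \delta_1$ on $[\delta_1^2, a]$ with $\delta_1 \in \,]0, \delta_*]$, $\delta_*$ being the constant from Corollary \ref{cor:sobol-analyt-bounds} for $u$ and $\delta_1^2 < a$; linear interpolation from $\delta_1$ to $\alpha$ on $[a, a+\eta]$ for some $\eta \in \,]0, t-a[$; and $\theta \equiv \alpha$ on $[a+\eta, t]$. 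The slope condition \eqref{eq:82} holds for a finite $\lambda$ via $\sqrt{\tau+s}\leq\sqrt{\tau}+\sqrt{s}$ on the first piece, the bound $(\alpha-\delta_1)/\sqrt{\eta}$ from the interpolation, and the inequality $\sqrt{a}-\sqrt{\tau}\leq \sqrt{a-\tau}$ for the mixed cases. The $L^2$ bound on $A^{1/2}\e^{\theta(\tau)A}u(\tau)$ follows on $[0,a]$ from Corollary \ref{cor:sobol-analyt-bounds}, since $\theta \leq \min(\sqrt{\tau}, \delta_*)$ there, and on $[a,t]$ from the factorization $A^{1/2}\e^{\theta(\tau)A} = \bigl(A^{1/2}\e^{-(\beta-\theta(\tau))A}\bigr)\e^{\beta A}$ together with $\theta \leq \alpha < \beta$ and the spectral bound $\|A^{1/2}\e^{-cA}\|_{\vB(L^2)} \leq C c^{-1/2}$.

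With $\theta$ in hand, Proposition \ref{thm:global-stability_0} \ref{item:12dd2} (applied with $\theta(0) = 0$, so that the hypothesis reduces to $\|v_0 - u_0\|_{H^{1/2}} \leq \delta$) produces, for all sufficiently small $\delta$, a solution $v$ satisfying $\|\e^{\theta(s)A}(v(s)-u(s))\| \leq K_3\delta$ on $]0, t]$. Restricting to $s \in [a+\eta, t]$, where $\theta \equiv \alpha$, and combining with $\|\e^{\alpha A}u(s)\| \leq \|\e^{\beta A}u(s)\| \leq M$ gives $\|\e^{\alpha A}v(s)\| \leq M + K_3\delta < \infty$, hence $r_t(v_0) \geq \alpha$. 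Letting $\alpha \uparrow r_t(u_0)$ yields \eqref{eqn:lscnewrad}. The second claim follows from the identity $r_t(u_0) = \rad(u(t))$ under \eqref{eqn:lsc in time} and \eqref{eqn:unifbd} ($\leq$ by taking $s=t$ in the defining supremum, $\geq$ directly from \eqref{eqn:unifbd}), together with the trivial $\rad(v(t)) \geq r_t(v_0)$ combined with \eqref{eqn:lscnewrad}.

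The main obstacle I foresee is verifying continuity of $A^{1/2}\e^{\theta(\cdot)A}u(\cdot)$ into $L^2$, particularly across the junctions of the piecewise $\theta$. This requires a frequency-splitting dominated-convergence argument: on $\{|\xi| \leq N\}$ one uses $H^{1/2}$-continuity of $u$ and continuity of $\theta$, while on $\{|\xi| > N\}$ one exploits that $\theta$ stays strictly below an admissible analyticity weight (namely $\beta$ on $[a,t]$, or a weight slightly larger than $\theta$ supplied by Corollary \ref{cor:sobol-analyt-bounds} on $[0,a]$), which makes the high-frequency tails uniformly small.
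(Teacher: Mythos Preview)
Your approach is correct and follows the same strategy as the paper: build a weight $\theta$ obeying \eqref{eq:82} and \eqref{eqn:58new_0}, apply Proposition~\ref{thm:global-stability_0}, and extract a lower bound on $r_t(v_0)$. The paper, however, takes a much simpler $\theta$: namely $\theta\equiv 0$ on $[0,a]$ and $\theta(s)=\tfrac{s-a}{t-a}\,\alpha$ on $[a,t]$, with $\lambda=\alpha(t-a)^{-1/2}$. This eliminates your initial $\sqrt{\tau}$ piece, the plateau at $\delta_1$, and the appeal to Corollary~\ref{cor:sobol-analyt-bounds} altogether; on $[0,a]$ condition \eqref{eqn:58new_0} reduces to $u\in C(\bar I,H^{1/2})$. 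With that $\theta$ the paper concludes $r_t(v_0)\geq\alpha$ from $\sup_{\tau\in]a,t]}\|\e^{\theta(\tau)A}v(\tau)\|<\infty$ since for any $\alpha'<\alpha$ one has $\theta\geq\alpha'$ on a terminal subinterval; your plateau $\theta\equiv\alpha$ on $[a+\eta,t]$ makes this step more direct but is not needed.

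On the other hand, your write-up is more careful than the paper's in one respect: you explicitly verify \eqref{eqn:58new_0} on the part of $I$ where $\theta>0$, and your device of choosing $\beta\in\,]\alpha,r_t(u_0)[$ so that $A^{1/2}\e^{\theta(\tau)A}=\bigl(A^{1/2}\e^{-(\beta-\theta(\tau))A}\bigr)\e^{\beta A}$ with $\beta-\theta(\tau)\geq\beta-\alpha>0$ is exactly what is required to get a uniform $\dot H^{1/2}$ bound from a mere $L^2$ bound on $\e^{\beta A}u$. The paper asserts that Proposition~\ref{thm:global-stability_0} applies without spelling this out.
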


\begin{proof}
Fix $t \in I$.  Without loss we can assume $r_t(u_0) > 0$ and $ I = ]0,t]$. Choose $0 < \alpha < r_t(u_0)$.  Then there exists $0 < a < t$ so that $\sup_{s\in ]a,t]} \|\e^{\alpha A}u(s)\| < \infty$.  Define $\theta : [0,t] \rightarrow [0,\infty[$:
\begin{equation}
\theta(s) =
\begin{cases} 0, &\text{ if $s\in [0,a]$};\\
             (\frac {s-a}{t-a}) \alpha, &\text{if $s\in [a,t]$.}
\end{cases}
\end{equation}
Note that $\theta(\tau + s) \leq \lambda\sqrt \tau + \theta(s)$ with $\lambda = \alpha (t-a)^{-1/2}$ so that Proposition \ref{thm:global-stability_0} applies. It follows that if $\|v_0 - u_0\|_{H^{1/2}}$ is small enough, $\sup_{\tau \in ]a,t]} \|\e^{\theta(\tau) A}v(\tau)\| < \infty$.  Thus by definition, for these $v_0, r_t(v_0) \geq \alpha$. This gives (\ref{eqn:lscnewrad}).
As for the last statement of the corollary, following through the definitions it is easy to see that $r_t(u_0) = \rad(u(t))$ under the stated conditions. The definition of $r_t(v_0)$ also implies $ \rad(v(t)) \geq r_t(v_0)$ for any $v \in \vS_{1/2,I}$ and thus (\ref{eqn:lscnewrad}) gives the stated result.
\end{proof}

We now continue with our discussion of global stability.

\begin{defn}
  \label{def:global-stability}For   $r\geq 1/2$ we denote by
\begin{equation}
  \label{eq:78}
  \vI_r=\{u_0\in PH^r| \,\exists u\in \vG_r: \,u(0)=u_0\},
\end{equation} and we endow $ \vI_r$ with the topology from the space $PH^r$.
\end{defn}

Our result on global stability is as follows:
\begin{thm}\label{thm:global-stability}  Suppose
  $u_0\in  \vI_{1/2}$ and that the corresponding strong global solution
  $u$ obeys
\begin{equation}
    \label{eq:57bbb}
    \liminf_{t\to \infty }\|A^{1/2}u(t)\|=0.
  \end{equation}
  Suppose in addition that $\lambda > 0$ is given so that
\begin{equation}
   \label{eqn:58new}
    A^{1/2}\e^{\lambda \sqrt \cdot
A}u(\cdot) \in C([0,\infty[, L^2).
\end{equation}
There exists $\delta_0>0$ such that:
\begin{enumerate}[i)]
  \item \label{item:s1}
If $\delta \leq \delta_0, v_0
\in PH^{1/2}$ and $\|A^{1/2}(v_0 - u_0)\| \leq \delta$ it follows that
$v_0 \in \vI_{1/2}$ and that for all $t > 0$ the corresponding strong
global solution $v$ satisfies
\begin{subequations}
  \begin{align}
  \label{eqn:radiusstability}
  \|A^{1/2}\e^{\lambda \sqrt tA}(v(t)-u(t))\| &\leq K_1\delta,\\
  \label{eqn:radiusstability2}
  t^{3/8}\|A^{5/4}\e^{\lambda \sqrt tA}(v(t)-u(t))\|&\leq  K_2\delta.
  \end{align}
%Here $K>0$ depends on $u_0$, $\lambda$, and $\delta_0$ but not on
%$\delta$.
%In particular  if $\mu < \lambda$ and $s > 0$
%\begin{equation}
%\label{eq:mustability}
%t^{s/2}\|A^{s + 1/2}\e^{\mu \sqrt tA}(v(t)-u(t))\|< CK\delta\text{
%  where }C=\sup_{x\geq 0}x^{s}\e^{(\mu -\lambda)x}.
%\end{equation}
\item \label{item:s2} If in addition $u_0 \in PH^r$ with $r > 1/2$, then $u_0\in
\vI_r$ and $u_0$ is an interior point of $\vI_r$.
\item \label{item:s3} If $\delta \leq  \delta_0, v_0
\in PH^{1/2}$ and $\|v_0 - u_0\|_{H^{1/2}} \leq \delta$ it follows
 in addition that
\begin{equation}
  \label{eqn:radiusstability32}
  \inp{t}^{-1/4}\|\e^{\lambda \sqrt tA}(v(t)-u(t))\|\leq K_3\delta.
\end{equation}
\end{subequations}
\end{enumerate}
In \eqref{eqn:radiusstability}--\eqref{eqn:radiusstability32}  the
 constants $K_1,K_2,K_3>0$ depend on $\lambda$, $u$, and $\delta_0$ but not on
$\delta$, and all bounds are uniform in $t>0$.
  \end{thm}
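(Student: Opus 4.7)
The plan is to bridge the bounded-interval stability of Proposition \ref{thm:global-stability_0} with the global small-data theory of Theorem \ref{thm:global anal} by selecting a time $T_* > 0$ at which the background solution $u$ has entered the small-data regime, so that on $]0, 2T_*]$ the stability comes from Proposition \ref{thm:global-stability_0} while on $[2T_*, \infty[$ both shifted solutions $\hat u(\tau) := u(\tau + T_*)$ and $\hat v(\tau) := v(\tau + T_*)$ are small-data solutions in the sense of Theorem \ref{thm:global anal} and admit a global perturbation argument.

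Using the hypothesis $\liminf_{t\to\infty}\|A^{1/2}u(t)\| = 0$, the first step is to fix $T_*$ so large that $\|A^{1/2}u(T_*)\|$ satisfies \eqref{eq:25bbc} with substantial margin---specifically, so that the critical value $\bar\lambda_u$ from \eqref{eq:51} applied with initial datum $u(T_*)$ obeys $\bar\lambda_u > 2\lambda$. Proposition \ref{thm:global-stability_0} applied on $I_0 := \,]0, 2T_*]$ with $\theta(t) = \lambda\sqrt{t}$ (which satisfies \eqref{eq:82} by the subadditivity of $\sqrt{\cdot}$, while \eqref{eqn:58new_0} holds on $\bar I_0$ by hypothesis \eqref{eqn:58new}) then produces $\delta_0 = \delta_0(T_*)$ and a strong solution $v \in \vS_{1/2, I_0}$ satisfying \eqref{eqn:radiusstability_0}--\eqref{eqn:radiusstability3_0} on $]0, 2T_*]$. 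Since \eqref{eqn:radiusstability_0} at $t = T_*$ gives $\|A^{1/2}v(T_*)\| \leq \|A^{1/2}u(T_*)\| + K_1\delta$, shrinking $\delta_0$ further ensures that $v(T_*)$ also meets \eqref{eq:25bbc} with $\bar\lambda_v > 2\lambda$. Theorem \ref{thm:global anal} then delivers a global small-data solution $\tilde v$ with $\tilde v(0) = v(T_*)$; by the uniqueness of strong solutions (Proposition \ref{prop:strong-solutions}), $\tilde v(t - T_*) = v(t)$ on $[T_*, 2T_*]$, so gluing extends $v$ to all of $[0, \infty[$, yielding $v \in \vG_{1/2}$ and $v_0 \in \vI_{1/2}$.

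The main difficulty is proving the weighted estimates \eqref{eqn:radiusstability}--\eqref{eqn:radiusstability32} uniformly in $t > 0$. On $]0, 2T_*]$ they are furnished by Proposition \ref{thm:global-stability_0}; for $t \geq 2T_*$, I would set $\tau = t - T_*$, $w(\tau) = \hat v(\tau) - \hat u(\tau)$, and write
\begin{equation*}
w = Y + \tilde B(w, \hat v) + \tilde B(\hat u, w) + \tilde B(w, w), \qquad Y(\tau) := \e^{-\tau A^2}w(0),
\end{equation*}
to be solved via contraction in the norm $|w| := \sup_{\tau > 0}\tau^{3/8}\|A^{5/4}\e^{2\lambda\sqrt{\tau}A}w(\tau)\|$. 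Here Lemma \ref{Lemma:spect_bnd} gives $|Y| \leq C\|A^{1/2}w(0)\| \leq CK_1\delta$, the bilinear bounds are the standard small-data estimates from Subsection \ref{Analyticity bounds in dot r and  r for small times and for large data}, and the smallness of $|\hat u|,|\hat v|$ required for the contraction follows from Theorem \ref{thm:global anal} applied to $\hat u$ and $\hat v$ (with $\lambda_0 = 2\lambda < \bar\lambda_u, \bar\lambda_v$), since those bounds are proportional to $\|A^{1/2}u(T_*)\|$ and $\|A^{1/2}v(T_*)\|$, both chosen small. A standard contraction argument then yields $|w| \leq K\delta$. Now for $t \geq 2T_*$ one has $\tau = t - T_* \geq t/2$, hence $\sqrt{t} \leq 2\sqrt{\tau}$, and the Fourier-pointwise inequality $\e^{\lambda\sqrt{t}|\xi|} \leq \e^{2\lambda\sqrt{\tau}|\xi|}$ combined with $|w| \leq K\delta$ gives $t^{3/8}\|A^{5/4}\e^{\lambda\sqrt{t}A}(v - u)(t)\| \leq 2^{3/8}K\delta$, producing the global bound \eqref{eqn:radiusstability2} after splicing with the bound on $]0, 2T_*]$. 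The bounds \eqref{eqn:radiusstability} and \eqref{eqn:radiusstability32} come from the bilinear analogues of \eqref{eqn:A1halfbd} and \eqref{eqn:A1halfbd2} applied to the fixed point $w$ (the stronger initial smallness $\|v_0 - u_0\|_{H^{1/2}} \leq \delta$ being needed for the latter).

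For assertion \ref{item:s2}, any $v_0$ in a small $PH^r$-neighborhood of $u_0$ ($r > 1/2$) has $\|v_0 - u_0\|_{H^{1/2}} \leq \|v_0 - u_0\|_{H^r}$, so \ref{item:s1} provides $v \in \vG_{1/2}$ with $v(0) = v_0$. Local $H^r$ theory (Propositions \ref{prop:kato small data2} or \ref{prop:kato small data2mod}) gives a strong $\vS_{r, I_1}$ solution with the same initial data on a small interval $I_1$; by Proposition \ref{prop:strong-solutions} it coincides with $v$ on $I_1$, and Corollary \ref{cor:sobol-analyt-bounds} shows $v(t) \in H^s$ for every $s$ and every $t > 0$, so iterating the local theory from any $t_1 > 0$ extends $v$ to $\vS_{r, \tilde I}$ on every bounded $\tilde I$, yielding $v \in \vG_r$.
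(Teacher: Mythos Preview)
Your overall architecture---apply Proposition~\ref{thm:global-stability_0} on a bounded interval $[0,2T_*]$, then run a single global contraction from a time where $u$ is small---is exactly the paper's strategy. There is, however, one algebraic slip: in your fixed-point equation you cannot have both $\tilde B(w,\hat v)$ and $\tilde B(w,w)$, since $\hat v=\hat u+w$ makes $B(w,\hat v)=B(w,\hat u)+B(w,w)$; the correct map (and the one in the paper) is $F(w)=Y+B(w,\hat u)+B(\hat u,w)+B(w,w)$, with only $\hat u$ appearing in the cross terms.

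The substantive difference is in the weight you carry on $[T_*,\infty[$. You work in the norm $\sup_{\tau>0}\tau^{3/8}\|A^{5/4}\e^{2\lambda\sqrt\tau A}w(\tau)\|$, doubling $\lambda$ so that the inequality $\sqrt t\le 2\sqrt\tau$ (valid for $t\ge 2T_*$) lets you recover the target weight $\e^{\lambda\sqrt t A}$; this forces you to select $T_*$ via Theorem~\ref{thm:global anal} so that $\bar\lambda>2\lambda$. The paper instead keeps the \emph{absolute} time in the exponent, i.e.\ uses the norm
\[
|w|_{\infty,T_0}=\sup_{s>0}s^{3/8}\|A^{5/4}\e^{\lambda\sqrt{s+T_0}A}w(s)\|,
\]
which directly gives the desired weight without any doubling. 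The required smallness of $|u_{T_0}|_{\infty,T_0}$ comes from Corollary~\ref{cor:o(1)} rather than from Theorem~\ref{thm:global anal}, and the bilinear bound \eqref{eqn:A1halfbd0} applies because $\theta(s)=\lambda\sqrt{s+T_0}$ still satisfies \eqref{eq:82}. Your route is correct, just less direct: the paper's choice of weight avoids the factor-of-two overhead and the separate appeal to Theorem~\ref{thm:global anal} for $\hat v$.
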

\begin{proof} Choose $T_0 > 1$ and large enough so that with $w(s) = u_{T_0}(s)=u(s+T_0)$
\begin{equation}\label{eqn:T0toinfinity}
|w|_{\infty,T_0}: = \sup_{s> 0}s^{3/8}\|A^{5/4}\e^{\lambda\sqrt{s+T_0}A}w(s)\| \leq (3\gamma_{\lambda})^{-1}.
\end{equation}
This is possible by Corollary \ref{cor:o(1)}.    On the one hand we apply
Proposition \ref{thm:global-stability_0} with $T=2T_0$ and
$\theta(t)=\lambda\sqrt t$ to construct a solution $v$ in the interval
$[0,T]$. We now construct $v$ in the interval $[T_0,\infty[$ using the
bound (\ref{eqn:T0toinfinity}).  This is done in a similar way as in
the proof of Proposition \ref{thm:global-stability_0} using the map $F$ defined in (\ref{eqn:Fdef}) with the replacement $t_0 \rightarrow T_0$ and using the Banach space with norm $|\cdot|_{\infty,T_0}$ defined in (\ref{eqn:T0toinfinity}).  The contraction mapping argument then gives a fixed point $w$ with
$|w|_{\infty,T_0} \leq K \delta$.  Finally we extend $v$ to
$[T_0, \infty[$ by setting $v(t) = u(t) + w(t-T_0)$.   The
estimates (\ref{eqn:radiusstability})--(\ref{eqn:radiusstability32})
follow easily. The statement \ref{item:s2} follows from \ref{item:s1}
in combination with Propositions \ref{prop:kato small data2} and
\ref{prop:kato small data2mod}  and Corollary \ref{cor:sobol-analyt-bounds}.

\begin{comment}
To see that if $u_0 \in \vI_r$ then $u_0$ is an interior point it
suffices to show that if in addition $v_0 \in PH^r$ then the solution
$v$ constructed above is in $\vG_r$.  From (\ref{eq:mustability}) and
the contraction mapping argument, it follows that $v-u \in C([t_0,T]),
PH^s)$ for any $s$ if $0 < t_0 < T < \infty$.  By Propositions
\ref{prop:kato small data2} and \ref{prop:kato small data2mod}, along
with the second remark in Remarks \ref{remarks:uniq}, $v-u \in
C([0,t_0], PH^r)$ for small $t_0$ and it satisfies (2) of Definition
\ref{defns:strong}.
\end{comment}
\end{proof}

\begin{lemma}\label{lemma:energyre} Let  $r\geq 1/2$. Suppose a given $u\in \vG _r$
  obeys (\ref{eq:54bb}) with $I=]0,\infty[$. Then  the condition
  (\ref{eq:57bbb}) holds.
  \begin{proof}
   Let
  $u_0=u(0)$, and let $n\in \N$ be given. Pick $\delta>0$ such that
    $\delta\|u_0\|< n^{-2}$, and pick $\tilde n\geq n$ such that
    $\delta^22\tilde n> \|u_0\|^2$. Then, due to  (\ref{eq:57bb}) with $t=2\tilde n$, for
    some $t_n\in ]\tilde n,2\tilde n]$ we have $\|Au(t_n)\|<\delta$. For this time $t_n$
    \begin{equation*}
      \|A^{1/2}u(t_n)\|^2\leq \|Au(t_n)\|\,\|u(t_n)\|\leq \delta \|u_0\| < n^{-2},
    \end{equation*} so $\|A^{1/2}u(t_n)\|<1/n$. Whence $\lim_{n\to
      \infty}\|A^{1/2}u(t_n)\|=0$ for some sequence $t_n\to \infty$.
  \end{proof}
 \end{lemma}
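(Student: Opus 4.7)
The plan is to combine the energy identity established in Corollary \ref{cor:energybound2} with the interpolation inequality $\|A^{1/2}f\|^2\leq \|Af\|\,\|f\|$ and the a~priori bound $\|u(t)\|\leq \|u_0\|$.

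First, since $u\in\vG_r$, the restriction $1_{]0,T]}u$ lies in $\vS_{r,]0,T]}$ for every $T>0$, so Corollary \ref{cor:energybound2} applies on every bounded interval and yields
\begin{equation*}
  \|u(T)\|^2+2\int_0^T\|Au(s)\|^2\,\d s=\|u_0\|^2\mforall T>0.
\end{equation*}
Dropping the (non-negative) term $\|u(T)\|^2$ and letting $T\to\infty$, I obtain the global integrability
\begin{equation*}
\int_0^\infty\|Au(s)\|^2\,\d s\leq \tfrac12\|u_0\|^2<\infty,
\end{equation*}
together with the uniform bound $\|u(t)\|\leq \|u_0\|$ for all $t>0$.

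The second step is a standard pigeonhole argument. Given $\delta>0$, if $\|Au(s)\|\geq \delta$ held throughout some interval $[\tilde n,2\tilde n]$ then $\int_{\tilde n}^{2\tilde n}\|Au(s)\|^2\,\d s\geq \delta^2\tilde n$, which exceeds $\|u_0\|^2/2$ once $\tilde n$ is chosen large enough. Hence for any such $\delta$ and any $n$, there exists $t_n\in \,]\tilde n,2\tilde n]$ (with $\tilde n\geq n$) for which $\|Au(t_n)\|<\delta$.

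Finally, for each $n\in\N$ I select $\delta=\delta_n$ so small that $\delta_n\|u_0\|<n^{-2}$ and then an appropriate $\tilde n_n\geq n$ as above. At the resulting time $t_n\to\infty$, the interpolation inequality gives
\begin{equation*}
\|A^{1/2}u(t_n)\|^2\leq \|Au(t_n)\|\,\|u(t_n)\|<\delta_n\|u_0\|<n^{-2},
\end{equation*}
so $\|A^{1/2}u(t_n)\|\to 0$, proving $\liminf_{t\to\infty}\|A^{1/2}u(t)\|=0$. There is no real obstacle: the only point requiring care is that Corollary \ref{cor:energybound2} is stated for finite $I$, but applying it on each $]0,T]$ and letting $T\to\infty$ produces the global $L^2$-in-time bound on $\|Au(\cdot)\|$ that drives the argument.
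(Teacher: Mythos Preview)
Your proof is correct and follows essentially the same approach as the paper's: both use the energy identity (Corollary \ref{cor:energybound2}) to obtain $\int_0^\infty\|Au(s)\|^2\,\d s<\infty$ and $\|u(t)\|\leq\|u_0\|$, then a pigeonhole argument on an interval $]\tilde n,2\tilde n]$ to find $t_n$ with $\|Au(t_n)\|<\delta$, and finally the interpolation $\|A^{1/2}f\|^2\leq\|Af\|\,\|f\|$ to conclude. Your version merely spells out the passage from the finite-interval identity to the global $L^2$ bound a bit more explicitly.
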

Using Theorem  \ref{thm:global-stability} and  Lemma  \ref{lemma:energyre} we obtain for the system
(\ref{eq:equamotiona}):
\begin{cor}
  \label{cor:global-stability} Let  $M=I$ and $P$ be
  given as  the Leray projection. Then for all   $r\geq 1/2$ the set  $\vI_r$ is open in   $PH^r$.
\end{cor}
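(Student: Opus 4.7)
The plan is to reduce the corollary directly to Theorem \ref{thm:global-stability} by verifying its two standing hypotheses for every $u_0\in \vI_r$ in the Navier--Stokes setting, where $M=I$ and $P$ is the Leray projection. Fix $r\geq 1/2$ and $u_0\in \vI_r$, and let $u\in \vG_r$ satisfy $u(0)=u_0$; since $H^r\subseteq H^{1/2}$, the inclusion $\vG_r\subseteq \vG_{1/2}$ gives $u_0\in \vI_{1/2}$ as well.

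To verify \eqref{eq:57bbb}, observe that with $M=I$ and $P$ the Leray projection, the membership $u(t)\in\Ran P$ from Definition \ref{defns:strong} forces $\nabla\cdot(Mu(t))=\nabla\cdot u(t)=0$, which is exactly \eqref{eq:54bb} on $I=\,]0,\infty[$; Lemma \ref{lemma:energyre} then yields $\liminf_{t\to\infty}\|A^{1/2}u(t)\|=0$. To verify \eqref{eqn:58new} for some $\lambda>0$, I would combine the short-time bound of Corollary \ref{cor:sobol-analyt-bounds}, which provides $\delta>0$ and $C>0$ with $\|\e^{\min(\sqrt t,\delta)A}u(t)\|_{H^{1/2}}\leq C$ on any bounded interval, with a large-time bound obtained by invoking Lemma \ref{lemma:energyre} to select a time $T_0$ at which $\|A^{1/2}u(T_0)\|$ is small enough for Proposition \ref{prop:kato big data anal} (or Theorem \ref{thm:global anal}\ref{item:12}) to apply to the restarted solution $u(\cdot+T_0)$, producing a uniform bound on $A^{1/2}\e^{\lambda\sqrt{t-T_0}A}u(t)$ for $t\geq T_0$ and all sufficiently small $\lambda>0$. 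Choosing a common $\lambda>0$ small enough that both bounds are in force and using the inequality $\sqrt t\leq \sqrt{T_0}+\sqrt{t-T_0}$ to absorb the shift, one obtains the desired continuous map $A^{1/2}\e^{\lambda\sqrt\cdot A}u(\cdot)\in C([0,\infty[,L^2)$.

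With the hypotheses of Theorem \ref{thm:global-stability} verified, part \ref{item:s1} supplies $\delta_0>0$ such that every $v_0\in PH^{1/2}$ with $\|A^{1/2}(v_0-u_0)\|\leq\delta_0$ lies in $\vI_{1/2}$; since $\|A^{1/2}(v_0-u_0)\|\leq\|v_0-u_0\|_{H^{1/2}}$, this exhibits an $H^{1/2}$-neighborhood of $u_0$ inside $\vI_{1/2}$, so $\vI_{1/2}$ is open at $u_0$. For $r>1/2$, part \ref{item:s2} of the same theorem states directly that $u_0$ is an interior point of $\vI_r$ in the $H^r$-topology. Since $u_0\in \vI_r$ was arbitrary, $\vI_r$ is open in $PH^r$.

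The only genuine obstacle is the patching argument for \eqref{eqn:58new}, since the short-time and large-time analyticity results come from different parts of the paper and use different parameters; once they are matched by choosing $\lambda$ sufficiently small, the remainder of the argument is a mechanical application of the previously established machinery.
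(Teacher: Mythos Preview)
Your proposal is correct and follows the paper's own approach, which consists simply of citing Theorem \ref{thm:global-stability} and Lemma \ref{lemma:energyre}; you have in fact spelled out the verification of hypothesis \eqref{eqn:58new} that the paper leaves implicit (relegating it to the remark surrounding \eqref{eq:79kkk} in the introduction, where it is attributed to Corollary \ref{cor:sobol-analyt-bounds}, Lemma \ref{lemma:energyre} and Proposition \ref{prop:kato big data anal}).

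One small correction to your patching step: the inequality $\sqrt t\leq \sqrt{T_0}+\sqrt{t-T_0}$ does not by itself ``absorb the shift'', since $\e^{\lambda\sqrt{T_0}A}$ is unbounded on $L^2$. Instead, fix any $\lambda\in\,]0,\lambda_1[$, where $\lambda_1$ is the parameter delivered by Proposition \ref{prop:kato big data anal} applied to the restarted data $u(T_0)$, and note that $\lambda\sqrt t\leq \lambda_1\sqrt{t-T_0}$ holds for all $t\geq T_1:=\lambda_1^2T_0/(\lambda_1^2-\lambda^2)$; this gives the bound on $[T_1,\infty[$ directly, and the remaining bounded interval $[0,T_1]$ is covered by Corollary \ref{cor:sobol-analyt-bounds} after possibly shrinking $\lambda$ further. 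With this adjustment your argument is complete and matches the paper's.
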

\subsection{$L^2$ stability}\label{L^2 decay stability}

Our final result on the stability of the $L^2$ norm is motivated by
various previous works on  $L^2$ decay properties, in particular
\cite{Sc1,Sc2,Wi}. In the following main result note the asymmetry between the solutions $u$ and $v$ reflected in the dependence of the constant $K$ in (\ref{eqn:L2stability}) on $u$.

\begin{prop}\label{prop:L2stability} Suppose $u, v \in \vG_{1/2}$ where $M = I$ and $P$ is the Leray projection.  Let $u_0(t) = \e^{-tA^2}u(0)$ and $v_0(t) = \e^{-tA^2}v(0)$.  We suppose
\begin{equation} \label{eqn:u0v0bound}
\|u_0(t)\| + \|v_0(t)\| \leq L \langle t \rangle^{-\sigma/2}
\end{equation}
where $\sigma \geq 0$.
Let $z(t) = v(t) - u(t) -w_0(t)$ where $w_0(t) = v_0(t) - u_0(t)$.

There exists $\delta_0 >0$ such that if $0 < \delta \leq  \delta_0$ and
$\|v(0) - u(0)\|_{H^{1/2}} \leq \delta$ we have for any $\epsilon \in [0,1]$ such that $(1-\epsilon/2)\sigma \neq 1$,
\begin{equation}\label{eqn:L2stability}
\|z(t)\| \leq K\delta^{\epsilon} \langle t \rangle^{-\rm min((1-\epsilon/2)\sigma+1/4,\, 5/4)}.
\end{equation}
Here $K$ depends on $L$, $\delta_0$, $\epsilon$, and $u$.
\end{prop}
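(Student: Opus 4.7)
The first step is to write down the integral equation satisfied by $z$. Setting $w = v - u$, the mild equations for $u$ and $v$ combined with $\nabla\cdot u = \nabla\cdot v = 0$ give
\begin{equation*}
z(t) = -\int_0^t e^{-(t-s)A^2} P\,\nabla\cdot\bigl(w(s)\otimes v(s) + u(s)\otimes w(s)\bigr)\,ds.
\end{equation*}
I would then apply Theorem \ref{thm:global-stability} to $w$; its hypothesis \eqref{eqn:58new} is verified via Lemma \ref{lemma:energyre} and a time-translate putting the solution into the small-data regime where Corollary \ref{cor:o(1)} applies. This yields, for $\delta\leq\delta_0$, the uniform bounds
\begin{equation*}
\|A^{1/2}w(t)\|\leq K\delta,\qquad t^{3/8}\|A^{5/4}w(t)\|\leq K\delta,\qquad \langle t\rangle^{-1/4}\|w(t)\|\leq K\delta.
\end{equation*}

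Next, I would derive $L^2$-decay for $u$ and $v$ themselves from the decay of their linear evolutions. Applying Corollary \ref{cor:decay1} \ref{item:11} on a tail $[T_0,\infty[$ where the solutions are small in $\dot H^{1/2}$, together with a Schonbek-type propagation of the hypothesis \eqref{eqn:u0v0bound}, gives
\begin{equation*}
\|u(t)\|+\|v(t)\|\leq C\langle t\rangle^{-\sigma/2},\qquad \|A^{1/2}u(t)\|+\|A^{1/2}v(t)\|\leq C\langle t\rangle^{-\sigma/2-1/4},
\end{equation*}
and hence $\|w(s)\|\leq C\min(\delta\langle s\rangle^{1/4},\,\langle s\rangle^{-\sigma/2})$. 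With these in hand, I would bound the Duhamel integral for $z$ using the Fourier-based heat-kernel estimates $\|e^{-\tau A^2}\nabla\cdot(gh)\|\leq C\tau^{-5/4}\|g\|\|h\|$ and $\|e^{-\tau A^2}\nabla\cdot(gh)\|\leq C\tau^{-3/4}\|g\|\|A^{1/2}h\|$, splitting as $\int_0^{t/2}+\int_{t/2}^t$: the low-time piece is handled by the $\tau^{-5/4}$ bound together with $(t-s)^{-5/4}\leq Ct^{-5/4}$, and the near-diagonal piece by the $\tau^{-3/4}$ bound, which is integrable on $[t/2,t]$. Tracking the two choices of factors produces two a priori estimates: (a) using $\|A^{1/2}w\|\leq K\delta$ throughout, $\|z(t)\|\leq C_1\delta\,\langle t\rangle^{-\min(\sigma/2+1/4,\,5/4)}$; (b) using $\|w\|\leq C\langle s\rangle^{-\sigma/2}$ throughout, $\|z(t)\|\leq C_2\langle t\rangle^{-\min(\sigma+1/4,\,5/4)}$.

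The desired bound then follows from the trivial interpolation $\|z\|=\|z\|^\epsilon\|z\|^{1-\epsilon}\leq C_1^\epsilon C_2^{1-\epsilon}\delta^\epsilon\langle t\rangle^{-\alpha}$ with $\alpha=\epsilon\min(\sigma/2+1/4,\,5/4)+(1-\epsilon)\min(\sigma+1/4,\,5/4)$, combined with the elementary inequality $\alpha\geq\min((1-\epsilon/2)\sigma+1/4,\,5/4)$, which is checked by a short case analysis on whether the caps at $5/4$ are active. The exclusion $(1-\epsilon/2)\sigma\neq 1$ accommodates the borderline where the Duhamel integral acquires a logarithmic correction. I expect the hardest step to be (b): obtaining the sharp $\langle t\rangle^{-\min(\sigma+1/4,\,5/4)}$ decay on $z$ without any appeal to smallness in $\delta$. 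This requires playing the $\langle s\rangle^{-\sigma/2}$ decay of $\|u\|,\|v\|$ off against the stronger $\langle s\rangle^{-\sigma/2-1/4}$ decay of $\|A^{1/2}u\|,\|A^{1/2}v\|$ supplied by Corollary \ref{cor:decay1} \ref{item:11} precisely in the window $s\in[t/2,t]$, where the $(t-s)^{-3/4}$ kernel is borderline singular.
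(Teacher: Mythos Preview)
Your approach has a genuine gap in the heat-kernel estimate that drives the near-diagonal piece, and as a consequence bound (a) falls $1/4$ short of what you claim.

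The estimate $\|e^{-\tau A^2}P\nabla\cdot(gh)\|\leq C\tau^{-3/4}\|g\|\,\|A^{1/2}h\|$ is false. By H\"older and Sobolev $gh\in L^{6/5}$ with $\|gh\|_{6/5}\leq C\|g\|\,\|A^{1/2}h\|$, hence $\widehat{gh}\in L^6$, and $\||\xi|e^{-\tau|\xi|^2}\|_{L^3}=C\tau^{-1}$; the correct bound is therefore $C\tau^{-1}\|g\|\,\|A^{1/2}h\|$, which is not integrable on $[t/2,t]$. The $\tau^{-3/4}$ kernel only pairs with $\|A^{1/2}g\|\,\|A^{1/2}h\|$. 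Using that pairing together with $\|A^{1/2}w\|\leq K\delta$ and $\|A^{1/2}(u+v)\|\leq C\langle s\rangle^{-\sigma/2-1/4}$ on $[t/2,t]$ you get at best
\[
\int_{t/2}^t (t-s)^{-3/4}\,\delta\,\langle s\rangle^{-\sigma/2-1/4}\,ds \;\leq\; C\delta\,\langle t\rangle^{-\sigma/2},
\]
and the low-time piece with $\|w(s)\|\leq K\delta\langle s\rangle^{1/4}$ gives the same rate. So your bound (a) is really $\|z(t)\|\leq C_1\delta\,\langle t\rangle^{-\min(\sigma/2,\,5/4)}$, without the extra $+1/4$. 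Interpolating this with (b) yields, for $\sigma\leq 1$, the exponent $(1-\epsilon/2)\sigma+(1-\epsilon)/4$, which is strictly smaller than the target $(1-\epsilon/2)\sigma+1/4$ whenever $\epsilon>0$.

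The paper does not proceed via a single Duhamel estimate. It uses Schonbek's Fourier-splitting method: differentiate $\|z(t)\|^2$, control the cross terms by the $L^\infty$ bounds $\|\nabla u(t)\|_\infty\leq C t^{-5/4-\sigma/2}$ and $\|\nabla w_0(t)\|_\infty\leq C\delta^\epsilon t^{-5/4-(1-\epsilon)\sigma/2}$ (these come from the analyticity of $u$ and the heat kernel acting on $w_0$, and are stated as a separate lemma), and control the low-frequency part of $\|Az\|^2$ by the pointwise Fourier bound
\[
|\hat z(t,\xi)|\;\leq\;C|\xi|\int_0^t\bigl(\|u\|\,\|w\|+\|w\|^2\bigr)\,ds
\]
on the set $\{|\xi|^2 t\leq a\}$. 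The factor $|\xi|$ here, of size $\sim t^{-1/2}$ on that set, is precisely what produces the extra $1/4$ after integrating over low frequencies; it is invisible to a direct $L^2$ Duhamel bound. One then iterates the resulting differential inequality a finite number of times (at most three) to reach the stated exponent. Your plan would need to be rerouted through this Fourier-splitting argument, or some equivalent device that exploits the low-frequency smallness of $\hat z$, to close the $1/4$ gap.
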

\begin{remarks}\label{rmks:globa}
\begin{enumerate}[1)]
  \item \label{item:19p}
The condition $\|u_0(t)\| = O(t^{-\sigma/2})$ of \eqref{eqn:u0v0bound}
is equivalent
%(by an integration by parts using $d/dr \|1_{[0,r]}(A)f\|^2 = 4\pi
%r^2\int_{S^2}|f(r\omega)|^2d\omega$)%
 to the condition
$\|1_{[0,r]}(A)u_0(0)\| = O(r^{\sigma})$.  This is one route to familiar sufficient conditions in terms of the
$L^p$ norms of $u_0(0)$ and of $xu_0(0)$.  For these conditions and
additional inequalities see \cite{BJ}.
\item \label{item:20p} The condition $\|u_0(t)\| = O(t^{-\sigma/2})$
  of \eqref{eqn:u0v0bound} implies the following decay of the solution
  $u\in \vG_{1/2}$,
\begin{equation} \label{eqn:uL2bound}
\|u(t)\| \leq C\langle t \rangle^{-\rm min(\sigma/2,\ 5/4)}.
\end{equation} The inequality (\ref{eqn:uL2bound}) is proved in
\cite{Wi}. It follows from an argument similar to but simpler than an
argument used in the proof of Proposition \ref{prop:L2stability}  to follow. Thus we omit the proof.
\item \label{item:21p} The positive parameter $\delta_0$ of Proposition
  \ref{prop:L2stability} can be determined as follows: Choose
  $u_0=u(0)$ in Theorem
  \ref{thm:global-stability} and $\lambda >0$ in agreement with
  \eqref{eqn:58new}. Then according to Theorem
  \ref{thm:global-stability} there exists  $\delta_0>0$ so that
  \eqref{eqn:radiusstability32} holds. This  $\delta_0$ applies
  in  Proposition
  \ref{prop:L2stability} (in fact we shall only need
  \eqref{eqn:radiusstability32}  with $\lambda=0$).
\item \label{item:22p}
 The condition $(1-\epsilon/2)\sigma \neq 1$ is introduced for
 simplicity to avoid logarithms in (\ref{eqn:L2stability}). If in
 addition to the hypotheses of  Proposition
  \ref{prop:L2stability} (excluding the requirement on $\epsilon$) we
  require $\|w_0(t)\| \leq \delta \langle t
  \rangle^{-\sigma/2}$, then by repeating the  proof of Proposition
  \ref{prop:L2stability} the estimate (\ref{eqn:L2stability}) can be improved to
\begin{align}
\|z(t)\| \leq
\begin{cases}
     K\delta\langle t \rangle^{-\rm min(\sigma+1/4,\, 5/4)},    &\text{if $\sigma \neq 1$}\\
     K\delta\langle t \rangle^{-5/4}\ln(t+2),                    &\text{if $\sigma = 1$}
\end{cases}\; .
\end{align}
\end{enumerate}
\end{remarks}
We will need the following lemma:
\begin{lemma}\label{lemma:gradbounds}
Assume the hypotheses of Proposition \ref{prop:L2stability} and in
addition the bound \ $\|u(t)\| \leq L \langle t \rangle^{-\sigma/2}$.
Then for any $\epsilon \in [0,1]$, and $0 < \delta \leq 1$,
\begin{subequations}
  \begin{align}
  \label{eqn:gradest}
\|\nabla w_0(t)\|_{\infty} &\leq C\delta^{\epsilon} t^{-1}\langle t \rangle^{-(1/4 + (1-\epsilon) \sigma/2)},\\
\|\nabla u(t)\|_{\infty} &\leq C t^{-1}\langle t \rangle^{-(1/4 + \sigma/2)},\label{eqn:gradestbb}\\
\label{eqn:L2interp}
\|w_0(t)\| &\leq C\delta^{\epsilon}\langle t \rangle^{-(1-\epsilon)
  \sigma/2}.
\end{align}
\end{subequations}
\end{lemma}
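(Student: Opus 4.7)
Estimate \eqref{eqn:L2interp} is a simple interpolation. Contractivity of $\e^{-tA^2}$ on $L^2$ gives $\|w_0(t)\| = \|\e^{-tA^2}(v(0)-u(0))\| \leq \|v(0)-u(0)\|_{H^{1/2}} \leq \delta$, while the hypothesis \eqref{eqn:u0v0bound} together with $w_0(t) = v_0(t) - u_0(t)$ gives $\|w_0(t)\| \leq L\langle t\rangle^{-\sigma/2}$. Writing $\|w_0(t)\| = \|w_0(t)\|^{\epsilon}\|w_0(t)\|^{1-\epsilon}$ and applying each of the two bounds to one factor produces \eqref{eqn:L2interp}.

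For \eqref{eqn:gradest} I use the semigroup identity $w_0(t) = \e^{-(t/2)A^2}w_0(t/2)$ and Young's inequality, together with the scaling identity $\|\nabla K_s\|_{L^p} = C_p\, s^{-2 + 3/(2p)}$ for the gradient heat kernel. Two choices are useful. With $p = 3/2$: Sobolev $H^{1/2}(\R^3) \hookrightarrow L^3$ and contractivity of the semigroup on $H^{1/2}$ give $\|\nabla w_0(t)\|_\infty \leq C t^{-1}\|w_0(t/2)\|_{L^3} \leq C t^{-1}\|w_0(0)\|_{H^{1/2}} \leq C\delta\, t^{-1}$. With $p = 2$: combining with \eqref{eqn:L2interp} applied at $t/2$ gives $\|\nabla w_0(t)\|_\infty \leq C t^{-5/4}\|w_0(t/2)\| \leq C\delta^{\epsilon} L^{1-\epsilon} t^{-5/4}\langle t\rangle^{-(1-\epsilon)\sigma/2}$. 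The first bound handles $t \leq 1$ (using $\delta \leq \delta_0 \leq 1$ so $\delta \leq \delta^{\epsilon}$ and that $\langle t\rangle$ is bounded on $[0,1]$); the second handles $t \geq 1$ via $t^{-5/4} \leq C t^{-1}\langle t\rangle^{-1/4}$.

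For \eqref{eqn:gradestbb} I use the integral equation $u(t) = \e^{-tA^2}u(0) + N(t)$ with $N(t) = -\int_0^t \e^{-(t-s)A^2}P\nabla\cdot(u\otimes u)(s)\,ds$, where I rewrite $(u\cdot\nabla)u = \nabla\cdot(u\otimes u)$ using $\nabla\cdot u = 0$. The linear term is controlled exactly as in \eqref{eqn:gradest}, with $u(0)$ in place of $v(0)-u(0)$ and with the $L^2$-decay of $u_0(t)$ from \eqref{eqn:u0v0bound} in place of \eqref{eqn:L2interp}. For the nonlinear term I would combine short-time regularity with large-time decay: Corollary \ref{cor:sobol-analyt-bounds} supplies $\|A^r u(t)\| \leq C(u,r)\,t^{-(r-1/2)/2}$ for every $r\geq 1/2$ in the short-time regime, while for large times Lemma \ref{lemma:energyre} allows a translation in time to the smallness regime of Theorem \ref{thm:second}, which together with the hypothesis $\|u(t)\|\leq L\langle t\rangle^{-\sigma/2}$ propagates to $\|A^r u(t)\|\leq C(u,L,r)\,\langle t\rangle^{-r/2-\sigma/2}$. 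Splitting the nonlinear integral at $s=t/2$ and using the short-time bound on $[0,t/2]$ (paired with the smoothing operator $\nabla\e^{-(t-s)A^2}P\nabla\cdot$ which benefits from $t-s\geq t/2$) and the large-time Sobolev bound on $[t/2,t]$ produces the claimed pointwise estimate.

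The main obstacle is matching the exponents precisely in \eqref{eqn:gradestbb}: the target $t^{-1}\langle t\rangle^{-1/4-\sigma/2}$ is borderline for the Sobolev embedding, since $H^{5/2}\not\hookrightarrow W^{1,\infty}$ on $\R^3$, and a naive Morrey argument loses an $\eta$. I would avoid this loss by invoking the analyticity estimate $\|\e^{\min(\sqrt t,\delta_u)A}u(t)\|_{H^{1/2}}\leq C_u$ from Corollary \ref{cor:sobol-analyt-bounds}: a Cauchy-formula argument on the analytic extension of $u$ to a complex strip of width $\sim\min(\sqrt t,\delta_u)$ produces $\|\nabla u(t)\|_\infty$ with exactly the right power of $t$, bypassing the borderline Sobolev defect.
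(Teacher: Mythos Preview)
Your proofs of \eqref{eqn:L2interp} and \eqref{eqn:gradest} are correct and match the paper up to cosmetic details (the paper obtains the small-$t$ part of \eqref{eqn:gradest} via the kernel of $\nabla A^{-1/2}\e^{-tA^2}$ rather than through $H^{1/2}\hookrightarrow L^3$, but the outcome is the same).

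For \eqref{eqn:gradestbb} there is a gap at large $t$. Your Cauchy-formula step invokes Corollary~\ref{cor:sobol-analyt-bounds}, whose bound $\|\e^{\min(\sqrt t,\delta_u)A}u(t)\|_{H^{1/2}}\le C_u$ has a \emph{fixed} strip width $\delta_u$ once $t$ is large; a Cauchy estimate on a strip of fixed width yields only $\|\nabla u(t)\|_\infty=O(1)$, not the required $t^{-5/4-\sigma/2}$. (The Duhamel route you sketch for the nonlinear term is also hard to close for small $\sigma$: near $s=t$ the operator $\nabla\e^{-(t-s)A^2}P\nabla\cdot$ has $L^2\!\to\! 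L^\infty$ norm $\sim(t-s)^{-7/4}$, and shifting regularity onto $u\otimes u$ compensates only when $\sigma$ is large.)

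The paper avoids the integral equation entirely and instead uses the \emph{growing} analyticity radius. For large $t$, Corollary~\ref{cor:decay1}~\ref{item:11} (applicable after the time translation via Lemma~\ref{lemma:energyre} that you correctly note) gives $\|\e^{\lambda\sqrt tA}u(t)\|\le Ct^{-\sigma/2}$; writing $\nabla u(t)=K_2(t)\ast\e^{\lambda\sqrt tA}u(t)$ where $K_2$ is the inverse Fourier transform of $\xi\,\e^{-\lambda\sqrt t|\xi|}$ and computing $\|K_2(t)\|_{L^2}=Ct^{-5/4}$ gives the large-$t$ bound by Young's inequality. For small and intermediate $t$ one writes $\nabla u(t)=K_3(t)\ast A^{1/2}\e^{\lambda\sqrt tA}u(t)$ with $K_3$ the inverse transform of $\xi|\xi|^{-1/2}\e^{-\lambda\sqrt t|\xi|}$, so that $\|K_3(t)\|_{L^2}=Ct^{-1}$ and the second factor is bounded by \eqref{eqn:58new}. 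The key point is that these analyticity bounds already encode the nonlinear smoothing, so no Duhamel decomposition is needed; and keeping the $L^2$ structure (Young) rather than passing through $L^\infty$ on the strip (Cauchy) is what makes the exponents sharp.
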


\begin{proof}
In the following the definition of $C$ may change from line to line. We have
\begin{equation}
\|\nabla w_0(t)\|_{\infty} = \|\nabla \e^{-tA^2}w_0(0)\|_{\infty} = \|K_0(t)*w_0(0)\|_{\infty} \leq \delta \|K_0(t)\|
\end{equation}
where $K_0(t)$ is a (constant times) the inverse Fourier transform of $\xi \e^{-t|\xi|^2}$.  We easily calculate  $\|K_0(t)\| = Ct^{-5/4}$ and thus $\|\nabla w_0(t)\|_{\infty} \leq C\delta t^{-5/4}$. We also have
$$\|\nabla w_0(t)\|_{\infty} = \|\nabla \e^{-tA^2/2}w_0(t/2)\|_{\infty} = \|K_0(t/2)*w_0(t/2)\|_{\infty} \leq 2L C t^{-5/4}\cdot t^{-\sigma/2}.$$
Thus interpolating these two results we find for large time
\begin{equation}
\|\nabla w_0(t)\|_{\infty} \leq C\delta^{\epsilon}t^{-5/4 - (1-\epsilon)\sigma/2}.
\end{equation}
For small and intermediate times
\begin{equation} \label{eqn:gradw0smallt}
\|\nabla w_0(t)\|_{\infty} = \|\nabla A^{-1/2}\e^{-tA^2} A^{1/2}w_0(0)\|_{\infty} = \|K_1(t)*A^{1/2}w_0(0)\|_{\infty} \leq \delta \|K_1(t)\|
\end{equation}
where $K_1(t)$ is a (constant times) the inverse Fourier transform of
$\xi|\xi|^{-1/2}\e^{-t|\xi|^2}$.  We easily calculate $\|K_1(t)\| =
Ct^{-1}$ and thus we have proved \eqref{eqn:gradest}.

The proof of \eqref{eqn:gradestbb} goes along the same lines. We
obtain (taking here $\lambda>0$ sufficiently small)
\begin{equation}
\|\nabla u(t)\|_{\infty} = \|\nabla \e^{-\lambda \sqrt tA}\e^{\lambda \sqrt tA}u(t)\|_{\infty} = \|K_2(t)*\e^{\lambda \sqrt tA}u(t)\|_{\infty} \leq \|K_2(t)\|\cdot G_0(t)^{1/2}
\end{equation}
where here  $K_2(t)$ is a (constant times) the inverse Fourier transform of $\xi \e^{-\lambda \sqrt t|\xi|}$ and $G_0(t) = \|\e^{\lambda \sqrt t A} u(t)\|^2$.  We calculate $\|K_2(t)\| = Ct^{-5/4}$ and using Corollary \ref{cor:decay1} \ref{item:11} we deduce $G_0(t) \leq C t^{-\sigma}$.  Thus we have the large time estimate
\begin{equation}
\|\nabla u(t)\|_{\infty} \leq Ct^{-5/4 - \sigma/2}.
\end{equation}
To obtain the result for small and intermediate times we use (\ref{eqn:58new}) which gives
\begin{align}
\|\nabla u(t)\|_{\infty} & = \|\nabla A^{-1/2}\e^{-\lambda \sqrt tA} A^{1/2}\e^{\lambda \sqrt tA}u(t)\|_{\infty} = \|K_3(t)*A^{1/2}\e^{\lambda \sqrt tA}u(t)\|_{\infty} \notag \\
 & \leq C\|K_3(t)\|.
\end{align}
We calculate $\|K_3(t)\| = Ct^{-1}$, completing the proof of
(\ref{eqn:gradestbb}).

The proof of (\ref{eqn:L2interp}) proceeds by interpolating the two bounds $\|w_0(t)\| \leq \delta$ and $\|w_0(t)\| \leq 2L\langle t \rangle^{-\sigma/2}$.
\end{proof}
\begin{remark*}
We will not use the bounds in (\ref{eqn:gradest}) and
(\ref{eqn:gradestbb}) for small $t$.  We include these estimates for completeness.
\end{remark*}

\begin{proof}[Proof of Proposition \ref{prop:L2stability}]
\begin{comment}}
We note
\begin{equation}
z(t) = - P\int_0^te^{-(t-s)A^2}\big(w(s)\cdot\nabla u(s) + u(s)\cdot\nabla w(s) + w(s)\cdot\nabla w(s)\big)ds
\end{equation}
where $w = v - u$. We use the norm $|X| = \sup_{t > 0}\|t^{3/8}A^{5/4}X(t)\|$ to obtain
\begin{equation}
\|z(t)\| \leq C (2|w|\cdot|u| + |w|^2)t^{1/4}.
\end{equation}
Using the fact that $u \in \vG_{1/2}$ and Proposition \ref{prop:kato
  big data} we conclude that $|u| < \infty$.  From
(\ref{eqn:radiusstability2}) we have $|w| \leq K\delta$ (cf. Remark
\ref{rmks:globa} \ref{item:21p}).  We thus obtain
\begin{equation} \label{eqn:zeesmalltime}
\|z(t)\| \leq C \delta t^{1/4}, t \geq 0.
\end{equation}
\end{comment}
Note that from (\ref{eqn:radiusstability32}) we have
\begin{equation} \label{eqn:zeesmalltime}
\|z(t)\| \leq C \delta \inp{t}^{1/4}, t \geq 0.
\end{equation}

For large time we use Schonbek's technique \cite{Sc1}, \cite{Sc2}.
We differentiate $\|z(t)\|^2$ and find (after an integration by parts)
\begin{equation}\label{eqn:zeesq}
\frac{1}{2}\frac{d}{dt}\|z(t)\|^2 = -\|Az\|^2 -\langle z,z\cdot\nabla (u + w_0)\rangle - \langle z,u\cdot\nabla w_0 + w_0\cdot\nabla u \rangle - \langle z,w_0\cdot\nabla w_0 \rangle.
\end{equation}

 We will need the estimate \eqref{eqn:uL2bound}.   Without loss of
 generality we can assume $\sigma \leq 5/2$. In the following we
 shall also assume  that $\delta_0 \leq 1$ (in addition to the
 requirement of Remark
\ref{rmks:globa} \ref{item:21p}); this is just to minimize notation.

We have (with some abbreviation)
\begin{equation}\label{eqn:zeehat}
\hat{z}(t,\xi) = -P(\xi)\int_0^t \e^{-(t-s)|\xi|^2}i\xi[\widehat{u\otimes w}(s,\xi)+ \widehat{w\otimes u}(s,\xi) + \widehat{w\otimes w}(s,\xi)]ds
\end{equation}
where $P(\xi)$ is the Leray projection in Fourier space. As a first step we get an initial bound for $\|z(t)\|$ of the form $\|z(t)\| \leq C \delta^{\epsilon}$ for all $t>0$. Define $k$ so that sup$_{t\geq 0}\|z(t)\| = k\delta^{\epsilon}$. In the last term of (\ref{eqn:zeehat}) we estimate
$$|\widehat{w\otimes w}(s,\xi)| \leq \|w(s)\|^2 \leq (\|u(s)\| + \|v(s)\|)\cdot \|w(s)\| \leq (2\|u(0)\| + \delta)\|w(s)\|.$$
We note that by definition of $k$
$$\|w(s)\| \leq \|w(0)\| + k\delta^{\epsilon} \leq \delta + k\delta^{\epsilon} \leq (1 + k)\delta^{\epsilon}.$$
From (\ref{eqn:zeehat}) it thus follows that
\begin{equation} \label{eqn:zhatbound1}
|\hat{z}(t,\xi)| \leq C|\xi|^{-1}(1+k)\delta^{\epsilon}.
\end{equation}
Using (\ref{eqn:zeesq}) we obtain for $t\geq 1$
\begin{align}\label{eqn:zdiffineq}
\frac{1}{2}\frac{d}{dt}\|z(t)\|^2 & \leq -\int_{|\xi|^2t \geq a}|\xi|^2|\hat{z}(t,\xi)|^2 \,d\xi + \|z(t)\|^2(\|\nabla w_0(t)\|_{\infty} + \|\nabla u(t)\|_{\infty})  \notag \\
& + \|z(t)\|\big(\|u(t)\|\cdot \|\nabla w_0(t)\|_{\infty} + \|w_0(t)\|\cdot \|\nabla u(t)\|_{\infty} + \|w_0(t)\|\cdot \|\nabla w_0(t)\|_{\infty}\big) \notag \\
& \leq -(at^{-1}- Ct^{-5/4 - \sigma/2})\|z(t)\|^2 + \|z(t)\|C\delta^{\epsilon}t^{-5/4 - (1-\epsilon/2)\sigma} + at^{-1}\int_{|\xi|^2t \leq a}|\hat{z}(t,\xi)|^2 \,d\xi \notag \\
& \leq -(a/2t)\|z(t)\|^2 + C\delta^{2\epsilon}t^{-3/2 -2(1-\epsilon/2)\sigma} + at^{-1}\int_{|\xi|^2t \leq a}|\hat{z}(t,\xi)|^2 \,d\xi.
\end{align}
Here we have used \eqref{eqn:uL2bound}, (\ref{eqn:gradest})--(\ref{eqn:L2interp}), and the "squaring inequality" for the term linear in $\|z(t)\|$.  We have also taken $a$ large. Inserting (\ref{eqn:zhatbound1}) and integrating we find for $a$ large enough and $t \geq 1$
$$\|z(t)\| \leq  t^{-a/2}k\delta^{\epsilon} + Ct^{-1/4}(1+k)\delta^{\epsilon}.$$
We take $t_0  > 1$ large enough so that for $t \geq t_0$
$$\|z(t)\| \leq k\delta^{\epsilon}/3 + (1+k)\delta^{\epsilon}/3.$$
If the supremum of $\|z(t)\|$ does not occur for $t \leq t_0$ then
$$k\delta^{\epsilon} \leq k\delta^{\epsilon}/3 + (1+ k)\delta^{\epsilon}/3$$
so $k \leq 1$.  Otherwise $k\delta^{\epsilon} \leq C \delta
\inp{t_0}^{1/4}$ by (\ref{eqn:zeesmalltime}) so that  $ k \leq C \inp{t_0}^{1/4}$.  Thus $\|z(t)\| \leq C\delta^{\epsilon}$.
We now make the inductive assumption that $\|z(t)\| \leq C\delta^{\epsilon}\langle t \rangle^{-\mu/2}$.  From (\ref{eqn:zeehat}) we obtain
\begin{align}
|\hat{z}(t,\xi)| & \leq |\xi|\int_0^t(2\|u(s)\|\cdot\|w(s)\| + \|w(s)\|^2) \ ds \notag \\
& \leq C|\xi|\int_0^t\big(\|u(s)\|\cdot(\|w_0(s)\|+ \|z(s)\|) + \|w_0(s)\|^2 + \|z(s)\|^2\big) \ ds \notag \\
& \leq C|\xi|\int_0^t\delta^{\epsilon}\big(\langle s \rangle^{-(1-\epsilon/2  )\sigma} +\langle s \rangle^{-(\sigma + \mu  )/2} + \langle s \rangle^{-\mu }\big) \ ds \notag \\
&\leq C|\xi|\int_0^t\delta^{\epsilon}\big(\langle s \rangle^{-(1-\epsilon/2  )\sigma} + \langle s \rangle^{-\mu}\big) \ ds.
\end{align}
In the third inequality above we have used (\ref{eqn:L2interp}). Assuming that $|\xi|^2 t \leq a$ and integrating we find that if $\mu \neq 1$ and $\mu \leq (1-\epsilon/2)\sigma$
then $|\hat{z}(t,\xi)| \leq C\delta^{\epsilon}|\xi|^{\rm min(2\mu - 1,
  1)}$ whereas if $\mu \geq (1-\epsilon/2)\sigma$ but
$(1-\epsilon/2)\sigma <1$ then $|\hat{z}(t,\xi)| \leq
C\delta^{\epsilon}|\xi|^{(2(1-\epsilon/2)\sigma - 1)}$.  If $\mu \geq
(1-\epsilon/2)\sigma>1$  then $|\hat{z}(t,\xi)| \leq
C\delta^{\epsilon}|\xi|$. Substituting
in the differential inequality (\ref{eqn:zdiffineq}) and integrating
we obtain (noting (\ref{eqn:zeesmalltime})) that in these three
circumstances $\|z(t)\| \leq C \langle t \rangle^{-\mu'/2}$ where
$\mu' = \rm min(2\mu +1/2, 5/2)$,  $\mu' = 1/2 +
2(1-\epsilon/2)\sigma$ or $\mu' = 5/2$, respectively. Tracing through the iterations starting from $\mu = 0$ we find that at most three iterations gives
 the stated result.
\end{proof}

\medskip

\noindent{\bf Acknowledgement}
I.H. would like to thank the Mathematics Institute of Aarhus University and the Schr\"{o}dinger Institute in Vienna for their support and hospitality while some of this research was completed.

%\begin{obs}\label{obs:global-stability}
%In the setting of Corollary \ref{cor:global-stability}, $M=I$ and $P$,
%one might
%suspect that $\vI_r=PH^r$ for all $r\geq 1/2$. Clearly this statement would be
%a consequence of closedness of the subset $\vI_r\subseteq PH^r$ for
%some $r\geq 1/2$.
 % \end{obs}

\end{document}